%

\documentclass{amsart}

\usepackage{amsfonts}
\usepackage{mathrsfs}
\usepackage{amsmath}
\usepackage{amsthm}
\usepackage{amssymb,bm}
\usepackage{latexsym,hyperref}
\usepackage{enumitem,nccmath}
\usepackage{etoolbox}
\usepackage{xcolor}


\usepackage{hyperref}
\hypersetup{%
    pdfborder = {0 0 0},
    colorlinks,
    citecolor=red!45!black,
    filecolor=Darkgreen,
    linkcolor=blue!70!black!90,
    urlcolor=cyan!50!black!90
}

\hypersetup{hypertexnames=false}

\setlength{\oddsidemargin}{0in}
\setlength{\evensidemargin}{0in}
\setlength{\textwidth}{6.5in}
\setlength{\topmargin}{-0.5in}
\setlength{\textheight}{9in}


\newtheorem{thrm}{Theorem}[section]
\newtheorem{prop}[thrm]{Proposition}
\newtheorem{crl}[thrm]{Corollary}
\newtheorem{lemma}[thrm]{Lemma}

\newtheorem{defn}[thrm]{Definition}
\newtheorem{exam}[thrm]{Example}

\theoremstyle{remark}
\newtheorem{rmk}[thrm]{Remark}

\newcommand{\nc}{\newcommand}
\newcommand{\rnc}{\renewcommand}

\DeclareMathOperator{\tr}{tr}

\nc{\Res}[1]{\underset{\;{#1}\;}{\rm Res}}

\nc{\End}{\mathrm{End}}
\nc{\Sym}{\mathrm{Sym}}

\nc\bb{\mathbb}
\nc\mf{\mathfrak}
\nc\ms{\mathsf}
\nc\mc{\mathcal}
\nc\mr{\mathscr}
\nc\mt[1]{{\tt #1}}

\nc{\mfg}{\mf{g}}
\nc{\mfh}{\mf{h}}
\nc{\mfsl}{\mf{sl}}
\nc{\mfgl}{\mf{gl}}
\nc{\mfso}{\mf{so}}
\nc{\mfsp}{\mf{sp}}

\nc{\sca}{\mathscr{a}}
\nc{\scb}{\mathscr{b}}
\nc{\scc}{\mathscr{c}}
\nc{\scd}{\mathscr{d}}

\nc{\nn}{\nonumber}   
\nc{\el}{\nonumber\\} 


\nc{\equ}[1]{\begin{equation}#1\end{equation}}
\nc{\eqa}[1]{\begin{equation}\begin{alignedat}{50}#1\end{alignedat}\end{equation}}
\nc{\eqn}[1]{\begin{equation*}\begin{alignedat}{50}#1\end{alignedat}\end{equation*}}
\nc{\eqg}[1]{\begin{equation}\begin{gathered}#1\end{gathered}\end{equation}}

\nc{\ali}[1]{\begin{alignat}{50}#1\end{alignat}}
\nc{\als}[1]{\begin{subequations}\begin{alignat}{50}#1\end{alignat}\end{subequations}}
\nc{\aln}[1]{\begin{alignat*}{50}#1\end{alignat*}}

\nc{\gat}[1]{\begin{gather}#1\end{gather}}
\nc{\gas}[1]{\begin{subequations}\begin{gather}#1\end{gather}\end{subequations}}
\nc{\gan}[1]{\begin{gather*}#1\end{gather*}}


\nc{\mcA}{\mc{A}}
\nc{\mcB}{\mc{B}}
\nc{\mcU}{\mc{U}}
\nc{\mfU}{\mf{U}}
\nc{\mcP}{\mc{P}}
\nc{\mcQ}{\mc{Q}}
\nc{\mcX}{\mc{X}}
\nc{\mcZ}{\mc{Z}}
\nc{\mcT}{\mc{T}}
\nc{\mcG}{\mc{G}}
\nc{\msS}{\ms{S}}
\nc{\mss}{\ms{s}}
\nc{\msc}{\ms{c}}
\nc{\msd}{\ms{d}}
\nc{\msv}{\ms{v}}
\nc{\msq}{\ms{q}}
\nc{\msw}{\ms{w}}
\nc{\mcS}{\mc{S}}
\nc{\mcI}{\mc{I}}

\nc{\ol}{\overline}

\nc{\C}{\mathbb{C}}
\nc{\N}{\mathbb{N}}
\nc{\Z}{\mathbb{Z}}

\nc{\RR}{I\hspace{-1.3mm}R}

\nc{\ot}{\otimes}
\nc{\op}{\oplus}

\nc{\lan}{\langle}
\nc{\ran}{\rangle}

\nc{\qu}{\quad}
\nc{\qq}{\qquad}

\nc\Tr{{\rm tr}}


\nc{\al}{\alpha}
\nc{\del}{\delta}
\nc{\eps}{\epsilon}
\nc{\veps}{\varepsilon}
\nc{\ga}{\gamma}
\nc{\Ga}{\Gamma}
\nc{\ka}{\kappa}
\nc{\la}{\lambda}
\nc{\om}{\omega}
\nc{\Om}{\Omega}
\nc{\si}{\sigma}
\nc{\Si}{\Sigma}
\nc{\bsi}{\boldsymbol\sigma}
\nc{\bSi}{\boldsymbol\Sigma}
\nc{\Ups}{\upsilon}
\nc{\vphi}{\varphi}

\nc{\btau}{\boldsymbol\tau}
\nc{\bdel}{\boldsymbol\delta}

\nc{\id}{\mathrm{id}}
\nc{\gr}{\mathrm{gr}}

\nc{\lrh}{\leftrightharpoons}
\nc{\iso}{\stackrel{\sim}{\longrightarrow}}
\nc{\liso}{\stackrel{\sim}{\longleftarrow}}
\nc{\wh}{\widehat}
\nc{\wt}{\widetilde}
\nc{\tl}{\tilde}
\nc{\lra}{\longrightarrow}
\nc{\ra}{\rightarrow}
\nc{\into}{\hookrightarrow}
\nc{\onto}{\twoheadrightarrow}

\usepackage[scr=boondoxo]{mathalfa}
\nc{\key}{{\mathscr{k}}}
\nc{\ley}{{\mathscr{l}}}

\numberwithin{equation}{section}


\nc{\even}[2]{\left\{ #2 \right\}^{#1}}

\nc{\bethe}[2]{\beta_{\at_1 a_1 \dots \at_{#1} a_{#1}}(#2)}


\renewcommand{\,}{\kern 0.1em} 


\begin{document}

\title[Bethe ansatz for twisted Yangian]{Nested algebraic Bethe ansatz for open spin chains \\ with even twisted Yangian symmetry}

\author{Allan Gerrard}
\email{ajg569@york.ac.uk}
\address{University of York, Department of Mathematics, York, YO10 5DD, UK.}

\author{Niall MacKay}
\email{niall.mackay@york.ac.uk}
\address{University of York, Department of Mathematics, York, YO10 5DD, UK.}

\author{Vidas Regelskis}
\email{vidas.regelskis@gmail.com}
\address{University of York, Department of Mathematics, York, YO10 5DD, UK, and Institute of Theoretical Physics and Astronomy, Vilnius University, Saul\.etekio av.~3, Vilnius 10257, Lithuania}

\subjclass{Primary 82B23; Secondary 17B37.}

\keywords{Bethe Ansatz, Twisted Yangian}

\begin{abstract}  
We present a nested algebraic Bethe ansatz for a one dimensional open spin chain whose boundary quantum spaces are irreducible $\mfso_{2n}$- or $\mfsp_{2n}$-representations and the monodromy matrix satisfies the defining relations of the Olshanskii twisted Yangian $Y^\pm(\mfgl_{2n})$.
We use a generalization of the Bethe ansatz introduced by De Vega and Karowski which allows us to relate the spectral problem of a $\mfso_{2n}$- or $\mfsp_{2n}$-symmetric open spin chain to that of a $\mfgl_{n}$-symmetric periodic spin chain. We explicitly derive the structure of the Bethe vectors and the nested Bethe equations. 
\end{abstract}

\maketitle

%

\section{Introduction}

The algebraic Bethe ansatz (ABA) introduced in \cite{STF} has proven to be a powerful method to study quantum integrable models. It provides an effective approach for determining the spectrum of quantum Hamiltonians by reducing the problem of diagonalizing the Hamiltonian to a set of algebraic equations, known as the Bethe ansatz equations (BAE), that in many cases can be solved using numerical methods \cite{BeRa1,BeRa2,FKPR,KuRs,PRS1,Sk}. The key idea of ABA is to construct the so-called Bethe vectors that depend on sets of complex parameters. In the case when these parameters satisfy BAE, the corresponding Bethe vectors become eigenvectors of the quantum Hamiltonian \cite{PRS1,PRS2}. Finding eigenvectors and their eigenvalues provides the necessary first step that needs to be taken in the study of scalar products and norms \cite{HLPRS1,HLPRS2,Ko}, correlation functions and form factors \cite{IzKo,KKMST1,KKMST2,KMST,KMT,Sl1}; see also the reviews \cite{PRS3,Sl2} and references therein. 

In this paper we construct a nested algebraic Bethe ansatz for a one-dimensional quantum spin chain with open boundaries, whose underlying symmetry is the even Olshanskii twisted Yangian $Y^\pm(\mfgl_{2n})$ \cite{Ol,MNO}. Such integrable models have ``soliton non-preserving'' boundary conditions and can be described by a Hamiltonian of an alternating type; they have been intensively studied using techniques of the analytic Bethe ansatz in \cite{AACDFR,ACDFR1,ACDFR2,ADK,Do}. The full quantum space of the model is a tensor product of a finite number of arbitrary irreducible finite-dimensional ``bulk'' $\mfgl_{2n}$-representations and an arbitrary irreducible finite-dimensional ``boundary'' $\mfso_{2n}$- or $\mfsp_{2n}$-representation. This space is then equipped with the structure of a lowest weight $Y^\pm(\mfgl_{2n})$-module, so that the double-row monodromy matrix satisfies the defining relations of the twisted Yangian $Y^\pm(\mfgl_{2n})$. We will call this model a $Y^\pm(\mfgl_{2n})$-system.

Studying the spectral problem of spin chains with orthogonal and symplectic symmetries requires elaborate algebraic methods: the usual nesting approach for a $\mfgl_n$-symmetric spin chain, put forward in \cite{KuRs}, fails since there are no natural Yangian analogues of the chains of subalgebras $\mfso_{2n} \supset \mfso_{2n-2} \supset \ldots \supset \mfso_2$ and $\mfsp_{2n} \supset \mfsp_{2n-2} \supset \ldots \supset \mfsp_2$. (This problem for the Yangians $Y(\mfso_{N})$ and $Y(\mfsp_{N})$ was addressed in \cite{JLM}.) It was shown in \cite{DVK,Rs} that the spectral problem of such a system can be addressed using the algebraic Bethe ansatz if the $R$-matrix intertwining the monodromy matrices of the model can be written in a six-vertex block-form. 
This approach has recently been used in \cite{GoPa} to study orthogonal quantum spin chains with open boundary conditions, whose monodromy matrix satisfies the defining relations of the twisted Yangian of type~$\mfso_{2n}$ \cite{GuRe}. Prior, in \cite{LSY} a different method was used to study the spectral problem of a symplectic quantum spin chain with open boundary conditions. 
Their approach follows ideas from the ABA for the Izergin-Korepin model, introduced by Tarasov~\cite{Ta} for periodic boundary conditions and further studied by Fan~\cite{Fa} for open boundary conditions. 

Our strategy for solving the spectral problem of the $Y^\pm(\mfgl_{2n})$-system is as follows. We interpret the generating matrix of $Y^\pm(\mfgl_{2n})$ as the monodromy matrix of the model. Inspired by the ideas put forward in \cite{DVK,Rs}, we write the defining relations of $Y^\pm(\mfgl_{2n})$ in a block-form, i.e., in terms of the matrix operators $A$, $B$, $C$ and $D$, that are matrix analogous of the conventional creation, annihilation and diagonal operators of the six-vertex model. Indeed, the exchange relations between these matrix operators turn out to be reminiscent of those of the six-vertex model. 
We then introduce creation operators that are constructed using matrix entries of the $B$ operator. 
We define the vacuum sector of the quantum space as the subspace annihilated by the $C$ operator. 
The Bethe vector is constructed by acting with creation operators on vectors in the vacuum sector. Then, following Sklyanin's pioneering work \cite{Sk}, we require the Bethe vector to be an eigenvector of the double-row transfer matrix. Surprisingly, this procedure reduces the initial $Y^\pm(\mfgl_{2n})$-system to a residual $Y(\mfgl_n)$-system, the nested algebraic Bethe ansatz for which is well-known \cite{KuRs}.
Our main results are construction of the Bethe vectors and derivation of their eigenvalues and nested Bethe equations, see Theorems \ref{T:eig} and \ref{T:BE}, and Proposition \ref{P:tf} stating a trace formula for the Bethe vectors. The resulting Bethe equations are comparable to those obtained in \cite{ACDFR1} for an open spin chain with a one-dimensional boundary representation see~Remark \ref{R:NBE}.

The plan of the paper is as follows. In Section 2 we provide the necessary preliminaries and definitions. We recall the definition of the Yangian $Y(\mfgl_{2n})$, the twisted Yangian $Y^\pm(\mfgl_{2n})$ and relevant details of their representation theory. We then obtain a six-vertex block-form of the Yang $R$-matrix and its twisted counterpart, which allows us to write the defining relations of both Yangian and twisted Yangian in terms of the matrix operators $A$, $B$, $C$ and $D$. 
In Section~3 we provide the technical details of the main ingredients necessary for the nested Bethe ansatz. We introduce the creation operator of multi-excitations and describe its algebraic properties. We derive the exchange relations for the operators that lead to the so-called wanted and unwanted terms. 
In Section~4 we present the nested Bethe ansatz, in two steps. First, we demonstrate the method for a single top-level excitation. Then we generalize the method to multi-excitations at the top-level and provide the complete set of Bethe equations and a trace formula for the Bethe vectors.
In Appendix~A we provide in detail the nested algebraic Bethe ansatz for $Y(\mfgl_n)$, first presented in \cite{KuRs}, to which the Bethe ansatz for $Y^\pm(\mfgl_{2n})$ reduces.


\section{Definitions and preliminaries}


\subsection{Notation} \label{sec:notation}

Choose $N \in\N$. Let $\mfgl_N$ denote the general linear Lie algebra and let $E_{ij}$ with $1\le i,j \le N$ be the standard basis elements of $\mfgl_N$ satisfying
\[
[E_{ij}, E_{kl}] = \del_{jk} E_{il} - \del_{il} E_{kj}.
\]
The orthogonal Lie algebra $\mfso_N$ or the symplectic Lie algebra $\mfsp_N$ can be regarded as a subalgebra of $\mfgl_N$ as follows. For any $1 \le i,j\le N$ set $\theta_{ij}=\theta_i\theta_j$ with $\theta_i=1$ in the orthogonal case and $\theta_i=\del_{i>N/2}-\del_{i\le N/2}$ in the symplectic case.
Introduce elements $F_{ij}= E_{ij} - \theta_{ij} E_{\bar\jmath\,\bar\imath}$ with $\bar\imath = N-i+1$ and $\bar\jmath = N-j+1$. These elements satisfy the relations 
\gat{ 
\label{[F,F]}
[F_{ij},F_{kl}] = \del_{jk} F_{il} - \del_{il} F_{kj} + \theta_{ij} (\del_{j \bar l}F_{k \bar\imath} - \del_{i \bar k} F_{\bar\jmath\, l}) ,
\\
\label{F+F=0}
F_{ij} + \theta_{ij} F_{\bar\jmath\,\bar\imath}=0,
}
which in fact are the defining relations of the Lie algebra $\mfso_N$ or $\mfsp_N$. Namely, we may identify $\mfso_N$ or $\mfsp_N$ with $\mathrm{span}_{\C} \{ F_{ij} : 1\le i,j\le N \}$ and we will use $\mfh_N=\mathrm{span}_{\C} \{ F_{ii} : 1 \le i \le \lfloor N/2 \rfloor \}$ as a Cartan subalgebra. In this work we will focus on the chains of Lie algebras $\mfgl_{2n} \supset \mfso_{2n} \supset \mfgl_{n}$ and $\mfgl_{2n} \supset \mfso_{2n} \supset \mfgl_{n}$, hence we will assume that $N=2n$ or $N=n$. Given a Lie algebra $\mfg$ its universal enveloping algebra will be denoted by $U(\mfg)$.

Next, we need to introduce some operators acting on $\C^N\ot \C^N$, where the tensor product $\ot$ is defined over the field of complex numbers, that is $\ot = \ot_\C$. Let $e_{ij}\in\End(\C^N)$ be the standard matrix units with entries in $\C$, and let $e_i$ be the standard basis vectors of $\C^N$. Then $P$ will denote the permutation operator on $\C^N \ot \C^N$ and we set $Q=P^{t_1}=P^{t_2}$, where the transpose $t$ is defined by $(e_{ij})^t = \theta_{ij} e_{\bar\jmath\,\bar\imath}$; explicitly,
\[
P = \sum_{1\le i,j \le N} e_{ij} \ot e_{ji}, \qq 
Q = \sum_{1\le i,j \le N} \theta_{ij}\, e_{ij} \ot e_{\bar\imath\,\bar\jmath} .
\]

Let $I$ denote the identity matrix on $\C^N\ot \C^N$ or $\C^N$ (it will always be clear from the context which $I$ is used). Then $P^2=I$, $PQ=QP=\pm Q$, $Q^2=N Q$, which will be useful below. Here (and henceforth in this paper) the upper sign in $\pm$ and $\mp$ corresponds to the orthogonal case and the lower sign to the symplectic case.  
Also note that $P (e_{ij} \ot I) = (I \ot e_{ij}) P$. Taking the transpose of this, we obtain a pair of relations for $Q$ and any $M\in\End(\C^N)$:
\equ{ \label{Q_rel}
Q\, (M \ot I) =  Q\, (I \ot M^t) , \qq  (M \ot I)\, Q  = (I \ot M^t)\, Q .
}
For a matrix $X$ with entries $x_{ij}$ in an associative algebra $A$ we write
\[
X_s = \sum_{1\le i,j\le N} \underbrace{ I \ot \cdots \ot I}_{s-1} \ot\;e_{ij} \ot I \ot \cdots \ot I \ot x_{ij} \in \End(\C^N)^{\ot k} \ot A.
\]
Here $k \geq 2$ and $1\le s\le k$; it will always be clear from the context what $k$ is.
Products of matrix operators will be ordered using the following rules:
\equ{ \label{orderprod} 
\prod_{i=1}^s X_i = X_1 \, X_2 \cdots X_s  \qq \text{and} \qq \prod_{i=s}^1 X_i = X_s \, X_{s-1} \cdots X_1.  
}


\subsection{The Yangian \texorpdfstring{$Y(\mfgl_{2n})$}{} and twisted Yangian \texorpdfstring{$Y^\pm(\mfgl_{2n})$}{} } \label{sec:Y}

We briefly recall necessary details of the Yangian $Y(\mfgl_{2n})$, the twisted Yangian $Y^\pm(\mfgl_{2n})$ and their representation theory, adhering closely to \cite{Mo3}. 
Introduce a rational function acting on $\C^{2n}\ot \C^{2n}$
\equ{
R(u) = I - u^{-1} P \label{R(u)}
}
called the \emph{Yang's $R$-matrix}.
It satisfies $R(u)\,R(-u) = (1-u^{-2})\,I$ and is a solution of the quantum Yang-Baxter equation,
\equ{
R_{12}(u-v)\,R_{13}(u-z)\,R_{23}(v-z) = R_{23}(v-z)\,R_{13}(u-z)\,R_{12}(u-v).
}

We introduce elements $t_{ij}^{(r)}$ with $1 \le i,j \le 2n$ and $r\ge 0$ such that $t^{(0)}_{ij}= \del_{ij}$. Combining these into formal power series $t_{ij}(u) = \sum_{r\ge 0} t_{ij}^{(r)} u^{-r}$, we can then form the generating matrix $T(u)= \sum_{1\le i,j\le 2n} e_{ij} \ot t_{ij}(u)$.

\begin{defn}
The Yangian $Y(\mfgl_{2n})$ is the unital associative $\C$-algebra generated by elements $t_{ij}^{(r)}$ with $1 \le i,j \le 2n$ and $r\in\Z_{\ge 0}$ satisfying the relations
\gat{ 
R_{12}(u-v)\,T_1(u)\,T_2(v) = T_2(v)\,T_1(u)\,R_{12}(u-v) . \label{Y:RTT} 
}
The Hopf algebra structure of $Y(\mfgl_{2n})$ is given by 
\equ{ \label{Hopf:Y}
\Delta: T(u)\mapsto T(u)\ot T(u), \qq S: T(u)\mapsto T^{-1}(u),\qquad \veps: T(u)\mapsto I. 
}
\end{defn}

For the explicit form of \eqref{Y:RTT} we refer the reader to Section 1.1 in \cite{Mo3}.
We now recall the definition of the lowest weight representation of $Y(\mfgl_{2n})$. It is a historic tradition (but merely a convenion) to consider lowest weight representations in the algebraic Bethe ansatz instead of the highest ones.

\begin{defn}
A representation $V$ of $Y(\mfgl_{2n})$ is called a lowest weight representation if there exists a nonzero vector $\eta\in V$ such that $V=Y(\mfgl_{2n})\,\eta$ and
\aln{
t_{ij}(u)\,\eta &= 0 &&\qu\text{for}\qu 1 \le j<i \le 2n 
\qu\text{and}\\
t_{ii}(u)\,\eta &= \la_i(u)\,\eta &&\qu\text{for}\qu 1 \leq i \leq 2n,
}
where $\la_i(u)$ is a formal power series in $u^{-1}$ with a constant term equal to $1$. The vector $\eta$ is called the lowest vector of $V$, and the $2n$-tuple $\la(u)=(\la_1(u),\ldots,\la_{2n}(u))$ is called the lowest weight of $V$. 
\end{defn}

The Yangian $Y(\mfgl_{2n})$ contains the universal enveloping algebra $U(\mfgl_{2n})$ as a Hopf subalgebra. An embedding  $U(\mfgl_{2n})\into Y(\mfgl_{2n})$ is given by $E_{ij} \mapsto -t_{ji}^{(1)}$ for all $1\le i,j\le 2n$. We will identify $U(\mfgl_{2n})$ with its image in $Y(\mfgl_{2n})$ under this embedding. Conversely, the map $t_{ij}^{(1)} \mapsto -E_{ji}$ and $t_{ij}^{(s)} \mapsto 0$ for all $s\ge2$ defines a surjective homomorphism $ev : Y(\mfgl_{2n}) \to U(\mfgl_{2n})$ called the \emph{evaluation homomorphism}. By composing the map $ev$ with the algebra automorphism called the \emph{shift automorphism}, 
\[
\si_c \;:\; Y(\mfgl_{2n}) \to Y(\mfgl_{2n}) , \qu T(u) \mapsto T(u-c)
\]
for any $c\in\C$, we obtain the map
\equ{
ev_c \;:\; t_{ij}(u) \mapsto \del_{ij} - E_{ji} (u-c)^{-1}  .  \label{ev-hom}
}

Given an $2n$-tuple $\la=(\la_1,\ldots,\la_{2n})\in\C^{2n}$ we will denote by $L(\la)$ the irreducible representation of the Lie algebra $\mfgl_{2n}$ with the highest weight $\la$. In particular, $L(\la)$ is a cyclic $U(\mfgl_{2n})$-module generated by a nonzero vector $1_{\la}$ such that 
\aln{
E_{ij}\,1_{\la} &= 0 \qu&&\text{for}\qu 1 \le i<j \le 2n 
\qu\text{and}\\
E_{ii}\,1_{\la} &=\la_i\,1_{\la} \qu&&\text{for}\qu 1 \leq i \leq 2n.
}
The representation $L(\la)$ is finite-dimensional if and only if $\la_i-\la_{i+1} \in \Z_{\ge0}$ for all $1\le i\le 2n-1$.
By the virtue of the map $ev$, any $\mfgl_{2n}$-representation can be regarded as $Y(\mfgl_{2n})$-module. Moreover, any irreducible $\mfgl_{2n}$-representation remains irreducible over $Y(\mfgl_{2n})$, by surjectivity of $ev$. We will denote by $L(\la)_c$ the $Y(\mfgl_{2n})$-module obtained from the irreducible representation $L(\la)$ of $\mfgl_{2n}$ via the map \eqref{ev-hom}. Clearly, $L(\la)_c$ is a lowest weight $Y(\mfgl_{2n})$-module with the components of the lowest weight given by
\[
\la_i(u) = 1 - \la_i(u-c)^{-1} \qu\text{for}\qu 1\le i\le 2n .
\]

Fix $\ell\in\N$ and consider the tensor product of the $Y(\mfgl_{2n})$ evaluation modules 
\equ{
L := L(\la^{(1)})_{c_1} \ot L(\la^{(2)})_{c_2} \ot \ldots \ot L(\la^{(\ell)})_{c_\ell} , \label{L}
}
where $c_i\in\C$ are arbitrary complex numbers and each $\la^{(i)}$ is a partition of length not exceeding $\ell$. Set $\Delta^{(1)} := id$ and define recursively $\Delta^{(\ell)} = (id\ot\cdots\ot id \ot \Delta)\circ \Delta^{(\ell-1)}$ with $\Delta^{(2)} := \Delta$. The comultiplication $\Delta^{(\ell)}$ allows us to equip $L$ with the structure of a lowest weight $Y(\mfgl_{2n})$-module using the rule 
\[
t_{ij}(u)\cdot L = (ev_{c_1} \ot \cdots \ot ev_{c_l}) \circ \Delta^{(\ell)}(t_{ij}(u))\, L. 
\]
In particular, the generating matrix $T(u)$ acts on the space $L$ by
\equ{
T_a(u)\cdot L = \Bigg( \prod_{i=1}^\ell \mc{L}_{ai}(u-c_i) \Bigg) L \in \End(\C^{2n}) \ot L \,[[u^{-1}]], \label{T(u).L}
}
where 
\equ{
\mc{L}(u-c):=(id\ot ev_c)(T(u)) = \sum_{i,j=1}^{2n} e_{ij} \ot (\del_{ij} - E_{ji}(u-c)^{-1}) \label{lax}
}
are the Lax operators. The components of the lowest weight of $L$ are
\equ{
\la_i(u) = \prod_{j=1}^{\ell} \big(1-\la^{(j)}_i(u-c_j)^{-1}\big) \qu\text{for}\qu 1\le i \le 2n. \label{L:la(u)}
}
The \emph{binary property} of the tensor products of Yangian modules states that, for a suitable choice of weights $\la^{(j)}_i$ and parameters $c_j$, the $Y(\mfgl_{2n})$-module $L$ is irreducible, see Theorem~6.5.8 in \cite{Mo3}.


We now focus on the twisted Yangian $Y^\pm(\mfgl_{2n})$ and its representation theory. Following \cite{ACDFR1} we introduce an additional ``shift'' parameter $\rho\in\C$ in the definition of $Y^\pm(\mfgl_{2n})$.

\begin{defn} 
The twisted Yangian $Y^\pm(\mfgl_{2n})$ is the subalgebra of $Y(\mfgl_{2n})$ generated by the coefficients of the entries of the matrix 
\equ{
S(u) = T(u)\,T^t(-u-\rho) . \label{S=TT}
}
\end{defn}

The ``$\rho$-shifted'' twisted Yangian defined above is isomorphic to the usual one studied in \cite{Mo3}. The isomorphism is provided by the mapping $S(u) \mapsto S(u+\rho/2)$.
The matrix $S(u)$ defined in \eqref{S=TT} satisfies the reflection equation 
\equ{
R_{12}(u-v)\,S_1(u)\,R_{12}^t(-u-v-\rho)\,S_2(v) = S_2(v)\,R_{12}^t(-u-v-\rho)\,S_1(u)\,R_{12}(u-v) \label{RE}
}
and the symmetry relation
\equ{
S^t(-u-\rho) = S(u) \pm \frac{S(u)-S(-u-\rho)}{2u+\rho} \,. \label{symm}
}
The above two relations are in fact the defining relations of $Y^\pm(\mfgl_{2n})$. Their form in terms of matrix elements $s_{ij}(u)$ of $S(u)$, for $\rho=0$, can be found in (2.4) and (2.5) of \cite{Ol} (note that indices $i,j,k,l$ are indexed by $-n,-n+1,\ldots,n-1,n$ in \emph{loc.~cit.}); also see Section 4.1 in \cite{Mo3}.

By Proposition 5.3 in \cite{AACDFR}, any invertible matrix solution of the reflection equation \eqref{RE} (with $S(u)$ replaced by an element of $\End(\C^{2n})(u)$) is a constant, up to multiplication by a scalar function, matrix $K\in\End(\C^{2n})$ such that $K^t = \veps K$ with $\veps = +1$ or $-1$. Consider the subalgebra $Y^\pm_K(\mfgl_{2n})\subset Y(\mfgl_{2n})$ generated by coefficients of the entries of the matrix
\equ{
S^K(u) = T(u)\,K\,T^t(-u-\rho) . \label{SK=TKT}
}
Without loss of generality we can assume that $K = A\,I^\veps A^t$, where $I^\veps = \sum_{i=1}^n (e_{ii} + \veps\,e_{\bar\imath\bar\imath})$ and $A \in \End(\C^{2n})$ is an invertible matrix.

\begin{prop} \label{P:TY-iso}
The mapping
\equ{
\psi_K : S(u) \mapsto A S^{K}(u) A^t I^\veps,
}
where matrices $I^\veps$ and $A$ are as described above, defines an isomorphism of algebras $Y^\pm(\mfgl_{2n}) \to Y^\pm_K(\mfgl_{2n})$ if $\veps=+1$ and $Y^\pm(\mfgl_{2n}) \to Y^\mp_K(\mfgl_{2n})$ if $\veps=-1$.
\end{prop}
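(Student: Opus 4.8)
The plan is to show that $\psi_K$ sends the generating matrix $S(u)$ to a matrix of the \emph{standard} form $\wt T(u)\,\wt T^{\,t'}(-u-\rho)$ of \eqref{S=TT}, for a new solution $\wt T$ of \eqref{Y:RTT} and a transpose $t'$ whose orthogonal/symplectic type is controlled by $\veps$; once this is done, the reflection equation \eqref{RE} and the symmetry relation \eqref{symm} hold automatically, since they are consequences of \eqref{Y:RTT} for any such pair $(\wt T,t')$. Substituting $K=A\,I^\veps A^t$ into \eqref{SK=TKT} and regrouping the constant factors, I would write
\[
\psi_K\big(S(u)\big)=A\,S^K(u)\,A^t I^\veps=\big(A\,T(u)\,A\big)\,I^\veps\,\big(A^t\,T^t(-u-\rho)\,A^t\big)\,I^\veps .
\]

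First I would set $\wt T(u):=A\,T(u)\,A$ and verify that $\wt T$ again satisfies \eqref{Y:RTT}: since $A\ot A$ commutes with $P$ it commutes with $R(u)$ from \eqref{R(u)}, and a direct computation transports the two constant factors through the relation. As $A$ has scalar entries one also has $\wt T^t(-u-\rho)=A^t\,T^t(-u-\rho)\,A^t$, so the displayed expression equals $\wt T(u)\,I^\veps\,\wt T^t(-u-\rho)\,I^\veps$. The conceptual step is then the identity $I^\veps X^t I^\veps=X^{\,t'}$, valid for every $X\in\End(\C^{2n})$ with algebra-valued entries, where $t'$ is the transpose of the \emph{same} orthogonal/symplectic type as $t$ when $\veps=+1$ and of the \emph{opposite} type when $\veps=-1$. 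This I would check entrywise using $(I^\veps)^2=I$ together with $(I^\veps)^t=\veps I^\veps$, and it is precisely the mechanism by which an antisymmetric $K$ interchanges the orthogonal and symplectic settings. Applying it to the last two factors yields
\[
\psi_K\big(S(u)\big)=\wt T(u)\,\wt T^{\,t'}(-u-\rho),
\]
a generating matrix of the form \eqref{S=TT} for the transpose $t'$.

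It follows that $\psi_K(S(u))$ obeys \eqref{RE} and \eqref{symm}, the sign of the symmetry relation for the image being that for $S(u)$ multiplied by $\veps$; the factor $(I^\veps)^t=\veps I^\veps$ is exactly what produces this flip, and it is the passage recorded by the change of superscript in the statement ($Y^\pm_K$ for $\veps=+1$, $Y^\mp_K$ for $\veps=-1$). Since \eqref{RE} and \eqref{symm} are the defining relations, $\psi_K$ extends to an algebra homomorphism onto the subalgebra generated by the entries of $S^K(u)$. To finish I would establish bijectivity: surjectivity is immediate because $A$ and $I^\veps$ are invertible with $(I^\veps)^{-1}=I^\veps$, so that $S^K(u)=A^{-1}\,\psi_K(S(u))\,I^\veps\,(A^{-1})^t$ shows every generator of the target lies in the image; injectivity I would obtain by exhibiting the two-sided inverse built from the same data (the construction is involutive under $A\leftrightarrow A^{-1}$), or equivalently by noting that $\psi_K$ is the restriction of the invertible change of generating matrix $T\mapsto\wt T=A\,T\,A$ and is therefore faithful on the subalgebra.

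The step I expect to require the most care is the transpose bookkeeping: verifying $I^\veps X^t I^\veps=X^{\,t'}$ and, with it, that the sign of \eqref{symm} for $\wt T(u)\,\wt T^{\,t'}(-u-\rho)$ is indeed $\veps$ times the original one, so that the type flips exactly when $\veps=-1$. A secondary point worth checking is that $\wt T=A\,T\,A$ is not normalized, its leading term being $A^2$ rather than $I$; one must therefore confirm that \eqref{RE} and \eqref{symm} are genuine consequences of \eqref{Y:RTT} alone and do not rely on the normalization $T=I+O(u^{-1})$, after which the argument goes through without change.
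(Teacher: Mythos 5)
Your central computation is correct and your key identity is the same one the paper uses: conjugation by $I^\veps$ fixes or flips the transposition type, $I^-X^{t_\pm}I^-=X^{t_\mp}$, and the regrouping $A\,S^K(u)\,A^tI^\veps=\wt T(u)\,\wt T^{\,t'}(-u-\rho)$ with $\wt T(u)=A\,T(u)\,A$ is a valid factorization. You are also right, and right to flag, that the derivations of \eqref{RE} and \eqref{symm} from \eqref{Y:RTT} use only the quaternary commutation relations and never the normalization $T(u)=I+O(u^{-1})$, so they apply verbatim to the unnormalized $\wt T$; together with Olshanskii's presentation of $Y^\pm(\mfgl_{2n})$ by \eqref{RE} and \eqref{symm}, this gives a well-defined surjective homomorphism. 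Where you genuinely diverge from the paper is the vehicle: the paper runs everything through the honest Yangian automorphism $\al_A:T(u)\mapsto A\,T(u)\,A^{-1}$ (writing $A\,S(u)\,A^t=\al_A(T(u)KT^t(-u-\rho))$), whereas your $\wt T=ATA$ satisfies \eqref{Y:RTT} as a matrix identity but is \emph{not} induced by any endomorphism of $Y(\mfgl_{2n})$ — its zeroth coefficient is $A^2\neq I$, so the component relations among the $\wt t^{\,(r)}_{ij}$ with $r\ge 1$ involve $A^2$ and differ from those of the $t^{(r)}_{ij}$. (A small bookkeeping remark: carrying the source transpose through $S^K$, $K$ and $A^t$ as you do, the image for $\veps=-1$ satisfies the $Y^\mp$-relations and generates $Y^\pm_K$, so you literally build a map $Y^\mp\to Y^\pm_K$; this is the proposition with the two signs relabelled, so no harm is done, but it should be said.)

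The genuine gap is injectivity. Your second proposed argument — that $\psi_K$ is "the restriction of the invertible change of generating matrix $T\mapsto ATA$ and is therefore faithful" — fails for exactly the normalization reason you yourself identify elsewhere: $T\mapsto ATA$ is not an automorphism (or even an endomorphism) of $Y(\mfgl_{2n})$, so there is no ambient bijection to restrict. Your first argument (a two-sided inverse "by involutivity under $A\leftrightarrow A^{-1}$") is the right idea but not free: to define the inverse assignment $S^K(u)\mapsto A^{-1}S(u)I^\veps(A^{-1})^t$ as an algebra homomorphism on $Y^\pm_K(\mfgl_{2n})$ you would need a presentation of that subalgebra by a $K$-reflection equation and a $K$-twisted symmetry relation, which you have not established and which, argued naively, is circular. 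The clean repair is the paper's device: since $\al_{A^{-1}}(S(u))=A^{-1}S^K(u)(A^{-1})^t$, the automorphism $\al_{A^{-1}}$ of $Y(\mfgl_{2n})$ restricts to an isomorphism of $Y^\pm(\mfgl_{2n})$ onto the subalgebra generated by the entries of $S^K(u)$; the stated $\psi_K$ then differs from this restriction only by a substitution of the form $S(u)\mapsto B\,S(u)\,B^tI^\veps$ on the generating matrix, which your own relation-checking argument (applied to $B\,T(u)$ in place of $A\,T(u)\,A$) shows to be an automorphism of the twisted Yangian. With that supplement the proof closes.
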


\begin{proof}
Let $t_+$ (resp.~$t_-$) denote the orthogonal (resp.~symplectic) transposition. Then $I^- (e_{ij})^{t_{\pm}} I^- = (e_{ij})^{t_{\mp}}$. In other words, conjugation with the matrix $I^\veps$ when $\veps=-1$ interchanges the orthogonal and symplectic transpositions.
Recall that the mapping $\al_A : T(u) \mapsto A\,T(u) A^{-1}$ for any invertible matrix $A\in\End(\C^{2n})$ defines an automorphism of $Y(\mfgl_{2n})$. We may thus rewrite the image of the matrix $S(u)$ under the mapping $\psi_K$ as
\aln{
A S(u) A^{t_\pm} &= A\, T(u) A^{-1} A A^{t_\pm} (A^{t_\pm})^{-1} T(-u-\rho)))^{t_\pm} A^{t_\pm} \\
& = \al_A (T(u) K T^{t_\pm}(-u-\rho))\\ & \in Y^\pm_K(\mfgl_{2n})[[u^{-1}]]
\intertext{if $\veps=+1$ and}
A S(u) A^{t_\pm} I^- &= A\, T(u) A^{-1} A I^-I^- A^{t_\pm} (A^{t_\pm})^{-1} T(-u-\rho)))^{t_\pm} A^{t_\pm} I^- \\
& = \al_A (T(u) K I^- T^{t_\pm}(-u-\rho) I^-)\\ & \in Y^\mp_K(\mfgl_{2n})[[u^{-1}]]
}
if $\veps=-1$.
\end{proof}

\begin{rmk}
In the algebraic Bethe ansatz approach the matrix $K$ defines the right boundary conditions of the open spin chain. Proposition \ref{P:TY-iso} implies that it is sufficient to consider the case when $K=I$. However, the mapping $\psi_K$ has an effect on the left boundary conditions; this will be discussed in more detail in Section~\ref{sec:NABA}. 
\end{rmk}

We now turn to representation theory of $Y^\pm(\mfgl_{2n})$. As in the case of $Y(\mfgl_{2n})$, we will be interested in the lowest weight representations.

\begin{defn}
A representation $V$ of $Y^\pm(\mfgl_{2n})$ is called a lowest weight representation if there exists a nonzero vector $\xi\in V$ such that $V=Y^\pm(\mfgl_{2n})\,\xi$ and
\aln{
s_{ij}(u)\,\xi &= 0 \qu &&\text{for}\qu 1 \le j < i \le {2n} \qu\text{and}\\
s_{ii}(u)\,\xi &= \mu_i(u)\,\xi \qu&&\text{for}\qu 1 \leq i \leq n,
}
where $\mu_i(u)$ are formal power series in $u^{-1}$ with constant terms equal to $1$. The vector $\xi$ is called the lowest weight vector of $V$, and the $n$-tuple $\mu(u)=(\mu_{1}(u),\ldots,\mu_{n}(u))$ is called the lowest weight of $V$. 
\end{defn}

Note that $\xi$ is also an eigenvector for the action of $s_{ii}(u)$ with $n+1\le i\le {2n}$. Indeed, the symmetry relation \eqref{symm} implies that 
\[
s_{\bar\imath\,\bar\imath}(u)\,\xi = \left( \mu_i(-u-\rho) \pm \frac{\mu_i(u)-\mu_i(-u-\rho)}{2u+\rho}\right)\xi \qu\text{for}\qu1\le i \le n.
\]

Writing $s_{ij}(u)=\sum_{r\ge0} s_{ij}^{(r)} u^{-r}$, the map $F_{ij} \mapsto -s_{ji}^{(1)}$ defines an embedding $U(\mfg_{2n})\into Y^\pm(\mfgl_{2n})$, where $\mfg_{2n} = \mfso_{2n}$ or $\mfsp_{2n}$. Conversely, the map 
\[
s_{ij}(u) \mapsto \del_{ij} - F_{ji}(u+(\rho\pm1)/2)^{-1}
\]
defines the evaluation homomorphism $ev^\pm : Y^\pm(\mfgl_{2n}) \to U(\mfg_{2n})$. (Note that there is no analogue of the shift automorphism $\si_c$ of $Y(\mfgl_{2n})$ for the twisted Yangian.)

Given an $n$-tuple $\mu=(\mu_1,\dots,\mu_n)\in\C^n$ we will denote by $M(\mu)$ the irreducible representation of the Lie algebra $\mfg_{2n}$ with the highest weight $\mu$. That is, $M(\mu)$ is a cyclic $U(\mfg_{2n})$-module generated by a nonzero vector $1_\mu$ such that
\aln{
\qq\;\; F_{ij}\,1_\mu &= 0 \qu&&\text{for}\qu 1 \le i<j \le 2n 
\qu\text{and}\\ 
F_{ii}\,1_\mu &= \mu_i\,1_\mu \qu&&\text{for}\qu 1 \leq i \leq n.
}
The representation $M(\mu)$ is finite-dimensional if and only if there exist integers $\mu_i$ satisfying
\aln{
 \mu_{i}-\mu_{i+1} \in \Z_{\ge 0} \qu &\text{for} \qu  &&1\le i\le n-1, \qq \\
 \mu_{n-1} + \mu_n \in \Z_{\ge 0} \qu &\text{if} \qu  &&\mfg_{2n} = \mfso_{2n}, \\
 \mu_n \in \Z_{\ge 0} \qu &\text{if} \qu &&\mfg_{2n} = \mfsp_{2n}.
 }
Using the evaluation homomorphism $ev^\pm$ we can extend each representation $M(\mu)$ to a lowest weight $Y^\pm(\mfgl_{2n})$-module with the lowest weight given by
\equ{
\mu_i(u) = 1-\mu_i(u+(\rho\pm 1)/2)^{-1} \qu\text{for}\qu 1\le i \le n. \label{mu(u)}
} 

The twisted Yangian $Y^\pm(\mfgl_{2n})$ is a left coideal subalgebra of $Y(\mfgl_{2n})$. In particular,
\equ{
\Delta : S(u) \mapsto (T(u)\ot 1)(1\ot S(u))(T^t(-u-\rho)\ot1) ,  \label{cop-s}
}
which is an element in $\End(\C^{2n})\ot Y(\mfgl_{2n}) \ot Y^\pm(\mfgl_{2n}) [[u^{-1}]]$.
This allows us to equip the space
\equ{
M := L\ot M(\mu) = L(\la^{(1)})_{c_1} \ot L(\la^{(2)})_{c_2} \ot \ldots \ot L(\la^{(\ell)})_{c_\ell} \ot M(\mu) \label{M}
}
with the structure of a lowest weight $Y^\pm(\mfgl_{2n})$-module. In particular, $S(u)$ acts on the space $M$ by
\equ{
S(u)\cdot M = \Bigg( \prod_{i=1}^{\ell} \mc{L}_i(u-c_i) \Bigg) \mc{L}^\pm(u)\, \Bigg( \prod_{i=\ell}^1 \mc{L}^t_i(-u-\rho-c_i) \Bigg) M, \label{S(u).M}
}
as an element of $\End(\C^{2n})\ot M \, [[u^{-1}]]$, where 
\equ{
\mc{L}^\pm(u) := (id \ot ev^\pm)(S(u)) = \sum_{i,j=1}^{2n} e_{ij} \ot (\del_{ij}- F_{ji}(u+(\rho\pm1)/2)^{-1}) \label{blax}
}
is the ``boundary'' Lax operator. Let $\xi\in M(\mu)$ be a lowest vector. Denote by $\eta_i$ the lowest vector of $L(\la^{(i)})_{c_i}$ and set $\zeta = \eta_1\ot \dots \ot \eta_{\ell} \ot \xi$. Then the submodule $Y^\pm(\mfgl_{2n})\,\zeta$ of $Y^\pm(\mfgl_{2n})$-module $M$ is a lowest weight representation with a lowest vector $\zeta$. It is given by 
\[
\la_i(u)\,\la_{2n-i+1}(-\rho-u)\,\mu_i(u)  \qu\text{for}\qu 1\le i \le n, 
\]
with $\la_i(u)$ defined in \eqref{L:la(u)} and $\mu_i(u)$ defined in \eqref{mu(u)}, see Proposition 4.2.11 in \cite{Mo3}. To the best of our knowledge, there are currently no irreducibility criteria known for a tensor product of irreducible representations of $Y(\mfgl_{2n})$ and $Y^\pm(\mfgl_{2n})$.

\begin{rmk}
Let $\la=(\la_1,\ldots,\la_n)$ and $\nu=(\nu_1,\ldots,\nu_n)$ be any $\mfgl_n$-weights. Set $\la'=(\la_2,\dots,\la_n)$ and $\nu'=(\nu_2,\dots,\nu_n)$. The algebraic Bethe anstaz for $Y(\mfgl_{n})$ relies on the fact that if the $Y(\mfgl_{N})$-module $L(\la)\ot L(\mu)$ is irreducible, so is the $Y(\mfgl_{N-1})$-module $L(\la')\ot L(\nu')$, see Lemma 6.2.2 in \cite{Mo3}. This property combined with the binary property allows one to solve the spectral problem of a $Y(\mfgl_N)$-system recursively, via the chain of subalgebras $Y(\mfgl_{n}) \supset Y(\mfgl_{n-1}) \supset \cdots \supset Y(\mfgl_{2})$. 
In this paper we will use the fact that the restriction of an irreducible $\mfso_{2n}$- or $\mfsp_{2n}$-representation of weight $\mu=(\mu_1,\ldots,\mu_n)$ to its natural $\mfgl_{n}$ subalgebra is irreducible. Moreover, any irreducible $\mfgl_{2n}$-module, upon restriction to the natural $\mfgl_n\oplus\mfgl_n\subset\mfgl_{2n}$ subalgebra, factors as a tensor product of its natural irreducible $\mfgl_n$-submodules. Hence, starting with the $Y^\pm(\mfgl_{2n})$-module $M$ we can restrict to an irreducible $Y(\mfgl_n)$-module $M^0\subset M$, provided the binary property holds. This resctriction will allow us to solve the spectral problem for a $Y^\pm(\mfgl_{2n})$-system using the chain of subalgebras $Y^\pm(\mfgl_{2n}) \supset Y(\mfgl_{n}) \supset Y(\mfgl_{n-1}) \supset \cdots \supset Y(\mfgl_{2})$.
\end{rmk}


\subsection{Block decomposition} \label{sec:block}

In this section, inspired by the arguments presented in \cite{Rs,DVK}, we demonstrate a block decomposition of the Yangian $Y(\mfgl_{2n})$ and the twisted Yangian $Y^\pm(\mfgl_{2n})$.  
We write matrices $T(u)$ and $S(u)$ in the block form:
\equ{
T(u) = \left(\begin{array}{cc} \ol{A}(u) & \ol{B}(u) \\ \ol{C}(u) & \ol{D}(u) \end{array}\right) , \qq 
S(u) = \left(\begin{array}{cc} A(u) & B(u) \\ C(u) & D(u) \end{array}\right) , \label{block}
}
Our goal is to derive the algebraic relations between these smaller matrix operators (blocks), which is the crucial first step of the algebraic Bethe ansatz. 
We will denote the matrix elements of $A(u)$ by $\sca_{ij}(u)$ with $1\le i,j\le n$, and similarly for matrices $B(u)$, $C(u)$ and $D(u)$, and their barred counterparts.

Recall that $\C^{2n} \cong \C^2 \ot \C^n$. Let ${\tt e}_{ij}$ with $1\le i,j\le 2n$ denote the standard matrix units of $\End(\C^{2n})$. Moreover, let $x_{ij}$ with $1\le i,j\le 2$ (resp.~$e_{ij}$ with $1\le i,j\le n$) denote the standard matrix units of $\End(\C^2)$ (resp.~$\End(\C^n)$).
Then, for any $1\le i,j \le n$, we may write
\equ{
{\tt e}_{ij} = x_{11} \ot e_{ij}, \qu {\tt e}_{n+i,j} = x_{21} \ot e_{ij}, \label{e=x*e}
}
and similarly for ${\tt e}_{i,n+j}$ and ${\tt e}_{n+i,n+j}$.
Hence any matrix $M\in \End(\C^{2n})$ with entries $(M)_{ij}\in \C$ can be equivalently written as
\[
M = \sum_{a,b=1}^2 x_{ab} \ot [M]_{ab} \in \End(\C^2) \ot \End(\C^n),
\]
where $[M]_{ab} = \sum_{i,j=1}^{n} (M)_{i+n(a-1),j+n(b-1)}\, e_{ij}$ are blocks of $M$, viz.~\eqref{block}.
Now let $M \in \End(\C^{2n} \ot \C^{2n})$. Then we may write
\[
M = \sum_{a,b,c,d=1}^2 x_{ab} \ot x_{cd} \ot [M]_{abcd} \in \End(\C^2\ot \C^2) \ot \End(\C^n \ot \C^n),
\]
where $[M]_{abcd}$ are obtained as follows. Writing $M = \sum_{i,j,k,l=1}^{2n} (M)_{ijkl}\,{\tt e}_{ij} \ot {\tt e}_{kl}$ we have
\equ{
[M]_{abcd} = \sum_{i,j,k,l=1}^{n} (M)_{i+n(a-1),j+n(b-1),k+n(c-1),l+n(d-1)}\, e_{ij} \ot e_{kl} . \label{M-block}
}

Denote the $R$-matrix \eqref{R(u)} acting on $\C^{2n}\ot \C^{2n}$ by $\RR(u)$ and its $t$-trans\-pose by $\RR^t(u)$. Viewing them as elements in $\End(\C^2\ot \C^2) \ot \End(\C^n \ot \C^n)[[u^{-1}]]$ and using \eqref{M-block} we recover the six-vertex block structure
\ali{
\RR(u) = \left(\begin{array}{cccc}  \!R(u) & \\ & I& \!-u^{-1}P\\ & \!-u^{-1}P & I \\ &&&  R(u)\! \end{array}\right) ,
\qu
\RR^t(u) = \left(\begin{array}{cccc} \;I\; & \\ &  R^t(u) & \;\mp u^{-1}Q\; \\ & \;\mp u^{-1}Q\; &  R^t(u)\\ &&& \;I\; \end{array}\right) , \label{R:new}
}
where the operators inside the matrices are each acting on $\C^n \ot \C^n$; note that $R^t(u)=I - u^{-1} Q$ and $Q=\sum_{1\le i,j\le n}e_{ij} \ot e_{\bar\jmath\,\bar\imath}$ in both cases of $\mp$ above are of the orthogonal type (recall the notation $\bar\imath=n-i+1$).

In a similar way, the matrices $T_1(u)=T(u)\ot I$ and $T_2(u)= I \ot T(u)$, as elements of $\End(\C^2\ot \C^2) \ot \End(\C^n \ot \C^n) \ot Y(\mfgl_{2n})[[u^{-1}]]$, take the form
\ali{
T_1(u) = \left(\begin{array}{cccc} \ol{A}_1(u) & & \ol{B}_1(u) \\ & \ol{A}_1(u)&& \ol{B}_1(u)\\ \ol{C}_1(u) & &\ol{D}_1(u) \\ & \ol{C}_1(u)&& \ol{D}_1(u)\end{array}\right) ,
\qu
T_2(u) = \left(\begin{array}{cccc} \ol{A}_2(u) & \ol{B}_2(u) \\ \ol{C}_2(u) & \ol{D}_2(u)\\ && \ol{A}_2(u) & \ol{B}_2(u) \\ && \ol{C}_2(u) & \ol{D}_2(u)\\ \end{array}\right) ,  \label{T:new}
}
where $\ol{A}_1(u)$ means $\ol{A}(u) \ot I \in \End(\C^n \ot \C^n) \ot Y(\mfg_{2n})[[u^{-1}]]$ with $I$ being the identity matrix, and similarly for the other blocks.
Substituting \eqref{R:new} and \eqref{T:new} to \eqref{Y:RTT} allows us to rewrite the defining relations of $Y(\mfgl_{2n})$ in terms of the matrices $\ol{A}(u)$, $\ol{B}(u)$, $\ol{C}(u)$ and $\ol{D}(u)$. The relations that we will need are:
\gat{
R_{12}(u-v)\, \ol{A}_1(u)\, \ol{A}_2(v) = \ol{A}_2(v)\, \ol{A}_1(u)\, R_{12}(u-v), \label{Y:AA}\\
R_{12}(u-v)\, \ol{D}_1(u)\, \ol{D}_2(v) = \ol{D}_2(v)\, \ol{D}_1(u)\, R_{12}(u-v), \label{Y:DD}\\
\ol{C}_1(u)\, \ol{A}_2(v) = \ol{A}_2(v)\, \ol{C}_1(u)\, R_{12}(u-v) + \frac{P_{12}\, \ol{A}_1(u)\, \ol{C}_2(v)}{u-v}, \label{Y:CA}\\
\ol{C}_1(u)\, \ol{D}_2(v) = R_{12}(v-u)\, \ol{D}_2(v)\, \ol{C}_1(u) - \frac{P_{12}\, \ol{D}_2(u)\, \ol{C}_1(v)}{u-v}, \label{Y:CD}\\
\ol{D}_1(u)\, \ol{A}_2(v) - \ol{A}_2(v)\, \ol{D}_1(u) = \frac{P_{12}\, \ol{B}_1(u)\, \ol{C}_2(v) - \ol{B}_2(v)\, \ol{C}_1(u)\, P_{12}}{u-v}. \label{Y:AD}
}
In particular, the coefficients of the matrix entries of $\ol{A}(u)$ generate a $Y(\mfgl_n)$ subalgebra of $Y(\mfgl_{2n})$. The same is true for $\ol{D}(u)$.

We now repeat the same steps for the twisted Yangian $Y^\pm(\mfgl_{2n})$. We substitute \eqref{R:new} to \eqref{RE} and view matrices $S_1(u)$ and $S_2(u)$ as elements of $\End(\C^2\ot \C^2) \ot \End(\C^n \ot \C^n) \ot Y^\pm(\mfgl_{2n})[[u^{-1}]]$, so that they take the same form as in \eqref{T:new}. 
This allows us to write the defining relations of $Y^\pm(\mfgl_{2n})$ in terms of the matrices $A(u)$, $B(u)$, $C(u)$ and $D(u)$. 
The relations that we will need are: 
\ali{
& A_2(v) B_1(u) = R_{12}(u-v) B_1(u) R_{12}^t(-u-v-\rho) A_2(v) \el & \hspace{2.15cm} + \frac{P_{12} B_1(v) R_{12}^t(-u-v-\rho) A_2(u)}{u-v} \mp \frac{B_2(v) Q_{12} D_1(u)}{u+v+\rho} , \!\!\label{AB}
\\ 
& R_{12}(u-v) B_1(u) R_{12}^t(-u-v-\rho) B_2(v) \el & \hspace{3.6cm} = B_2(v) R_{12}^t(-u-v-\rho) B_1(u) R_{12}(u-v) , \label{BB}
\\
& R_{12}(u-v) A_1(u) A_2(v) - A_2(v) A_1(u) R_{12}(u-v) \el & \hspace{1.76cm} = \mp\frac{ R_{12}(u-v) B_1(u) Q_{12} C_2(v)-B_2(v) Q_{12} C_1(u) R_{12}(u-v) }{u+v+\rho} , \!\! \label{AA} 
\\
& C_1(u) A_2(v) = A_2(v) R_{12}^t(-u-v-\rho) C_1(u) R_{12}(u-v) \el & \hspace{2.15cm} + \frac{P_{12} A_1(u) R_{12}^t(-u-v-\rho) C_2(v)}{u-v}\mp \frac{D_1(u) Q_{12} C_2(v)}{u+v+\rho} . \!\! \label{CA}
}

It remains to cast the symmetry relation \eqref{symm} in the block form. Observe that 
\[
S^t(u) = \left(\begin{array}{cc} D^t(u) & \pm B^t(u) \\ \pm C^t(u) & A^t(u) \end{array}\right) .
\]
This allows us immediately to extract linear relations between matrices $A(u)$, $B(u)$, $C(u)$ and $D(u)$, of which we will need the following two only: 
\ali{
D^t(-u-\rho) &=A(u)\pm \frac1{2u+\rho} \big(A(u)-A(-u-\rho) \big) ,  \label{symmD} \\
\pm B^t(-u-\rho) &= B(u) \pm \frac{1}{2u+\rho}(B(u) - B(-u-\rho)).  \label{symmB}
}


\section{Exchange identities}

In this section we derive algebraic relations between certain elements of the twisted Yangian $Y^\pm(\mfgl_{2n})$ that will be used in the derivation of the nested Bethe equations in the section that follows below. In particular, we recast the exchange relations (\ref{AB}-\ref{CA}) so that they can be applied directly to the algebraic Bethe ansatz. We then introduce the nested monodromy matrix and show its relevant algebraic properties. We note that all the operators in (\ref{AB}-\ref{CA}), viewed as matrices, will act on the vector spaces $V_a,V_{a_1},V_{a_2},\ldots,$ and $V_{\tl{a}},V_{\tl{a}_1},V_{\tl{a}_2},\ldots,$ all isomorphic to $\C^n$, which we call the \emph{auxiliary spaces}.  
We will often make use of the following rational function 
\equ{
p(u)=1 \pm \frac1{2u+\rho}. \label{p(v)}
}


\subsection{Creation operator for a single excitation}

The key operators in the construction of the Bethe vector will come from the $B$ block, viz.~\eqref{block}. However, rather than use a matrix of creation operators, we reinterpret $B(u)$ as a row vector in two auxiliary spaces, with components given by the matrix elements of $B(u)$. 
\nc{\at}{\tilde{a}}
\nc{\bee}[2]{\beta_{\at_{#1} a_{#1}}(#2)}
\begin{defn} \label{D:beta} The creation operator is given by
\equ{
\beta(u) := \sum_{1\le i,j\le n} e^*_i \ot e^*_j \ot \scb_{\bar \imath j}(u) \in (\C^n)^* \ot (\C^n)^* \ot Y^\pm(\mfgl_{2n})[[u^{-1}]]. \label{beta}
}
\end{defn}

The two auxiliary spaces are labelled in the same order as the tensor product, that is, $\beta_{\at a}(u) \in V_{\at}^* \ot V_{a}^* \ot Y^\pm(\mfgl_{2n})[[u^{-1}]]$.
The exchange and symmetry relations involving the $B$ operator may now be rewritten using the above notation.

\begin{lemma} \label{L:betaex}
The creation operator satisfies the following identities:
\ali{
&\beta_{\at_1 a_1}(u_1) \, \beta_{\at_2 a_2}(u_2) \, R_{a_1 \at_2}(-u_1-u_2-\rho) \, \check{R}_{\at_1 \at_2}(u_1-u_2) \el
& \qq = \beta_{\at_1 a_1}(u_2) \, \beta_{\at_2 a_2}(u_1) \, R_{a_1 \at_2}(-u_1-u_2-\rho) \, \check{R}_{a_1 a_2}(u_1-u_2) , \label{betabeta} \\
& \bee{i}{u}\, Q_{a_ia} =\pm \bigg(p(-u-\rho)\, \bee{i}{-u-\rho}\pm \frac{\bee{i}{u} }{2u+\rho} \bigg)\, Q_{\at_ia}\, Q_{a_ia}, \label{BPQ}
}
where $\check{R}(u) := P R(u)$.
\end{lemma}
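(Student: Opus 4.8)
The plan is to obtain both identities as block/covector translations of matrix relations that are already available for the entries of $S(u)$: identity \eqref{betabeta} from the $BB$-relation \eqref{BB}, and identity \eqref{BPQ} from the symmetry relation \eqref{symmB}. The single device that makes the translation transparent is to present the creation operator of Definition~\ref{D:beta} as the genuine matrix $B(u)$ dressed by a fixed ``flip'' covector. Writing $\lan\Om|_{\at a} := \sum_{i=1}^n (e_i^*)_{\at}\ot(e_{\bar\imath}^*)_{a}$ for the covector on $V_{\at}\ot V_a$ that couples the two auxiliary spaces through the bar map, one reads off from \eqref{beta} that
\[
\beta_{\at a}(u) = \lan\Om|_{\at a}\,B_a(u),
\]
so that the space $\at$ merely records, via $\bar\imath$, the row index of the block $B(u)$ which in \eqref{block} is carried by the space $a$. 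All manipulations are then performed on $B_a(u)$, and the $\at$-spaces are reinstated at the very end by $\lan\Om|$.

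For \eqref{BPQ} I would begin from the symmetry relation \eqref{symmB}, rearranged into $\pm B^t(u) = p(-u-\rho)\,B(-u-\rho)\pm(2u+\rho)^{-1}B(u)$, which is \eqref{symmB} solved for $B^t$ and evaluated at $-u-\rho$; its coefficients are precisely those appearing inside the bracket on the right-hand side of \eqref{BPQ}. Dressing this with $\lan\Om|$ and placing the matrix in the space $a_i$ identifies the right-hand side of \eqref{BPQ} with $\lan\Om|_{\at_i a_i}\,(B^t(u))_{a_i}\,Q_{\at_i a}\,Q_{a_i a}$. On the left-hand side I would instead invoke the contraction identity \eqref{Q_rel}, $(M\ot I)\,Q=(I\ot M^t)\,Q$, to push $B(u)$ from $a_i$ across $Q_{a_i a}$, trading it for $B^t(u)$ in the space $a$. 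With \eqref{symmB} thus supplying the coefficient functions and \eqref{Q_rel} supplying the relocation of the matrix, \eqref{BPQ} collapses to a bookkeeping identity for $\lan\Om|$, the operators $Q$, and the reduced transpose, which is verified by a direct computation in $\End(\C^n)^{\ot 3}$ using $P(e_{ij}\ot I)=(I\ot e_{ij})P$, the description of $Q$ as $P$ with one tensor leg flipped by $\imath\mapsto\bar\imath$, and the involutivity $(M^t)^t=M$ of the transpose.

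For \eqref{betabeta} I would substitute $\beta_{\at_s a_s}(u)=\lan\Om|_{\at_s a_s}B_{a_s}(u)$ into both sides and feed the product $B_{a_1}(u_1)B_{a_2}(u_2)$ into \eqref{BB}. The two $R$-matrices there act on the $a$-spaces, namely $R_{a_1 a_2}(u_1-u_2)$ and $R^t_{a_1 a_2}(-u_1-u_2-\rho)$. Combining \eqref{Q_rel} with the flip covectors $\lan\Om|$, the $Q$ inside $R^t_{a_1 a_2}$ has its barred leg absorbed into the flip of the second factor, which relocates that $R^t$ to the pair $(a_1,\at_2)$ and turns it into the $P$-type matrix $R_{a_1\at_2}(-u_1-u_2-\rho)$ common to both sides of \eqref{betabeta}; the remaining $R_{a_1 a_2}(u_1-u_2)$, after the interchange of spectral parameters forced by carrying $B_{a_2}(u_2)$ past $B_{a_1}(u_1)$, is absorbed into the $\check R=PR$ factors, producing $\check R_{\at_1\at_2}$ on one side and $\check R_{a_1 a_2}$ on the other. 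Reading off the two orderings of \eqref{BB} in the new labels then gives \eqref{betabeta}.

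In both cases the genuine content is combinatorial rather than conceptual, and the main obstacle is exactly this notational discipline: keeping the four auxiliary labels $\at_1,a_1,\at_2,a_2$ (respectively $\at_i,a_i,a$) straight while tracking the orthogonal/symplectic signs $\pm,\mp$ that accompany every occurrence of $Q$, of $R^t$, and of the transpose, and confirming that they collapse to the single global sign displayed in \eqref{BPQ}. I expect the delicate point to be the step in which a transpose is moved from one leg of $Q$ to the other via \eqref{Q_rel}, since it is there that the flip map $\imath\mapsto\bar\imath$ built into $\beta$ must conspire with the bar inside $Q$ to convert the $Q$-type intertwiner into the $P$-type $R$ and $\check R$ factors.
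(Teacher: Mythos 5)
Your proposal is correct and follows essentially the same route as the paper: \eqref{betabeta} is extracted from the $BB$-relation \eqref{BB} (after left-multiplying by $P_{12}$) and \eqref{BPQ} from the symmetry relation \eqref{symmB} evaluated at $-u-\rho$, with the same index gymnastics converting the $Q$-type intertwiner on the $a$-spaces into the $P$-type $R_{a_1\at_2}$ and using $\check{R}^{t_at_b}=\check{R}$ and $P_{\at_i a_i}Q_{a_ia}=Q_{\at_ia}Q_{a_ia}$. Your covector factorisation $\beta_{\at a}(u)=\lan\Om|_{\at a}B_a(u)$ is only a repackaging of the component computation the paper carries out directly (and is essentially the rule \eqref{XBY} the paper introduces in the next lemma), and the bookkeeping identity it reduces \eqref{BPQ} to does check out.
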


\begin{proof}
We start by proving \eqref{betabeta}. From \eqref{BB}, begin by acting from the left with $P_{12}$, then use the defining property of the permutation operator to move it to the right on the r.h.s.~of the equation to obtain
\equ{ 
\check{R}_{12}(u_1-u_2) B_{1}(u_1) R^t_{12}(-u_1-u_2-\rho) B_{2}(u_2) = B_{1}(u_2) R^t_{12}(-u_1-u_2-\rho) B_{2}(u_1)\check{R}_{12}(u_1-u_2). \label{BB_check}
}
We want to rewrite this in terms of the creation operators defined in Definition~\ref{D:beta}. Choose bases for $V_{1}$ and $V_{2}$, then denote the matrix components of $R_{12}(-u_1-u_2-\rho)$ by $r_{i_1 j_1 i_2 j_2}$, and the matrix components of $\check{R}_{12}(u_1-u_2)$ by $\check{r}_{i_1 j_1 i_2 j_2}$. In components, \eqref{BB_check} becomes
\aln{
& \sum_{j_1,j_2,k_1,k_2=1}^n \check{r}_{i_1 j_1 i_2 j_2} \, b_{j_1 k_1}(u_1) \, r_{k_1 l_1 \bar{k}_2 \bar{\jmath}_2} \, b_{k_2 l_2}(u_2) = \sum_{j_1,j_2,k_1,k_2=1}^n b_{i_1 j_1}(u_2) \, r_{j_1 k_1 \bar \jmath_2 \bar \imath_2} \, b_{j_2 k_2}(u_1) \, \check{r}_{k_1 l_1 k_2 l_2}.
}
Relabelling $i_1 \rightarrow \bar \imath_1$ and $i_2 \rightarrow \bar \imath_2$, and relabelling the summation indices $j_1 \rightarrow \bar \jmath_1$ and $j_2 \rightarrow \bar \jmath_2$ yields an equivalent expression:
\aln{
& \sum_{j_1,j_2,k_1,k_2=1}^{n}  b_{\bar \jmath_1 k_1}(u_1) \, b_{\bar k_2 l_2}(u_2) \, r_{k_1 l_1 k_2 j_2} \, \check{r}_{\bar \imath_1 \bar \jmath_1 \bar \imath_2 \bar \jmath_2} = \sum_{j_1,j_2,k_1,k_2=1}^{n} b_{\bar \imath_1 j_1}(u_2) \, b_{\bar \jmath_2 k_2}(u_1) \, r_{j_1 k_1 j_2 i_2}  \, \check{r}_{k_1 l_1 k_2 l_2}.
}
Finally, we note that $\check{r}_{\bar \imath_1 \bar \jmath_1 \bar \imath_2 \bar \jmath_2} = \check{r}_{j_1 i_1 j_2 i_2}$, as $\check{R}_{a b}(u)^{t_a t_b} =\check{R}_{a b}(u)$. Then taking the tensor product with $e^*_{i_1} \ot e^*_{l_1} \ot e^*_{i_2} \ot e^*_{l_2} \in V_{\at_1}^* \ot V_{a_1}^* \ot V_{\at_2}^* \ot V_{a_2}^*$ and summing over these indices yields
\aln{
&\beta_{\at_1 a_1}(u_1) \, \beta_{\at_2 a_2}(u_2) \, R_{a_1 \at_2}(-u_1-u_2-\rho) \, \check{R}_{\at_1 \at_2}(u_1-u_2) \\
& \qq = \beta_{\at_1 a_1}(u_2) \, \beta_{\at_2 a_2}(u_1) \, R_{a_1 \at_2}(-u_1-u_2-\rho) \, \check{R}_{a_1 a_2}(u_1-u_2) ,
}
as required.

We now focus on \eqref{BPQ}. From \eqref{symmB} in matrix components, we make the assignment $u \mapsto -u-\rho$ and multiply by $\pm$ to obtain
\[
\scb_{\bar \jmath \bar \imath}(u) =  \pm\, p(-u-\rho)\, \scb_{ij}(-u-\rho) + \frac{\scb_{ij}(u)}{2u+\rho} .
\]
Then, taking the tensor product with $e^*_j \ot e^*_{\bar \imath} \in V^*_{\at_i} \ot V^*_{a_i}$, and summing over $i,j$ yields the following expression in terms of the creation operator:
\aln{
\beta_{\at_i a_i}(u) &= \pm p(-u-\rho) \beta_{a_i \at_i}(-u-\rho) + \frac{\beta_{a_i \at_i}(u)}{2u+\rho} =\pm \bigg( p(-u-\rho) \beta_{\at_i a_i}(-u-\rho) \pm \frac{\beta_{\at_i a_i}(u)}{2u+\rho} \bigg) P_{\at_i a_i}.
}
To obtain \eqref{BPQ} from here, we multiply on the right by the operator $Q_{a_i a}$ and use the identity $P_{\at_i a_i} Q_{a_i a} = Q_{\at_i a} Q_{a_i a}$.
\end{proof}


\subsection{Rewriting the AB exchange relation} 

We need to rewrite the AB exchange relation \eqref{AB} in terms of the creation operator \eqref{beta}.
The form of this relation will be the cornerstone of the nesting procedure.

\begin{lemma} \label{L:ABetaHalf} The following identity holds:
\ali{
A_a(v)\, \bee{i}{u} &= \bee{i}{u}\, R^t_{\at_ia}(u-v)\, R^t_{a_ia}(-u{-}v{-}\rho)\, A_a(v) \el
& \qu + \frac{\bee{i}{v}}{u-v}\, Q_{\at_ia}\, R^t_{a_ia}(-2u-\rho)\, A_a(u) \el &\qu \mp \frac{p(-u-\rho)}{u+v+\rho}\, \bee{i}{v}\, Q_{\at_ia} \,Q_{a_ia}\, A_a(-u-\rho). \label{ABetaHalf} 
}
\end{lemma}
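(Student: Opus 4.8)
The plan is to start from the $AB$ exchange relation \eqref{AB}, in which the $B$ and $D$ blocks sit in ``space~$1$'' and the $A$ block in ``space~$2$'', identify space~$2$ with the auxiliary space $a$, and convert the space-$1$ matrices into the covector $\beta$ of Definition~\ref{D:beta}. The device I would use is the dictionary that turns the block matrix $B_1$ into $\bee{i}{u}$ by transposing the ``output'' index of space~$1$ (with the reversal $\,\bar{\phantom{n}}\,$) into the new dual space $\at_i$, matching the index structure $\scb_{\bar\imath j}$ of \eqref{beta}, while relabelling its ``input'' index space to $a_i$. Under this conversion, left-multiplication of $B_1$ by an operator $X$ on space~$1$ becomes right-multiplication of $\bee{i}{u}$ by $X^t$ on $\at_i$, whereas right-multiplication by $Y$ becomes right-multiplication by $Y$ on $a_i$; this is verified directly in components exactly as in the proof of Lemma~\ref{L:betaex}. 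Since $P^t=Q$, the first term $R_{12}(u-v)\,B_1(u)\,R^t_{12}(-u-v-\rho)\,A_2(v)$ of \eqref{AB} is transported verbatim to $\bee{i}{u}\,R^t_{\at_ia}(u-v)\,R^t_{a_ia}(-u-v-\rho)\,A_a(v)$, the first term of \eqref{ABetaHalf}; the order of the two $R^t$ factors, which do not commute since both touch $a$, is fixed by this computation.

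The term containing $D_1(u)$ needs the most care. First I would apply \eqref{Q_rel} twice to rewrite
\[
B_2(v)\,Q_{12}\,D_1(u)=B^t_1(v)\,Q_{12}\,D^t_2(u),
\]
which moves the $D$ block into space $a$ and turns $B$ into $B^t_1$. Under the dictionary $B^t_1(v)$ becomes the \emph{swapped} covector $\beta_{a_i\at_i}(v)$, and using $\beta_{a_i\at_i}(v)=\bee{i}{v}\,P_{\at_ia_i}$ together with $P_{\at_ia_i}\,Q_{a_ia}=Q_{\at_ia}\,Q_{a_ia}$ (the $P$--$Q$ identity already used in Lemma~\ref{L:betaex}) gives $B^t_1(v)\,Q_{12}\mapsto \bee{i}{v}\,Q_{\at_ia}\,Q_{a_ia}$. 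I would then invoke the symmetry relation \eqref{symmD}, evaluated at $-u-\rho$, in the form
\[
D^t(u)=p(-u-\rho)\,A(-u-\rho)\pm\tfrac{1}{2u+\rho}\,A(u),
\]
which splits the $D$-term into an $A(-u-\rho)$-piece and an $A(u)$-piece. The $A(-u-\rho)$-piece reproduces the third term of \eqref{ABetaHalf} exactly.

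It remains to combine the second term of \eqref{AB} with the leftover $A(u)$-piece. Converting $P_{12}\,B_1(v)\,R^t_{12}(-u-v-\rho)\,A_2(u)$ via the dictionary (again $P^t=Q$) yields $\tfrac{1}{u-v}\,\bee{i}{v}\,Q_{\at_ia}\,R^t_{a_ia}(-u-v-\rho)\,A_a(u)$, while the $A(u)$-piece of the $D$-term contributes a multiple of $\bee{i}{v}\,Q_{\at_ia}\,Q_{a_ia}\,A_a(u)$. Writing $R^t(w)=I-w^{-1}Q$ and setting $w=u+v+\rho$, the two $Q_{a_ia}$-coefficients combine by a one-line partial-fraction identity in which the numerator $u+v+\rho$ cancels against $w$, leaving $\tfrac{1}{u-v}\bigl(I+\tfrac{1}{2u+\rho}Q_{a_ia}\bigr)=\tfrac{1}{u-v}R^t_{a_ia}(-2u-\rho)$; this is the origin of the shifted argument $-2u-\rho$ in the second term of \eqref{ABetaHalf}. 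The main obstacle throughout is the bookkeeping: keeping straight the transposition-and-bar dictionary across the three auxiliary spaces $\at_i,a_i,a$, the non-commuting $R^t$ factors that all act on $a$, and the precise spectral parameters, so that the $A(u)$-contributions of the second and third terms of \eqref{AB} merge into a single clean term.
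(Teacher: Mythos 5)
Your proposal is correct and follows essentially the same route as the paper's proof: convert \eqref{AB} into $\beta$-form via the component dictionary \eqref{XBY} together with $Q_{12}X_1=Q_{12}X_2^t$ and the $P$--$Q$ identities, eliminate $D^t_a(u)$ using \eqref{symmD} at the shifted argument (giving exactly $p(-u-\rho)A_a(-u-\rho)\pm\frac{A_a(u)}{2u+\rho}$), and merge the two $A_a(u)$ contributions by the partial-fraction identity $\frac{R^t(-u-v-\rho)}{u-v}-\frac{Q}{(2u+\rho)(u+v+\rho)}=\frac{R^t(-2u-\rho)}{u-v}$. All three steps, including the sign and the origin of the shifted argument $-2u-\rho$, match the paper's argument.
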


\begin{proof}
We introduce the following rule for obtaining expressions in terms of the $\beta$ operator from those in terms of $B$ operator.
Let $X_{\at}\in \End(V_{\at})$ and $Y_a \in \End(V_a)$. Considering the components of $\beta_{\at a}(u)\, X^t_{\at}\, Y_a$, we have
\aln{
\beta_{\at a}(u)\, X^t_{\at}\, Y_a &= \sum_{1\le i,j,k,l,r,s\le n} (e^*_k \ot e^*_l \ot \scb_{\bar k l}(u)) (e_{ri}\ot e_{sj}\ot x_{\bar\imath \bar r}\, y_{sj}) \\ &= \sum_{1\le i,j,k,l\le n} e_i \ot e_j \ot \scb_{\bar k l}(u)\, x_{\bar\imath \bar k} \, y_{lj},
}
so that $(\beta_{\at a}(u)\, X^t_{\at}\, Y_a)_{\bar\imath j} = \sum_{1\le k,l\le n} b_{k l}(u)\,x_{ik}\,y_{lj}$. On the other hand, taking the $(i,j)$-th matrix element of the expression $X_a \,B_a(u)\, Y_a$ for any $X_a, Y_a \in \End(V_a)$ we obtain
\eqa{ \label{XBY} 
(X_a \,B_a(u)\, Y_a)_{ij} = \sum_{1\le k,l \le n} x_{ik}\, \scb_{kl}(u)\, y_{lj} = (\bee{}{u}\, X^t_{\at}\, Y_{a})_{\bar \imath j} .
}
Using this rule, we can rewrite \eqref{AB} as
\aln{
A_a(v) \beta_{\at_i a_i}(u) &= \beta_{\at_i a_i}(u) R^t_{\at_i a}(u-v) R^t_{a_i a} (-u-v-\rho) A_a(v) 
\\
&\quad + \frac{\beta_{\at_i a_i}(v)}{u-v} Q_{\at_i a} R^t_{a_i a}(-u-v-\rho) A_a(u) \\
&\quad \mp \frac{\beta_{\at_i a_i}(v)}{u+v+\rho} Q_{\at_i a} Q_{a_i a} D^t_a(u),
}
where the identities $X_1 = P_{12} X_2 P_{12}$ and $Q_{12} X_1 = Q_{12} X^t_2$ have been used. 
From here, the symmetry relation \eqref{symmD} may be used to obtain
\aln{
A_a(v) \beta_{\at_i a_i}(u) &= \beta_{\at_i a_i}(u) R^t_{\at_i a}(u-v) R^t_{a_i a} (-u-v-\rho) A_a(v) 
\\
&\quad + \frac{\beta_{\at_i a_i}(v)}{u-v} Q_{\at_i a} R^t_{a_i a}(-u-v-\rho)A_a(u) \\
&\quad \mp \frac{\beta_{\at_i a_i}(v)}{u+v+\rho} Q_{\at_i a} Q_{a_i a} \bigg( p(-u-\rho) \, A_a(-u-\rho) \pm \frac{A_a(u)}{2u+\rho} \bigg).
}
We note that 
\aln{
& \frac{R_{a_i a}^t(-u-v-\rho)}{u-v} - \frac{Q_{a_i a}}{(2u+\rho)(u+v+\rho)} = \frac1{u-v}\bigg(I+\bigg(1-\frac{u-v}{2u+\rho}\bigg)\frac{Q_{a_i a}}{u+v+\rho}\bigg) 
=\frac{R^t(-2u-\rho)}{u-v}.
}
Following these manipulations, we obtain
\ali{
A_a(v)\, \bee{i}{u} &= \bee{i}{u}\, R^t_{\at_ia}(u-v)\, R^t_{a_ia}(-u{-}v{-}\rho)\, A_a(v) \el
& \qu + \frac{\bee{i}{v}}{u-v}\, Q_{\at_ia}\, R^t_{a_ia}(-2u-\rho)\, A_a(u) \el &\qu \mp \frac{p(-u-\rho)}{u+v+\rho}\, \bee{i}{v}\, Q_{\at_ia} \,Q_{a_ia}\, A_a(-u-\rho) , 
}
as required.
\end{proof}

This relation \eqref{ABetaHalf} is convenient as it does not feature the $D$ operator, so the relation can be used repeatedly in the presence of multiple creation operators. However, to obtain the most convenient form of \eqref{ABetaHalf}, we must consider the action of $p(v)A_a(v) + p(-v-\rho)A_a(-v-\rho)$ on $\bee{i}{u}$ rather than of $A_a(v)$ alone (the motivation for this construction will be explained in Section \ref{sec:NABA}). Introduce the following notation for a symmetrised combination of functions or operators,
\equ{ 
\even{v}{f(v)} := f(v)+f(-\,v-\rho) . \label{v-even}
}
\begin{lemma} \label{L:AB}
The following identity holds:
\ali{ \label{ABetafull}
\even{v}{ p(v)\, A_a(v)} \bee{i}{u} & = \bee{i}{u} \even{v}{ p(v)\, R^t_{\at_ia}(u-v)\, R^t_{a_ia}(-u-v-\rho)\, A_a(v)} \el 
& \qu + \frac1{p(u)} \even{v}{ \frac{p(v)}{u-v} \bee{i}{v}} \Res{w \rightarrow u} \Big[ \even{w}{ p(w)\, R^t_{\at_ia}(u-w)\, R^t_{a_ia}(-u-w-\rho)\, A_a(w) } \Big] .
}
\end{lemma}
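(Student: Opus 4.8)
The plan is to derive \eqref{ABetafull} directly from the ``half'' identity \eqref{ABetaHalf} of Lemma~\ref{L:ABetaHalf} by symmetrisation. Concretely, I would take \eqref{ABetaHalf} with spectral parameter $v$, multiply it by $p(v)$, and add the copy with $v$ replaced by $-v-\rho$ multiplied by $p(-v-\rho)$; this is precisely $\even{v}{p(v)\,\cdot}$ applied to \eqref{ABetaHalf}. On the left this produces $\even{v}{p(v)\,A_a(v)}\,\bee{i}{u}$. On the right, the first line of \eqref{ABetaHalf} assembles at once into $\bee{i}{u}\,\even{v}{p(v)\,R^t_{\at_ia}(u-v)\,R^t_{a_ia}(-u-v-\rho)\,A_a(v)}$, i.e.\ the wanted term of \eqref{ABetafull}. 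It then remains to show that the symmetrised second and third lines, carrying $A_a(u)$ and $A_a(-u-\rho)$ respectively, reproduce the residue term.

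The next step is to compute the residue appearing on the right of \eqref{ABetafull}. Writing $G(w):=\even{w}{p(w)\,R^t_{\at_ia}(u-w)\,R^t_{a_ia}(-u-w-\rho)\,A_a(w)}$ and recalling $R^t(x)=I-x^{-1}Q$, the first summand of $G(w)$ has a simple pole at $w=u$ arising from $R^t_{\at_ia}(u-w)$, while the second summand (the $w\mapsto -w-\rho$ copy) has a simple pole at $w=u$ arising from $R^t_{a_ia}(w-u)$. Extracting both contributions gives
\equ{
\Res{w\to u} G(w) = p(u)\,Q_{\at_ia}\,R^t_{a_ia}(-2u-\rho)\,A_a(u) - p(-u-\rho)\,R^t_{\at_ia}(2u+\rho)\,Q_{a_ia}\,A_a(-u-\rho).
}
Multiplying this on the left by $p(u)^{-1}\even{v}{\frac{p(v)}{u-v}\bee{i}{v}}$, the first term yields exactly $\even{v}{\frac{p(v)}{u-v}\bee{i}{v}}\,Q_{\at_ia}\,R^t_{a_ia}(-2u-\rho)\,A_a(u)$, which matches the symmetrised second line of \eqref{ABetaHalf} verbatim; so the $A_a(u)$ contributions agree with no further work.

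The hard part is matching the $A_a(-u-\rho)$ contributions. The symmetrised third line of \eqref{ABetaHalf} carries the operator $\even{v}{\frac{p(v)}{u+v+\rho}\bee{i}{v}}\,Q_{\at_ia}\,Q_{a_ia}$, whereas the residue packages the $A_a(-u-\rho)$ coefficient as $p(u)^{-1}\even{v}{\frac{p(v)}{u-v}\bee{i}{v}}\,R^t_{\at_ia}(2u+\rho)\,Q_{a_ia}$. These look different both in their scalar weights and in carrying $Q_{\at_ia}Q_{a_ia}$ versus $R^t_{\at_ia}(2u+\rho)Q_{a_ia}$. To reconcile them I would expand $R^t_{\at_ia}(2u+\rho)\,Q_{a_ia}=Q_{a_ia}-(2u+\rho)^{-1}Q_{\at_ia}Q_{a_ia}$ and then invoke the symmetry identity \eqref{BPQ} (with arguments $v$ and $-v-\rho$) to turn every $\beta\,Q_{a_ia}$ into $\beta\,Q_{\at_ia}Q_{a_ia}$. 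After this substitution both sides are operators of the shape $(\ast\,\bee{i}{v}+\ast\,\bee{i}{-v-\rho})\,Q_{\at_ia}Q_{a_ia}$, and since $\even{v}{\cdot}$ renders both sides invariant under $v\mapsto -v-\rho$ it suffices to match the coefficient of $\bee{i}{v}$. Cancelling a common factor $p(v)$ and setting $a=2u+\rho$, $b=2v+\rho$ (so $u-v=\frac{a-b}2$ and $u+v+\rho=\frac{a+b}2$), this collapses to the scalar partial-fraction identity
\equ{
\frac{2}{a(a+b)} = \frac{2}{b(a-b)} - \frac{2}{b(a+b)} - \frac{2}{a(a-b)},
}
used together with $p(u)=1\pm(2u+\rho)^{-1}$ and $p(-v-\rho)=1\mp(2v+\rho)^{-1}$.

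I expect this last reconciliation to be the main obstacle. It is the only place where the symmetry relation \eqref{symmB}, repackaged as \eqref{BPQ}, enters in an essential way, and it demands careful bookkeeping of the orthogonal/symplectic signs $\pm,\mp$ and of the accompanying rational factors of $p$; the equality of the two forms of the $A_a(-u-\rho)$ term is simply not manifest without \eqref{BPQ}. By contrast, the wanted term and the $A_a(u)$ term fall out immediately once the symmetriser $\even{v}{p(v)\,\cdot}$ is applied.
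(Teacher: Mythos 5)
Your proposal is correct and follows essentially the same route as the paper's proof: symmetrise \eqref{ABetaHalf} with $p(v)$, evaluate the residue in \eqref{ABetafull} by splitting it into the two summands of the symmetriser (your residue formula and the immediate matching of the $A_a(u)$ term agree with the paper's $U_+$/$U_-$ decomposition), and reconcile the $A_a(-u-\rho)$ term by expanding $R^t_{\at_ia}(2u+\rho)Q_{a_ia}$ and invoking \eqref{BPQ}, with your partial-fraction identity in $a=2u+\rho$, $b=2v+\rho$ being exactly the scalar simplification the paper carries out to produce the factor $\pm p(u)/(u+v+\rho)$.
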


\begin{proof}  
Starting from \eqref{ABetaHalf}, multiplying by $p(v)$ and symmetrising using \eqref{v-even}, we obtain
\ali{ \label{AB_have} 
\even{v}{p(v)\,A_a(v)} \bee{i}{u} & = \bee{i}{u} \even{v}{R^t_{\at_ia}(u-v)\, R^t_{a_ia}(-u-v-\rho)\, p(v)\, A_a(v)} \el
& \qu + \even{v}{\frac{p(v)}{u-v} \, \bee{i}{v}} Q_{\at_ia}\, R^t_{a_ia}(-2u-\rho)\, A_a(u) 
\el
& \qu \mp p(-u-\rho)\even{v}{\frac{p(v)}{u+v+\rho}\, \bee{i}{v}}  Q_{\at_ia}\, Q_{a_ia}\, A_a(-u-\rho) .
}
We will show that this is equivalent to \eqref{ABetafull} term by term, separating the terms by the parameter carried by $A_a( \cdot )$. Note that the term containing $A_a(v)$ is already the same in both \eqref{ABetafull} and \eqref{AB_have}. For the remaining terms, containing $A_a(u)$ and $A_a(-u-\rho)$, we will work backwards from \eqref{ABetafull}. Let
\[
U = \frac1{p(u)} \even{v}{ \frac{p(v)}{u-v} \,\bee{i}{v}} \Res{w \rightarrow u} \Big[ \even{w}{ p(w) \,R^t_{\at_ia}(u-w)\, R^t_{a_ia}(-u-w-\rho)\, A_a(w) } \Big].
\]
Furthermore, expand the symmetriser inside the residue so that $U = U_+ + U_-$, where
\aln{
U_+ &= \frac1{p(u)} \even{v}{ \frac{p(v)}{u-v} \,\bee{i}{v}} \Res{w \rightarrow u} \Big[ p(w) \,R^t_{\at_ia}(u-w)\, R^t_{a_ia}(-u-w-\rho)\, A_a(w) \Big],
\\
U_- &= \frac1{p(u)} \even{v}{ \frac{p(v)}{u-v} \,\bee{i}{v}} \Res{w \rightarrow u} \Big[ p(-w-\rho) \,R^t_{\at_ia}(u+w+\rho)\, R^t_{a_ia}(w-u)\, A_a(-w-\rho) \Big].
}
Focussing first on $U_+$, we evaluate the residue to obtain
\[
U_+ = \even{v}{ \frac{p(v)}{u-v} \,\bee{i}{v}} Q_{\at_ia}\, R^t_{a_ia}(-2u-\rho)\, A_a(u) .
\]
This now matches the term containing $A_a(u)$ in \eqref{AB_have}. It remains to show that $U_-$ is equal to the term containing $A_a(-u-\rho)$ in \eqref{AB_have}. Again evaluating the residue, we obtain
\aln{
U_- &= - \even{v}{ \frac{p(v)}{u-v} \,\bee{i}{v}} \frac{p(-u-\rho)}{p(u)} \, R^t_{\at_ia}(2u+\rho)\, Q_{a_ia}\, A_a(-u-\rho) 
\\
&= - \even{v}{ \frac{p(v)}{u-v} \bigg(\bee{i}{v}\,Q_{a_ia} -\bee{i}{v}\,\frac{Q_{\at_i a}Q_{a_ia}}{2u+\rho} \bigg)} \frac{p(-u-\rho)}{p(u)}\, A_a(-u-\rho) .
}
We now apply the symmetry relation \eqref{BPQ}, so
\[
U_- = - \even{v}{ \frac{p(v)}{u-v} \bigg(\pm p(-v-\rho)\, \bee{i}{-v-\rho}+ \frac{\bee{i}{v}}{2v+\rho}\,  -\frac{\bee{i}{v}}{2u+\rho} \bigg)} \frac{p(-u-\rho)}{p(u)}\,  Q_{\at_1 a}\,Q_{a_ia} \, A_a(-u-\rho).
\]
Since it lies within the symmetriser, the term containing $\bee{i}{-v-\rho}$ can be rewritten in terms of $\bee{i}{v}$ to obtain
\[
U_- = - \even{v}{ \bigg(\pm \frac{p(-v-\rho)}{u+v+\rho} + \frac{1}{u-v}\bigg(\frac1{2v+\rho} - \frac1{2u+\rho}\bigg) \bigg) \, p(v) \,\bee{i}{v}} \frac{p(-u-\rho)}{p(u)} \, Q_{\at_i a}\,Q_{a_ia} \, A_a(-u-\rho).
\]
All that remains are algebraic manipulations:
\aln{
U_- &= -  \even{v}{ 
\bigg(\pm \frac{p(-v-\rho)}{u+v+\rho} +\frac2{(2v+\rho)(2u+\rho)} \bigg) \, p(v) \,\bee{i}{v} }  \frac{p(-u-\rho)}{p(u)}\, Q_{\at_1 a}\,Q_{a_ia} \, A_a(-u-\rho)
\\
&= - \even{v}{ 
\bigg(\pm 1 -\frac1{2v+\rho} +\frac{2u+2v+2\rho}{(2v+\rho)(2u+\rho)} \bigg) \, \frac{p(v)}{u+v+\rho} \,\bee{i}{v} } \frac{p(-u-\rho)}{p(u)} \, Q_{\at_1 a}\,Q_{a_ia} \, A_a(-u-\rho)
\\
&= - \even{v}{ \bigg(\pm 1  +\frac1{2u+\rho} \bigg) \, \frac{p(v)}{u+v+\rho} \,\bee{i}{v}} \frac{p(-u-\rho)}{p(u)} \, Q_{\at_1 a}\,Q_{a_ia} \, A_a(-u-\rho)
\\
&= \mp p(-u-\rho) \even{v}{ 
\frac{p(v)}{u+v+\rho} \,\bee{i}{v} } Q_{\at_1 a}Q_{a_ia} \, A_a(-u-\rho) .
}
This matches the term containing $A_a(-u-\rho)$ in \eqref{AB_have} and completes the proof.
\end{proof}


\subsection{Creation operator for multiple excitations}

The next step is to generalize the creation operator $\beta$ defined in \eqref{beta} for multiple excitation. Choose $m \in \N$, the excitation number, and consider the tensor product space $W = V_{\at_1} \ot \cdots \ot V_{\at_m} \ot V_{a_1} \ot \cdots \ot V_{a_m}$. Denote its dual by $W^* = V^*_{\at_1} \ot \cdots \ot V^*_{\at_m} \ot V^*_{a_1} \ot \cdots \ot V^*_{a_m}$ and introduce an $m$-tuple of formal parameters $\bm u = (u_1,u_2,\dots,u_m)$.

\begin{defn} \label{D:m-beta}
The creation operator for $m$ excitations is given in terms of the ordered product  of $\beta$ operators and $R$-matrices (cf., \eqref{orderprod}):
\ali{ 
\bethe{m}{\bm u} &= \prod_{i=1}^m \bigg( \bee{i}{u_i} \prod_{j=i-1}^1 R_{a_j \at_i}(-u_j-u_i-\rho) \bigg) \el
& \in W^* \ot Y^\pm(\mfgl_{2n}) [u_1,\dots,u_m][[u^{-1}_1,\dots,u^{-1}_m]].\label{topB}
}
\end{defn}

Note that the creation operator for $m$ excitations satisfies the following recursive relation
\equ{ 
\bethe{m}{\bm u} = \bethe{m-1}{u_1, \dots, u_{m-1}}\, \bee{m}{u_m}\prod_{j=m-1}^1 R_{a_j \at_m}(-u_j-u_m-\rho). \label{beta-rec}
}

Given $i\in \{1,\dots,m-1\}$ denote by $\bm u_{i \leftrightarrow i+1}$ the $m$-tuple obtained from $\bm u$ by interchanging its $i$-th and $(i+1)$-th entries, namely
\equ{
\bm u_{i \leftrightarrow i+1} = (u_1,u_2, \ldots , u_{i-1},u_{i+1},u_i,u_{i+2},\ldots,u_m). \label{uii}
}
The Lemma below states a relation between the operators $\bethe{m}{\bm u}$ and $\bethe{m}{\bm u_{i \leftrightarrow i+1}}$ that will assist us in obtaining the explicit expressions of the so-called ``unwanted terms'' in Section \ref{sec:UWT}.

\begin{lemma} \label{L:B=BaReR}
The following identity holds:
\equ{ 
\bethe{m}{\bm u} = \bethe{m}{\bm u_{i \leftrightarrow i+1}}\, \check{R}_{a_i a_{i+1}}(u_i-u_{i+1})\, \check{R}^{-1}_{\at_i \at_{i+1}}(u_i-u_{i+1}) \label{B=BaReR}
}
for $1 \leq i \leq m-1$. 
\end{lemma}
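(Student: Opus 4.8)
The plan is to reduce the claim to the two--excitation exchange relation \eqref{betabeta} and then to transport the two $\check{R}$--factors it produces to the far right, through the remaining $R$--matrix ``dressing'', by means of the Yang--Baxter equation. The only algebraic input beyond \eqref{betabeta} is the check form of the Yang--Baxter equation, obtained from the quantum Yang--Baxter equation by multiplying on the left by $P_{12}$ and using the intertwining relations $P_{12}R_{13}=R_{23}P_{12}$ and $P_{12}R_{23}=R_{13}P_{12}$,
\[
\check{R}_{12}(u-v)\,R_{13}(u-w)\,R_{23}(v-w) = R_{13}(v-w)\,R_{23}(u-w)\,\check{R}_{12}(u-v),
\]
together with its immediate $\check{R}^{-1}$ counterpart, obtained by conjugating both sides by $\check{R}_{12}(u-v)$ (no scalar factors appear).

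First I would write the ordered product \eqref{topB} as $\bethe{m}{\bm u}=\Phi\cdot\Xi\cdot\Psi$, where $\Phi$ collects the factors $1,\dots,i-1$, the block $\Xi$ consists of factors $i$ and $i+1$, and $\Psi$ collects the factors $i+2,\dots,m$. Since $\Phi$ carries neither $u_i$ nor $u_{i+1}$, it is untouched throughout. As operators with disjoint auxiliary labels commute, inside $\Xi$ I can move $\bee{i+1}{u_{i+1}}$ leftwards past $\prod_{j=i-1}^{1}R_{a_j\at_i}$ and then move that product rightwards past $R_{a_i\at_{i+1}}(-u_i-u_{i+1}-\rho)$, so that $\Xi$ acquires the form $\bee{i}{u_i}\,\bee{i+1}{u_{i+1}}\,R_{a_i\at_{i+1}}(-u_i-u_{i+1}-\rho)$ followed by the leftover dressing. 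Applying \eqref{betabeta}, rearranged exactly as in the case $m=2$, then interchanges $u_i\leftrightarrow u_{i+1}$ in the two leading creation operators and generates the factor $\check{R}_{a_ia_{i+1}}(u_i-u_{i+1})\,\check{R}^{-1}_{\at_i\at_{i+1}}(u_i-u_{i+1})$.

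It remains to transport these two $\check{R}$--factors past all surviving dressing. The factor $\check{R}^{-1}_{\at_i\at_{i+1}}$ carries the labels $\at_i,\at_{i+1}$, which occur only inside $\Xi$; there (after reordering disjointly--supported $R$--matrices so that same-$j$ pairs become adjacent) it is pushed through the pairs $R_{a_j\at_i}(-u_j-u_i-\rho)\,R_{a_j\at_{i+1}}(-u_j-u_{i+1}-\rho)$, the $\check{R}^{-1}$ form of the Yang--Baxter equation converting each into $R_{a_j\at_i}(-u_j-u_{i+1}-\rho)\,R_{a_j\at_{i+1}}(-u_j-u_i-\rho)$, and past $\Psi$ it commutes freely and slides to the far right. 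The factor $\check{R}_{a_ia_{i+1}}$ commutes with everything carrying neither label $a_i$ nor $a_{i+1}$ — in particular with the $\Xi$--dressing and with $\check{R}^{-1}_{\at_i\at_{i+1}}$ — and enters $\Psi$, where in each factor $k>i+1$ it meets the adjacent pair $R_{a_{i+1}\at_k}(-u_{i+1}-u_k-\rho)\,R_{a_i\at_k}(-u_i-u_k-\rho)$ and the Yang--Baxter equation moves it through, replacing that pair by $R_{a_{i+1}\at_k}(-u_i-u_k-\rho)\,R_{a_i\at_k}(-u_{i+1}-u_k-\rho)$. In every instance the net effect is the interchange $u_i\leftrightarrow u_{i+1}$, so the swapped dressing together with the already swapped leading $\beta$'s reassemble precisely the blocks of $\bethe{m}{\bm u_{i\leftrightarrow i+1}}$, and the two $\check{R}$--factors end at the far right in the stated order (they act on disjoint spaces).

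The routine commutations of disjointly--supported operators are immediate, so the real work is bookkeeping, and I expect this parameter--tracking, rather than any conceptual difficulty, to be the main obstacle: one must verify that every spectral argument generated by a Yang--Baxter move is exactly the one demanded by $\bm u_{i\leftrightarrow i+1}$, and that no spurious scalar is produced. The latter is automatic here because each move is a genuine equality of $R$--matrices rather than a reflection--type identity, so only the transposition of the arguments $u_i$ and $u_{i+1}$ is recorded.
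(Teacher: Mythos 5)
Your argument is correct and rests on the same two ingredients as the paper's own proof: the two-excitation relation \eqref{betabeta} applied to the block of factors $i$ and $i+1$, followed by transporting the resulting $\check{R}_{a_ia_{i+1}}$ and $\check{R}^{-1}_{\at_i\at_{i+1}}$ through the remaining $R$-matrix dressing via the braided Yang--Baxter identities (the same two variants the paper invokes). The only difference is organizational --- the paper packages the bookkeeping as an induction on $m$, treating $i<m-1$ and $i=m-1$ separately, whereas you do all the pushing in one direct pass; your parameter-tracking and commutation checks are sound.
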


\begin{proof} 
We use induction on $m$, with the basis case provided by \eqref{betabeta}.
Assume the result holds for $m-1$ excitations. There are two cases to consider, depending on the spaces $a_i, a_{i+1}$ on which $R_{a_i a_{i+1}}(u_i-u_{i+1})$ acts nontrivially. Consider first the case where $i < m-1$ and use the recursive relation \eqref{beta-rec}:
\aln{
\bethe{m}{\bm u} &= \bethe{m-1}{u_1, \dots, u_{m-1}} \, \bee{m}{u_m} \prod_{j=m-1}^1 R_{a_j \at_m}(-u_j-u_m-\rho) \\
& = \bethe{m-1}{u_1, \dots, u_{i+1},u_i, \dots, u_{m-1}} \,\check{R}_{a_i a_{i+1}}(u_i-u_{i+1})  \\
& \qu \times \check{R}^{-1}_{\at_i \at_{i+1}}(u_i-u_{i+1}) \, \bee{m}{u_m} \prod_{j=m-1}^1 R_{a_j \at_m}(-u_j-u_m-\rho).
}
Notice that the matrix $\check{R}^{-1}_{\at_i \at_{i+1}}(u_i-u_{i+1})$ commutes with all matrices to the right of it, so it can be moved to the very right. The matrix $\check{R}_{a_i a_{i+1}}(u_i-u_{i+1})$ may be moved through the product of $R$-matrices using the (braided) Yang-Baxter equation:
\eqn{
&\check{R}_{a_i a_{i+1}}(u_i-u_{i+1})R_{a_{i+1} \at_m}(-u_{i+1}-u_m-\rho) R_{a_i \at_m}(-u_i-u_m-\rho) 
\\
& \qq = R_{a_{i+1} \at_m}(-u_i-u_m-\rho) R_{a_i \at_m}(-u_{i+1}-u_m-\rho) \check{R}_{a_i a_{i+1}}(u_i-u_{i+1}).
}
This then gives \eqref{B=BaReR} for $i<m-1$. For $i=m-1$, we factorise the excitations as follows:
\aln{
\bethe{m}{\bm u} & = \bethe{m-2}{u_1, \dots, u_{m-2}} \, \beta_{\at_{m-1} a_{m-1} \at_m a_m}(u_{m-1},u_m) \\ 
& \qu \times \prod_{j=m-2}^1 \Big( R_{a_j \at_{m-1}}(-u_j-u_{m-1}-\rho)\, R_{a_j \at_m}(-u_j-u_m-\rho) \Big) 
\\
& = \bethe{m-2}{u_1, \dots, u_{m-2}} \, \beta_{\at_{m-1} a_{m-1} \at_m a_m}(u_m,u_{m-1}) \\[.5em] 
& \qu \times\check{R}_{a_{m-1} a_m}(u_{m-1}-u_m) \, \check{R}^{-1}_{\at_{m-1} \at_m}(u_{m-1}-u_m) \\
& \qu \times \prod_{j=m-2}^1 \Big( R_{a_j \at_{m-1}}(-u_j-u_{m-1}-\rho)\, R_{a_j \at_m}(-u_j-u_m-\rho) \Big).
}
The matrix $\check{R}^{-1}_{\at_{m-1} \at_m}(u_{m-1}-u_m)$ may be moved through the product of $R$-matrices using another variant of the Yang-Baxter equation,
\aln{
&\check{R}^{-1}_{\at_{m-1} \at_m}(u_{m-1}-u_m)\, R_{a_j \at_{m-1}}(-u_j-u_{m-1}-\rho)\, R_{a_j \at_m}(-u_j-u_m-\rho) \\
& \qq = R_{a_j \at_{m-1}}(-u_j-u_m-\rho)\, R_{a_j \at_m}(-u_j-u_{m-1}-\rho)\, \check{R}^{-1}_{\at_{m-1} \at_m}(u_{m-1}-u_m).
}
Then, rearranging the commuting matrices in the expression, we reconstruct the full excitation vector and arrive at \eqref{B=BaReR} for $i=m-1$. This completes the induction.
\end{proof}

\begin{rmk} \label{R:B=BaReR}
By definition, the operator $\bethe{m}{\bm u}$ in \eqref{topB} is an analogue of the fused boundary operator of Theorem 4.1 in \cite{BaRe} for the twisted reflection equation. More precisely, it is a solution to a fused analogue of twisted reflection equation \eqref{BB} in the sense of \emph{loc.~cit.} 
\end{rmk}


\subsection{The AB exchange relation for multiple excitations}

We want to move $\even{v}{p(v)\,A_a(v)}$ through the operator $\bethe{m}{\bm u}$. Each time $\even{v}{p(v)\,A_a(v)}$ is moved through one of the excitations $\bee{i}{u_i}$ using \eqref{ABetaHalf}, we obtain a term, where the parameter $v$ of $\even{v}{p(v)\,A_a(v)}$ is unchanged. We will call this term the \emph{wanted term}. All the additional terms will be called the \emph{unwanted terms}; we will denote them by $UWT$ and consider their exact form in Section~\ref{sec:UWT}. Focussing on the wanted term at each step, $\even{v}{p(v)\,A_a(v)}$ accrues $R$-matrices as it moves through the excitations. In the following lemma, we will show that these $R$-matrices may be moved through those appearing in the operator $\bethe{m}{\bm u}$. 

\begin{lemma} \label{L:movement}
The following exchange relation holds
\aln{
&\Bigg( \prod_{k=1}^{i-1} R^t_{\at_k a}(u_k-v) \Bigg)\Bigg( \prod_{l=1}^{i-1} R^t_{a_l a}(-u_l-v-\rho) \Bigg) A_a(v) \,\bee{i}{u_i} \prod_{j=i-1}^1 R_{a_j \at_i}(-u_j-u_i-\rho) \\
& \qu\;\; = \bee{i}{u_i} \,\Bigg( \prod_{j=i-1}^1 R_{a_j \at_i}(-u_j-u_i-\rho) \Bigg) \Bigg( \prod_{k=1}^{i} R^t_{\at_k a}(u_k-v) \Bigg) \Bigg( \prod_{l=1}^{i} R^t_{a_l a}(-u_l-v-\rho) \Bigg) A_a(v)  +UWT.
}
\end{lemma}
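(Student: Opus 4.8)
The plan is to reduce the statement to a single use of the one-excitation relation \eqref{ABetaHalf} followed by a purely auxiliary-space ($R$-matrix) identity. First I would apply Lemma~\ref{L:ABetaHalf} to the adjacent pair $A_a(v)\,\bee{i}{u_i}$ on the left-hand side. The first line of \eqref{ABetaHalf}, in which the spectral parameter of $A_a$ is unchanged, yields (after substitution) the candidate \emph{wanted} term; the two terms carrying $A_a(u_i)$ and $A_a(-u_i-\rho)$ are collected into $UWT$, whose explicit form is not needed here and is treated in Section~\ref{sec:UWT}. It therefore suffices to process the wanted term.

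In the wanted term $\bee{i}{u_i}$ sits to the right of $\prod_{k=1}^{i-1}R^t_{\at_k a}(u_k-v)$ and $\prod_{l=1}^{i-1}R^t_{a_l a}(-u_l-v-\rho)$, while $A_a(v)$ sits to the left of $\prod_{j=i-1}^1 R_{a_j\at_i}(-u_j-u_i-\rho)$. Since $\bee{i}{u_i}$ occupies only the auxiliary spaces $V_{\at_i}\ot V_{a_i}$, it commutes with those accumulated $R$-matrices (which live on $V_{\at_k}\ot V_a$ and $V_{a_l}\ot V_a$, $k,l<i$), so I move it to the far left; likewise $A_a(v)$, an operator carrying only the space $V_a$, commutes with every $R_{a_j\at_i}$ and is moved to the far right. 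Writing $\mathcal{X}=\prod_{k=1}^{i-1}R^t_{\at_k a}(u_k-v)$, $\mathcal{Y}=\prod_{l=1}^{i-1}R^t_{a_l a}(-u_l-v-\rho)$, $\mathcal{Z}=\prod_{j=i-1}^1 R_{a_j\at_i}(-u_j-u_i-\rho)$, $X_i=R^t_{\at_i a}(u_i-v)$ and $Y_i=R^t_{a_i a}(-u_i-v-\rho)$, the claim then reduces to the $c$-number identity
\[
\mathcal{X}\,\mathcal{Y}\,X_i\,Y_i\,\mathcal{Z} = \mathcal{Z}\,\mathcal{X}\,X_i\,\mathcal{Y}\,Y_i .
\]

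Next I would strip the factors that commute by disjointness of spaces. As $\mathcal{Z}$ (on $V_{a_1}\ot\cdots\ot V_{a_{i-1}}\ot V_{\at_i}$) is disjoint from $\mathcal{X}$ and from $Y_i$ (on $V_{a_i}\ot V_a$), one has $\mathcal{Z}\mathcal{X}=\mathcal{X}\mathcal{Z}$ and $Y_i\mathcal{Z}=\mathcal{Z}Y_i$; cancelling the common outermost factors $\mathcal{X}$ (on the left) and $Y_i$ (on the right) reduces the identity to the \emph{fused mixed Yang--Baxter relation} $\mathcal{Y}\,X_i\,\mathcal{Z} = \mathcal{Z}\,X_i\,\mathcal{Y}$. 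Its elementary building block is the three-space relation on $V_{a_j}\ot V_{\at_i}\ot V_a$,
\[
R_{a_j\at_i}(-u_j-u_i-\rho)\,R^t_{\at_i a}(u_i-v)\,R^t_{a_j a}(-u_j-v-\rho)
= R^t_{a_j a}(-u_j-v-\rho)\,R^t_{\at_i a}(u_i-v)\,R_{a_j\at_i}(-u_j-u_i-\rho),
\]
which holds precisely because the arguments satisfy $(-u_j-v-\rho)=(-u_j-u_i-\rho)+(u_i-v)$; it follows from the Yang--Baxter equation together with the crossing identities $P_{12}Q_{13}=Q_{23}Q_{13}$, $Q_{13}P_{12}=Q_{13}Q_{23}$ and $Q_{13}Q_{23}Q_{13}=Q_{13}$ for the operators $P$ and $Q=P^{t_1}$ on $\C^n\ot\C^n$. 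Since the reduced $R^t(u)=I-u^{-1}Q$ and $Q$ are of orthogonal type in both the orthogonal and symplectic cases (as noted below \eqref{R:new}), this building block holds uniformly.

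Granting the building block, the fused relation follows by induction on $i$: the base case $i=2$ is the elementary relation itself, and for the step I apply it to the adjacent triple $Y_{i-1}X_iZ_{i-1}=Z_{i-1}X_iY_{i-1}$, then carry $Z_{i-1}$ to the left past $Y_1\cdots Y_{i-2}$ and $Y_{i-1}$ to the right past $Z_{i-2}\cdots Z_1$ (these factors being disjoint, hence commuting), and close with the induction hypothesis on the shorter products. The main obstacle will be the bookkeeping forced by the space $V_a$: every factor in $\mathcal{X},\mathcal{Y},X_i,Y_i$ acts on $V_a$, so none may be permuted past another, and the reduction must be arranged so that only genuinely disjoint factors are ever commuted. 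The delicate point on which everything turns is obtaining the elementary mixed Yang--Baxter relation with exactly these arguments, i.e.\ verifying the additive constraint on the spectral parameters together with the $P$--$Q$ crossing identities.
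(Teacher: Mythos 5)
Your proposal is correct and follows essentially the same route as the paper's proof: a single application of \eqref{ABetaHalf} to isolate the wanted term, commutation of $\bee{i}{u_i}$ and $A_a(v)$ past the factors acting on disjoint auxiliary spaces, and an inductive use of the (partially transposed) Yang--Baxter relation $R_{a_j\at_i}\,R^t_{\at_i a}\,R^t_{a_j a}=R^t_{a_j a}\,R^t_{\at_i a}\,R_{a_j\at_i}$ with the additive spectral-parameter constraint to reorder the central product of $R$-matrices. Your reduction to a pure $c$-number identity with the common factors stripped, and your justification of the elementary three-space relation, are slightly more explicit than the paper's presentation but mathematically identical.
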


\begin{proof} 
We begin by using \eqref{ABetaHalf} and focus on the wanted terms only:
\aln{ 
&\Bigg( \prod_{k=1}^{i-1} R^t_{\at_k a}(u_k-v) \Bigg)\Bigg( \prod_{l=1}^{i-1} R^t_{a_l a}(-u_l-v-\rho) \Bigg) A_a(v)\, \bee{i}{u_i} \prod_{j=i-1}^1 R_{a_j \at_i}(-u_j-u_i-\rho) \\
& \qq = \Bigg( \prod_{k=1}^{i-1} R^t_{\at_k a}(u_k-v) \Bigg)\Bigg( \prod_{l=1}^{i-1} R^t_{a_l a}(-u_l-v-\rho) \Bigg)  \bee{i}{u_i} \, R^t_{\at_i a}(u_i-v) \\
& \qq\qu \times  R^t_{a_i a}(u_i-v) A_a(v) \prod_{j=i-1}^1 R_{a_j \at_i}(-u_j-u_i-\rho) +UWT 
}
yielding
\aln{
& \bee{i}{u_i} \Bigg( \prod_{k=1}^{i-1} R^t_{\at_k a}(u_k-v) \Bigg)\Bigg( \prod_{l=1}^{i-1} R^t_{a_l a}(-u_l-v-\rho) \Bigg) R^t_{\at_i a}(u_i-v) \\
& \hspace{4.5cm} \times \Bigg( \prod_{j=i-1}^1 R_{a_j \at_i}(-u_j-u_i-\rho) \Bigg) R^t_{a_i a}(u_i-v) A_a(v) +UWT.
}
All that remains is to rearrange the product of $R$-matrices in the centre of the expression. The matrices can be reordered using the Yang-Baxter equation
\aln{
& R^t_{a_{i-1} a}(-u_{i-1}-v-\rho) \, R^t_{\at_i a}(u_i-v) \, R_{a_{i-1} \at_i}(-u_{i-1}-u_i-\rho) 
\\ & \qq = R_{a_{i-1} \at_i}(-u_{i-1}-u_i-\rho) \, R^t_{\at_i a}(u_i-v) \, R^t_{a_{i-1} a}(-u_{i-1}-v-\rho) .
}
Thus the product of $R$-matrices becomes
\aln{
& \Bigg( \prod_{l=1}^{i-1} R^t_{a_l a}(-u_l-v-\rho) \Bigg) R^t_{\at_i a}(u_i-v) \Bigg( \prod_{j=i-1}^1 R_{a_j \at_i}(-u_j-u_i-\rho) \Bigg) \\
& \qq = \Bigg( \prod_{l=1}^{i-2} R^t_{a_l a}(-u_l-v-\rho) \Bigg) R_{a_{i-1} \at_i}(-u_{i-1}-u_i-\rho) R^t_{\at_i a}(u_i-v) \\
& \qq\qu \times R^t_{a_{i-1} a}(-u_{i-1}-v-\rho) \Bigg( \prod_{j=i-2}^1 R_{a_j \at_i}(-u_j-u_i-\rho) \Bigg)  \\
& \qq = R_{a_{i-1} \at_i}(-u_{i-1}-u_i-\rho) \Bigg( \prod_{l=1}^{i-2} R^t_{a_l a}(-u_l-v-\rho) \Bigg) R^t_{\at_i a}(u_i-v) \\
& \qq\qu \times \Bigg( \prod_{j=i-2}^1 R_{a_j \at_i}(-u_j-u_i-\rho) \Bigg) R^t_{a_{i-1} a}(-u_{i-1}-v-\rho) \\
& \qq =\Bigg( \prod_{j=i-1}^1 R_{a_j \at_i}(-u_j-u_i-\rho) \Bigg) R^t_{\at_i a}(u_i-v) \Bigg( \prod_{l=1}^{i-1} R^t_{a_l a}(-u_l-v-\rho) \Bigg),
}
where the last equality is achieved by inductively applying the same argument. Putting this together, and noting that the $R^t$-matrices all commute with the $R$-matrices, we arrive to
\aln{
&\Bigg( \prod_{k=1}^{i-1} R^t_{\at_k a}(u_k-v) \Bigg)\Bigg( \prod_{l=1}^{i-1} R^t_{a_l a}(-u_l-v-\rho) \Bigg) A_a(v) \, \bee{i}{u_i} \prod_{j=i-1}^1 R_{a_j \at_i}(-u_j-u_i-\rho) \\
& \qq = \bee{i}{u_i} \Bigg( \prod_{j=i-1}^1 R_{a_j \at_i}(-u_j-u_i-\rho) \Bigg) \Bigg( \prod_{k=1}^{i} R^t_{\at_k a}(u_k-v) \Bigg) \Bigg( \prod_{l=1}^{i} R^t_{a_l a}(-u_l-v-\rho) \Bigg) A_a(v)  + UWT
}
as required. 
\end{proof}

Applying this result to the product of $m$ such excitations in \eqref{topB} yields
\[
A_a(v)\,\bethe{m}{\bm u} = \bethe{m}{\bm u} \Bigg( \prod_{k=1}^{m} R^t_{\at_k a}(u_k-v) \Bigg) \Bigg( \prod_{l=1}^{m} R^t_{a_l a}(-u_l-v-\rho) \Bigg) A_a(v)  +UWT.
\]
We define the matrix on the right side to be the \emph{nested monodromy matrix}, 
\ali{ \label{nMM}
T_a(v;\bm u) := \Bigg( \prod_{k=1}^{m} R^t_{\at_k a}(u_k-v) \Bigg)\Bigg( \prod_{l=1}^{m} R^t_{a_l a}(-u_l-v-\rho) \Bigg) A_a(v).
}
Its matrix entries will be denoted by $t_{ij}(v;\bm u)$. The matrix $T_a(v;\bm u)$ allows us to write the above identity more compactly, 
\[ 
A_a(v)\,\bethe{m}{\bm u} = \bethe{m}{\bm u} \, T_a(v;\bm u) + UWT ,
\]
which leads to the following result.

\begin{crl} \label{C:wantedterm}
The AB exchange relation for the creation operator of multiple excitations has the form
\begin{flalign} 
&& \even{v}{p(v)\,A_a(v)} \bethe{m}{\bm u} = \bethe{m}{\bm u} \even{v}{p(v)\,T_a(v;\bm u)} + UWT. && \nn\qed
\end{flalign}
\end{crl}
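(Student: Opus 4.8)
The plan is to obtain the stated symmetrised relation directly from the compact exchange relation $A_a(v)\,\bethe{m}{\bm u} = \bethe{m}{\bm u} \, T_a(v;\bm u) + UWT$ displayed immediately above the statement, which is itself the outcome of iterating Lemma~\ref{L:movement} across all $m$ excitations appearing in the creation operator \eqref{topB} and collecting the result into the nested monodromy matrix \eqref{nMM}. In other words, the entire analytic content is already encoded in that compact relation; the Corollary merely repackages it after symmetrisation.

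First I would multiply both sides of this compact relation by the scalar function $p(v)$ from \eqref{p(v)}, and then apply the symmetrising operation $\even{v}{\,\cdot\,}$ of \eqref{v-even}, that is, add to the resulting $v$-dependent expression its image under the substitution $v \mapsto -v-\rho$. The key observation is that the creation operator $\bethe{m}{\bm u}$ depends only on the excitation parameters $\bm u=(u_1,\dots,u_m)$ and is entirely independent of the spectral parameter $v$; consequently it is inert under $\even{v}{\,\cdot\,}$ and may be extracted from inside the symmetriser on both sides of the equation.

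Carrying this out, the left-hand side becomes $\even{v}{p(v)\,A_a(v)}\,\bethe{m}{\bm u}$, while the leading term on the right-hand side becomes $\bethe{m}{\bm u}\,\even{v}{p(v)\,T_a(v;\bm u)}$, since among the surviving factors only $p(v)\,T_a(v;\bm u)$ carries $v$-dependence. The symmetrised unwanted terms are reabsorbed into the notation $UWT$, yielding precisely the asserted identity.

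I do not anticipate any genuine obstacle here: the full work lies in Lemma~\ref{L:movement} and its $m$-fold iteration, and the present step is purely formal bookkeeping justified by the $v$-independence of $\bethe{m}{\bm u}$. The only point requiring mild care is to confirm that pulling $\bethe{m}{\bm u}$ through the symmetriser is legitimate on the operator side as well as the scalar side; this is immediate once one notes that $\even{v}{\,\cdot\,}$ acts only on the $v$-argument and hence commutes with both left- and right-multiplication by any $v$-independent operator.
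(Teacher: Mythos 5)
Your proposal is correct and matches the paper's treatment: the paper likewise derives the compact relation $A_a(v)\,\bethe{m}{\bm u} = \bethe{m}{\bm u}\,T_a(v;\bm u) + UWT$ by iterating Lemma~\ref{L:movement}, and then obtains the Corollary by multiplying by $p(v)$ and symmetrising in $v$, which is immediate since $\bethe{m}{\bm u}$ is independent of $v$ (the paper marks the statement with a \emph{qed} and gives no further argument). Your remark that the unwanted terms are simply reabsorbed into the notation $UWT$ is exactly the level of detail the Corollary requires.
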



\subsection{Exchange relations for the nested monodromy matrix} \label{sec:nested}

In this section we introduce a vector space $M'$, called the \emph{nested vacuum sector}, on which the nested monodromy matrix $T(v;\bm u)$ satisfies the usual RTT relation, \emph{viz.}~\eqref{Y:RTT}. This allows us to identify $T(v;\bm u)$ as the monodromy matrix for a residual $Y(\mfgl_n)$-system. The space $M'$ is then interpreted as the full quantum space of this residual system. We start by introducing certain subspaces of the evaluation modules $M(\mu)$ and $L(\la^{(i)})_{c_i}$ that will be building blocks of the space $M'$.

Denote by $M^0(\mu)$ the subspace of the evaluation module $M(\mu)$ of the twisted Yangian $Y^\pm(\mfgl_{2n})$ consisting of vectors annihilated by the operator $C(u)$ of the matrix $S(u)$, namely
\[
M^0(\mu) := \{ \,\zeta \in M(\mu) \,:\, \mr{c}_{ij}(u)\,\zeta = 0 \;\text{ for }\; 1\le i,j \le n \}.  
\]
The subspace $M^0(\mu)$ corresponds to the natural embedding $\mfgl_n\subset \mfg_{2n}$ with $\mfg_{2n}= \mfso_{2n}$ or $\mfsp_{2n}$ (generated by $F_{ij}$ with $1\le i,j \le n$, \emph{viz.}~(\ref{[F,F]}-\ref{F+F=0})) and is an irreducible $\mfgl_n$-representation of the highest weight $\mu=(\mu_1,\ldots,\mu_n)$.
The space $M^0(\mu)$ is stable under the action of the operator $A(u)$ of the matrix $S(u)$. Moreover, $A(u)$ satisfies the usual RTT relation on this space. Indeed, applying equality \eqref{CA} to $M^0(\mu)$ yields $C_1(v)\,A_2(u)\, M^0(\mu) = 0$. Applying \eqref{AA} instead we obtain 
\[
R(u-v)\,A_1(u)\,A_2(v)\,\zeta = A_2(v)\,A_1(u)\,R(u-v)\,\zeta
\]
for all $\zeta \in M^0(\mu)$. We thus have the following.

\begin{lemma} \label{L:M0}
The mapping
\[
Y(\mfgl_n) \to Y^\pm(\mfgl_{2n}), \qu T(u) \mapsto A(u) 
\]
equips the space $M^0(\mu)$ with a structure of a lowest weight $Y(\mfgl_n)$-module with the lowest weight given by~\eqref{mu(u)}. \qed
\end{lemma}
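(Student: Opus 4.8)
The plan is to verify the three defining features of a lowest weight $Y(\mfgl_n)$-module directly on $M^0(\mu)$: that $A(u)$ satisfies the RTT relation \eqref{Y:RTT} on this space, that the lowest vector $\zeta = 1_\mu$ is annihilated by the strictly-lower-triangular entries of $A(u)$, and that $\zeta$ is an eigenvector for the diagonal entries with the eigenvalues prescribed by \eqref{mu(u)}. The RTT relation is essentially already established in the discussion preceding the lemma, so the main content is the identification of the lowest weight and the verification that $1_\mu$ lies in $M^0(\mu)$.

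First I would confirm that $M^0(\mu)$ is a genuine $Y(\mfgl_n)$-module under the stated map. By hypothesis every $\zeta \in M^0(\mu)$ is annihilated by all $\mr{c}_{ij}(u)$. Applying the exchange relation \eqref{CA} to such a $\zeta$ gives $C_1(v)\,A_2(u)\,\zeta = 0$, which shows that $A(u)$ preserves $M^0(\mu)$ (so the module structure is well-defined) and that the inhomogeneous terms in \eqref{AA} vanish on $M^0(\mu)$; the surviving part of \eqref{AA} is exactly the RTT relation $R(u-v)\,A_1(u)\,A_2(v)\,\zeta = A_2(v)\,A_1(u)\,R(u-v)\,\zeta$, as recorded above the lemma. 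Thus the coefficients of the entries of $A(u)$ act on $M^0(\mu)$ as a representation of $Y(\mfgl_n)$.

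Next I would pin down the lowest vector and its weight. Since $M^0(\mu)$ is the irreducible $\mfgl_n$-representation of highest weight $\mu$ realised inside $M(\mu)$ via the embedding $\mfgl_n\subset\mfg_{2n}$, it contains a distinguished vector $1_\mu$; I claim this is the lowest weight vector for the $Y(\mfgl_n)$-action. To see this I would use the block form of $S(u)$ in \eqref{block} together with the factorisation \eqref{S(u).M} of the action of $S(u)$ on $M$, and read off the action of the block $A(u) = [S(u)]_{11}$ on $1_\mu$. Writing $\scA_{ij}(u) = s_{ij}(u)$ for $1\le i,j\le n$, the lowest weight conditions $s_{ij}(u)\,\xi=0$ for $j<i$ restrict to give $\scA_{ij}(u)\,1_\mu = 0$ for $1\le j<i\le n$, while the diagonal action $s_{ii}(u)\,\xi = \mu_i(u)\,\xi$ yields $\scA_{ii}(u)\,1_\mu = \mu_i(u)\,1_\mu$ with $\mu_i(u)$ given by \eqref{mu(u)}. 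Here one must check that $1_\mu$ does indeed lie in $M^0(\mu)$, i.e. is annihilated by every $\mr{c}_{ij}(u)$; this follows because $C(u) = [S(u)]_{21}$ has entries $s_{n+i,j}(u)$ with $n+i>j$, which are strictly-lower-triangular entries of $S(u)$ and hence annihilate the lowest weight vector $\xi = 1_\mu$.

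The step I expect to require the most care is the last one: matching the $Y(\mfgl_n)$-lowest weight with the formula \eqref{mu(u)} and confirming that the $\mfgl_n$-highest weight vector $1_\mu$ of $M^0(\mu)$ coincides, as a $Y^\pm(\mfgl_{2n})$-vector, with the twisted-Yangian lowest weight vector $\xi$. The potential subtlety is the mismatch of conventions between ``highest weight'' for the Lie-algebra module $M(\mu)$ and ``lowest weight'' for the Yangian module structure; I would resolve this by tracking the embedding $U(\mfg_{2n})\into Y^\pm(\mfgl_{2n})$ given by $F_{ij}\mapsto -s^{(1)}_{ji}$ and checking that the generators $F_{ij}$ with $1\le i,j\le n$ (which generate the natural $\mfgl_n\subset\mfg_{2n}$) act through the first-order coefficients of the block $A(u)$. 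Once this identification is in place, the eigenvalues $\mu_i(u)$ of $\scA_{ii}(u)$ on $1_\mu$ are forced to be exactly those in \eqref{mu(u)}, and the remaining module axioms follow from irreducibility of $M^0(\mu)$ as a $\mfgl_n$-representation. \qed
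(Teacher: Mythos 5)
Your proposal is correct and follows essentially the same route as the paper, which proves this lemma in the discussion immediately preceding it: relation \eqref{CA} applied to $M^0(\mu)$ shows that $A(u)$ preserves the subspace and that the $C$-terms vanish, relation \eqref{AA} then reduces to the RTT relation there, and the lowest weight is read off from the restriction of the twisted-Yangian lowest weight conditions to the block $A(u)$. The extra details you supply (that $\xi=1_\mu$ lies in $M^0(\mu)$ because the entries of $C(u)$ are strictly lower-triangular entries of $S(u)$, and the matching of \eqref{mu(u)} via the evaluation action) are exactly what the paper leaves implicit.
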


Note that the operator $A(u)$ of the matrix $S(u)$ acts on the space $M^0(\mu)$ via the Lax operator
\equ{
\mc{L}^{\pm,0}(u) := \sum_{i,j=1}^{n} e_{ij} \ot (\del_{ij} - F_{ji}(u+(\rho\pm1)/2)^{-1}) , \label{blax0}
}
which is the restriction of $\mc{L}^\pm(u)$ defined in \eqref{blax} to the operator $A(u)$.

Next, we denote by $L^0(\la^{(k)})_{c_k}$ the subspace of the evaluation module $L(\la^{(k)})_{c_k}$ of $Y(\mfgl_{2n})$ consisting of vectors annihilated by the operator $\ol{C}(u)$ of the matrix $T(u)$, namely
\equ{
L^0(\la^{(k)})_{c_k} := \{ \,\zeta \in L(\la^{(k)})_{c_k} \,:\,\, \ol{\mr{c}}_{ij}(u)\,\zeta = 0 \;\text{ for }\; 1\le i,j \le n\, \}.  \label{L0}
}
The subspace $L^0(\la^{(k)})_{c_k}$ corresponds to the natural embedding $\mfgl_n \op \mfgl_n\subset \mfgl_{2n}$ (generated by $E_{ij}$ with $1\le i,j \le n$ and $n< i,j \le 2n$) and is isomorphic to a tensor product of irreducible $\mfgl_n$-representations $L(\la^{\prime\,(k)})\ot L(\la^{\prime\prime\,(k)})$ with the highest weights $\la^{\prime\,(k)}=(\la^{(k)}_1,\ldots,\la^{(k)}_n)$ and $\la^{\prime\prime\,(k)}=(\la^{(k)}_{n+1},\ldots,\la^{(k)}_{2n})$.
Indeed, applying equality \eqref{Y:CA} to $L^0(\la^{(k)})_{c_k}$ yields $\ol{C}_1(u)\,\ol{A}_2(v)\, L^0(\la^{(k)})_{c_k} = 0$. Applying \eqref{Y:CD} instead we obtain $\ol{C}_1(u)\,\ol{D}_2(v)\, L^0(\la^{(k)})_{c_k} = 0$. Moreover, applying \eqref{Y:AD} to $L^0(\la^{(i)})_{c_i}$ we get $[ \ol{D}_1(u), \ol{A}_2(v) ]\,L^0(\la^{(k)})_{c_k} = 0$. This, together with \eqref{Y:AA} and \eqref{Y:DD}, implies the following.

\begin{lemma} \label{L:L0}
Each of the mappings 
\[
Y(\mfgl_n) \to Y(\mfgl_{2n}), \qu T(u) \mapsto \ol{A}(u)\qu\text{and}\qu T(u) \mapsto \ol{D}(u) 
\]
is a homomorphism of algebras. Moreover, these mappings equip the spaces $L(\la^{\prime\,(k)})$ and $L(\la^{\prime\prime\,(k)})$ with a structure of a lowest weight $Y(\mfgl_n)$-module with the lowest weight given by $\la^{\prime\,(k)}_k(u) = \la^{(k)}_k(u)$ and $\la^{\prime\prime\,(k)}_k(u) = \la^{(k)}_{n+k}(u)$, respectively, for $1\le k \le n$. \qed
\end{lemma}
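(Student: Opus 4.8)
The whole statement can be extracted from the block exchange relations \eqref{Y:AA}--\eqref{Y:AD} together with the explicit form \eqref{lax} of the Lax operator on a single evaluation module. The homomorphism claim is immediate: relations \eqref{Y:AA} and \eqref{Y:DD} are literally the defining $RTT$-relation \eqref{Y:RTT} of $Y(\mfgl_n)$, written with the $n$-dimensional Yang $R$-matrix $R(u)=I-u^{-1}P$ acting on $V_1\ot V_2\cong\C^n\ot\C^n$. Since $Y(\mfgl_n)$ is presented by this single relation, each of $T(u)\mapsto\ol A(u)$ and $T(u)\mapsto\ol D(u)$ respects all defining relations and therefore extends uniquely to an algebra homomorphism $Y(\mfgl_n)\to Y(\mfgl_{2n})$, as already remarked after \eqref{Y:AD}.

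For the module structure I would first use the defining condition $\ol\scc_{ij}(u)\,\zeta=0$ for $\zeta\in L^0(\la^{(k)})_{c_k}$ to establish stability. Applying \eqref{Y:CA} and \eqref{Y:CD} to such a $\zeta$ kills all the inhomogeneous terms, which carry a factor $\ol C_2(v)\,\zeta=0$, and collapses the remaining terms to $\ol C_1(u)\,\ol A_2(v)\,\zeta=0$ and $\ol C_1(u)\,\ol D_2(v)\,\zeta=0$; reading these off componentwise in the auxiliary spaces $V_1,V_2$ shows that every $\ol\sca_{kl}(v)\,\zeta$ and every $\ol\scd_{kl}(v)\,\zeta$ again lies in $L^0(\la^{(k)})_{c_k}$, so the subspace is stable under both blocks. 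Combined with the homomorphism property this makes $L^0(\la^{(k)})_{c_k}$ into a $Y(\mfgl_n)$-module under each of $\ol A$ and $\ol D$, and \eqref{Y:AD} shows the two actions commute on $L^0(\la^{(k)})_{c_k}$.

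It remains to localize each action onto the correct tensor factor and to compute the weights. On the single evaluation module the generating matrix acts through $\mc L(u-c_k)$ of \eqref{lax}, so its top-left block is $\ol A(u)=\sum_{i,j=1}^n e_{ij}\ot(\del_{ij}-E_{ji}(u-c_k)^{-1})$ and its bottom-right block is $\ol D(u)=\sum_{i,j=1}^n e_{ij}\ot(\del_{ij}-E_{n+j,\,n+i}(u-c_k)^{-1})$. Thus $\ol A(u)$ involves only the generators of the top-left $\mfgl_n$ and $\ol D(u)$ only those of the bottom-right $\mfgl_n$; under the isomorphism $L^0(\la^{(k)})_{c_k}\cong L(\la^{\prime\,(k)})\ot L(\la^{\prime\prime\,(k)})$ of $\mfgl_n\op\mfgl_n$-modules, $\ol A(u)$ acts as the identity on the second factor and $\ol D(u)$ as the identity on the first, each block being precisely the $Y(\mfgl_n)$ evaluation Lax operator for the remaining factor. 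The $\mfgl_{2n}$-highest weight vector $1_{\la^{(k)}}=1_{\la^{\prime\,(k)}}\ot 1_{\la^{\prime\prime\,(k)}}$ lies in $L^0(\la^{(k)})_{c_k}$; since for $j<i$ the lowering entries $\ol\sca_{ij}(u)=-E_{ji}(u-c_k)^{-1}$ and $\ol\scd_{ij}(u)=-E_{n+j,\,n+i}(u-c_k)^{-1}$ translate the $Y(\mfgl_n)$-lowest weight conditions into the $\mfgl_n$-highest weight conditions, it is the $Y(\mfgl_n)$-lowest vector for both actions. Evaluating the diagonal entries $\ol\sca_{ii}(u)\,1_{\la^{(k)}}=(1-\la^{(k)}_i(u-c_k)^{-1})\,1_{\la^{(k)}}$ and $\ol\scd_{ii}(u)\,1_{\la^{(k)}}=(1-\la^{(k)}_{n+i}(u-c_k)^{-1})\,1_{\la^{(k)}}$ and comparing with \eqref{L:la(u)} then yields the stated weights.

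The step carrying the real content is the localization in the third paragraph: one must verify that passing to $L^0(\la^{(k)})_{c_k}$ and to its factorization $L(\la^{\prime\,(k)})\ot L(\la^{\prime\prime\,(k)})$ genuinely decouples the $\ol A$- and $\ol D$-actions onto the two tensor factors. This rests on the blocks $\ol A$ and $\ol D$ involving only top-left, respectively bottom-right, $\mfgl_n$-generators, and on the compatibility of the $\mfgl_n\op\mfgl_n\subset\mfgl_{2n}$ branching with the evaluation action; everything else is index bookkeeping and the translation between Yangian-lowest and Lie-algebra-highest weights.
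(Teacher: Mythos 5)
Your proposal is correct and follows essentially the same route as the paper: the homomorphism property is read off from the block relations \eqref{Y:AA} and \eqref{Y:DD}, stability of $L^0(\la^{(k)})_{c_k}$ comes from applying \eqref{Y:CA} and \eqref{Y:CD} to vectors killed by $\ol{C}(u)$, and the weights are computed from the blocks of the evaluation Lax operator \eqref{lax} acting on the $\mfgl_{2n}$-highest weight vector, exactly as in the discussion surrounding Lemmas \ref{L:L0} and the restricted Lax operator \eqref{lax0}. Your explicit verification of the lowest-weight conditions and the decoupling of the $\ol{A}$- and $\ol{D}$-actions onto the two tensor factors is just the detail the paper leaves implicit.
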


Denote the corresponding $Y(\mfgl_n)$-modules by $L^0(\la^{\prime\,(k)})_{c_k}$ and $L^0(\la^{\prime\prime\,(k)})_{c_k}$, respectively.
The operator $\ol{A}(u)$ of the matrix $T(u)$ of $Y(\mfgl_{2n})$ acts on the space $L^0(\la^{\prime\prime\,(k)})_{c_k}$ as the identity operator, and on the space $L^0(\la^{\prime\,(k)})_{c_k}$ via the restriction of the Lax operator \eqref{lax},
\equ{
\mc{L}^{0}(u-c_k) := \sum_{i,j=1}^{n} e_{ij} \ot (\del_{ij} - E_{ji}(u-c_k)^{-1}) . \label{lax0}
}
Likewise, the operator $\ol{D}{}^t(u)$ of the transposed matrix $T^t(u)$ acts on the space $L^0(\la^{\prime\,(k)})_{c_k}$ as the identity operator, and on the space $L^0(\la^{\prime\prime\,(k)})_{c_k}$ via the transposed Lax operator $(\mc{L}^{0}(u-c_k))^t$.

We are now ready to introduce the \emph{vacuum sector} $M^0\subset M$ by
\eqa{
M^0 &:= L^0(\la^{(1)})_{c_1} \ot \cdots \ot L^0(\la^{(\ell)})_{c_\ell} \ot M^0(\mu) . \label{M0}
} 

\begin{lemma} \label{L:M0-stable}
The space $M^0$ is stable under the action of the operator $A(u)$ of the matrix $S(u)$.
\end{lemma}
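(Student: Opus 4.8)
The plan is to factorise the monodromy and read off the block structure of $A(u)$. By \eqref{S(u).M}, as operators on $M = L \ot M(\mu)$,
\[
S(u) = T(u)\,\mc{L}^\pm(u)\,T^t(-u-\rho),
\]
where $T(u)=\prod_{i=1}^\ell \mc{L}_i(u-c_i)$ is the bulk monodromy acting on $L$ with blocks $\ol{A}(u),\ol{B}(u),\ol{C}(u),\ol{D}(u)$ as in \eqref{block}, and $\mc{L}^\pm(u)$ acts on $M(\mu)$ with blocks I will denote $\mathsf{A}(u),\mathsf{B}(u),\mathsf{C}(u),\mathsf{D}(u)$. Using the block form of the $\theta$-transpose (namely $(T^t)_{11}=\ol{D}^t$, $(T^t)_{12}=\pm\ol{B}^t$, $(T^t)_{21}=\pm\ol{C}^t$, $(T^t)_{22}=\ol{A}^t$, just as for $S^t(u)$) and multiplying out the three $2\times 2$ block matrices, the $(1,1)$ block reads
\begin{align*}
A(u) &= \ol{A}(u)\,\mathsf{A}(u)\,\ol{D}^t(-u-\rho) + \ol{B}(u)\,\mathsf{C}(u)\,\ol{D}^t(-u-\rho) \\
&\qquad \pm \ol{A}(u)\,\mathsf{B}(u)\,\ol{C}^t(-u-\rho) \pm \ol{B}(u)\,\mathsf{D}(u)\,\ol{C}^t(-u-\rho).
\end{align*}

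Next I would record the stability properties of each tensor factor. On the boundary factor, $\mathsf{C}(u)$ annihilates $M^0(\mu)$ by definition and $\mathsf{A}(u)$ preserves $M^0(\mu)$ by Lemma~\ref{L:M0}. On the bulk factor, I would show that the full monodromy operator $\ol{C}(u)$ annihilates $L^0 := L^0(\la^{(1)})_{c_1}\ot\cdots\ot L^0(\la^{(\ell)})_{c_\ell}$ and that $\ol{A}(u),\ol{D}(u)$ preserve $L^0$. This is the tensor-product analogue of Lemma~\ref{L:L0} and follows by induction on $\ell$: splitting the chain into its first site and the rest, the lower-left block of the product of the two corresponding monodromies is a sum of two terms, each ending in a factor $\ol{C}$ acting on one half; by \eqref{Y:CA} and \eqref{Y:CD} the operators $\ol{A}$ and $\ol{D}$ preserve the kernel of a single $\ol{C}$, so the whole expression annihilates $L^0$. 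Since the $\theta$-transpose merely reindexes matrix entries, it follows at once that $\ol{C}^t(-u-\rho)$ annihilates $L^0$ and $\ol{D}^t(-u-\rho)$ preserves it.

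With these facts the conclusion is immediate. Fix $\zeta = \zeta_L \ot \zeta_\mu \in M^0$. The two terms of $A(u)$ ending in $\ol{C}^t(-u-\rho)$ vanish because $\ol{C}^t(-u-\rho)\,\zeta_L = 0$; the term $\ol{B}(u)\,\mathsf{C}(u)\,\ol{D}^t(-u-\rho)$ vanishes because $\ol{D}^t(-u-\rho)$ keeps its $L$-output in $L^0$ while $\mathsf{C}(u)\,\zeta_\mu = 0$. Only $\ol{A}(u)\,\mathsf{A}(u)\,\ol{D}^t(-u-\rho)$ survives, and reading its matrix-element operators from right to left: $\ol{D}^t(-u-\rho)$ sends $\zeta_L$ into $L^0$, $\mathsf{A}(u)$ sends $\zeta_\mu$ into $M^0(\mu)$, and $\ol{A}(u)$ keeps the $L$-component in $L^0$. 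Hence $A(u)\,\zeta \in L^0 \ot M^0(\mu) = M^0$, as required. (As a sanity check, the surviving term is precisely the product of Lax operators one expects to build the residual $Y(\mfgl_n)$-monodromy on $M^0$.)

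The main obstacle is the bookkeeping: correctly handling the block $\theta$-transpose so that $\ol{C}^t,\ol{D}^t$ inherit the kernel and stability properties of $\ol{C},\ol{D}$, and — more substantively — establishing the tensor-product statement that the full bulk $\ol{C}(u)$ annihilates $L^0$ while $\ol{A}(u),\ol{D}(u)$ preserve it, since $L^0$ is defined site-by-site whereas these are operators of the full monodromy. Some care is also needed with the non-commuting $L$-operators $\ol{A}(u)$ and $\ol{D}^t(-u-\rho)$ evaluated at the distinct spectral parameters $u$ and $-u-\rho$, though the argument only ever uses that each maps $L^0$ into $L^0$, which sidesteps any ordering issue.
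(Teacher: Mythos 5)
Your argument is correct, but it is organised differently from the paper's. You factorise the full monodromy once, as $S(u)=T(u)\,\mc{L}^\pm(u)\,T^t(-u-\rho)$ with $T(u)$ the entire bulk monodromy, multiply out the three $2\times2$ block matrices, and observe that three of the four terms in the $(1,1)$ block die because either $\ol{C}^t(-u-\rho)$ or $\mathsf{C}(u)$ sits in the right position to kill the vacuum sector; the burden then shifts to the auxiliary claim that the \emph{full} bulk blocks satisfy $\ol{C}(u)L^0=0$ and $\ol{A}(u)L^0,\ol{D}(u)L^0\subseteq L^0$, which you prove by an induction on $\ell$ inside $Y(\mfgl_{2n})$ using the coproduct $T\mapsto T\ot T$ (this is the tensor-product extension of Lemma~\ref{L:L0}, which the paper states only site-by-site). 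The paper instead runs a single induction on $\ell$ directly in the twisted Yangian, peeling one bulk site at a time via the coideal coproduct \eqref{cop-s}; it must first establish $C(u)\cdot M^0=0$ as a separate inductive claim and needs the commutation relation \eqref{Y:CD} to push a $\ol{\mr{c}}$ past a $\ol{\mr{d}}$ within a single site. Your global factorisation avoids both of these (in your ordering $\ol{C}^t$ is already rightmost on the bulk factor, and you never need the statement $C(u)\cdot M^0=0$), at the cost of the extra lemma about the full bulk monodromy; the paper's version proves the $C$-annihilation statement along the way, which is reused elsewhere. Your handling of the $\theta$-transpose — that the entries of $\ol{C}^t$ and $\ol{D}^t$ are, up to signs and index permutation, the same operators as those of $\ol{C}$ and $\ol{D}$, so they inherit the kernel and stability properties — is sound and matches the block form of $S^t(u)$ displayed in Section~\ref{sec:block}.
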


\begin{proof}
We start by showing that operator $C(u)$ of the matrix $S(u)$ annihilates the space $M^0$: $\mr{c}_{ij}(v)\cdot M^0 = 0$. We use induction on $\ell$. For $\ell=0$ we have $M^0 = M^0(\mu)$ and $\mr{c}_{ij}(v)\, M^0(\mu) = 0$, by definition \eqref{M0}. For any $\ell\ge 1$ denote $M^{(\ell-1)} := L^0(\la^{(1)})_{c_1} \ot \cdots \ot L^0(\la^{(\ell-1)})_{c_{\ell-1}} \ot M^0(\mu)$. Let $\zeta \in M^{(\ell-1)}$ and $\zeta' \in L^0(\la^{(\ell)})_{c_{\ell}}$ be any nonzero vectors. Using \eqref{cop-s} and the notation \eqref{block} we find 
\aln{
\mr{c}_{ij}(u)\cdot (\zeta'\ot\zeta) &= \sum_{k,l=1}^{n} \Big( \;\ol{\mr{c}}_{ik}(u)\,\ol{\mr{d}}_{\bar\jmath\,\bar l}(-u-\rho) \, \zeta' \ot \mr{a}_{kl}(u)\cdot\zeta \pm \ol{\mr{c}}_{ik}(u)\,\ol{\mr{c}}_{\bar\jmath\,\bar l}(-u-\rho) \, \zeta' \ot \mr{b}_{kl}(u)\cdot\zeta \\[-.5em]
& \qq\qq + \ol{\mr{d}}_{ik}(u)\,\ol{\mr{d}}_{\bar\jmath\,\bar l}(-u-\rho) \, \zeta' \ot \mr{c}_{kl}(u)\cdot\zeta \pm \ol{\mr{d}}_{ik}(u)\,\ol{\mr{c}}_{\bar\jmath\,\bar l}(-u-\rho) \, \zeta' \ot \mr{d}_{kl}(u)\cdot\zeta\;\Big) 
\\
& = \sum_{k,l=1}^{n} \Big( \;\ol{\mr{c}}_{ik}(u)\,\ol{\mr{d}}_{\bar\jmath\,\bar l}(-u-\rho) \, \zeta' \ot \mr{a}_{kl}(u)\cdot\zeta + \ol{\mr{d}}_{ik}(u)\,\ol{\mr{d}}_{\bar\jmath\,\bar l}(-u-\rho) \, \zeta' \ot \mr{c}_{kl}(u)\cdot\zeta\;\Big) ,
}
by definition \eqref{L0}; here we used the notation $\bar\imath = n-i+1$. Assuming induction, $\mr{c}_{kl}(u)\,\zeta = 0$. Finally, by \eqref{Y:CD} and \eqref{L0}, we have that
\[
\ol{\mr{c}}_{ik}(u)\,\ol{\mr{d}}_{\bar\jmath\,\bar l}(-u-\rho) \, \zeta' = \ol{\mr{d}}_{\bar\jmath\,\bar l}(-u-\rho)\,\ol{\mr{c}}_{ik}(u)\, \zeta' = 0.
\]
Hence $\mr{c}_{ij}(u)\cdot (\zeta'\ot\zeta) = 0$, as required. Next, we need to show that $\mr{a}_{ij}(u)\cdot M^0 \subseteq M^0 [u^{-1}]$. The base case is given by Lemma \ref{L:M0}. For $\ell\ge1$ we have
\aln{
\mr{a}_{ij}(u)\cdot (\zeta'\ot\zeta) &= \sum_{k,l=1}^{n} \Big( \; \ol{\mr{a}}_{ik}(u)\,\ol{\mr{d}}_{\bar\jmath\,\bar l}(-u-\rho) \, \zeta' \ot \mr{a}_{kl}(u)\cdot\zeta  \pm \ol{\mr{a}}_{ik}(u)\,\ol{\mr{c}}_{\bar\jmath\,\bar l}(-u-\rho) \, \zeta' \ot \mr{b}_{kl}(u)\cdot\zeta \\[-.5em]
& \qq\qq + \ol{\mr{b}}_{ik}(u)\,\ol{\mr{d}}_{\bar\jmath\,\bar l}(-u-\rho) \, \zeta' \ot \mr{c}_{kl}(u)\cdot\zeta \pm \ol{\mr{b}}_{ik}(u)\,\ol{\mr{c}}_{\bar\jmath\,\bar l}(-u-\rho) \, \zeta' \ot \mr{d}_{kl}(u)\cdot\zeta \;\Big)
\\
& = \sum_{k,l=1}^{n} \ol{\mr{a}}_{ik}(u)\,\ol{\mr{d}}_{\bar\jmath\,\bar l}(-u-\rho) \, \zeta' \ot \mr{a}_{kl}(u)\cdot\zeta ,
}
by definition \eqref{L0} and the result above. Assuming induction, $\mr{a}_{kl}(u)\cdot \zeta \in M^{(\ell-1)}[u^{-1}]$ and, by Lemma \ref{L:L0}, $ \ol{\mr{a}}_{ik}(u)\,\ol{\mr{d}}_{\bar\jmath\,\bar l}(-u-\rho) \, \zeta' \in L^0(\la^{(\ell)})_{c_{\ell}}[u^{-1}]$. Hence $\mr{a}_{ij}(u)\cdot (\zeta'\ot\zeta) \in M^0 [u^{-1}]$. This proves the claim.
\end{proof}

The last ingredients we will need are the auxiliary spaces $V_{\at_i}$ and $V_{a_i}$. They are vector representations of $\mfgl_{n}$ of weight $\la^{(\at_i)}=\la^{(a_i)}=(1,0,\dots,0)$. Denote by $L^t(\la)_c$ the evaluation module of $Y(\mfgl_n)$ obtained from the $\mfgl_n$-representation $L(\la)$ by composing the evaluation map $ev_c$ in \eqref{ev-hom} with the algebra automorphism $T(u)\to T^t(-u)$. The spaces $V_{\at_i}$ and $V_{a_i}$ can thus be viewed as evaluation modules $L^t(\la^{(\at_i)})_{-u_i}$ and $L^t(\la^{(a_i)})_{u_i}$ of $Y(\mfgl_{n})$, respectively, with the lowest weights given by
\eqg{
\la^{(\at_i)}_j(u) = \la^{(a_i)}_j(u) = 1 \qu\text{for}\qu 1\le j \le n-1 \qu\text{and}\\
\la^{(\at_i)}_n(v)=\frac{v-u_i+1}{v-u_i} , \qu \la^{(a_i)}_n(v)=\frac{v+u_i+1}{v+u_i}. \label{la(u):aux}
}
In particular, the matrix $T_a(v)$ of $Y(\mfgl_{n})$ acts on the space $L^t(\la^{(\at_i)})_{-u_i}$ as $R^t_{a\at_i}(u_i-v)$ and on the space $L^t(\la^{(a_i)})_{u_i}$ as $R^t_{a a_i}(-u_i-v)$; here note that $R^t_{ab}(u)=R^t_{ba}(u)$.

We define the \emph{nested vacuum sector} as a tensor product the auxiliary spaces and the vacuum sector $M^0$:
\equ{ 
M' := W \ot  M^0 , \qq W = V_{\at_1} \ot \cdots \ot V_{\at_m} \ot V_{a_1} \ot \cdots \ot V_{a_m}. \label{M'}
}

\begin{prop} \label{P:ResTT} 
Let $T(v)$ be the generating matrix of $Y(\mfgl_n)$. Then the mapping
\[
Y(\mfgl_n) \to \End(W) \ot Y^\pm(\mfgl_{2n}) , \qq T(v) \mapsto T(v;\bm u)
\]
equips the space $M'$ with the structure of a lowest weight $Y(\mfgl_n)$-module with the lowest weight given by
\equ{
\la_i(v;{\bm u}) = \la_i(u)\,\la_{2n-i+1}(-u)\,\mu_i(u) \prod_{j=1}^m \la^{(a_j)}_i(v)\, \la^{(\at_j)}_i(v) \label{la(v;u)}
}
for $1\le i \le n$ with $\la_i(v)$ defined in \eqref{L:la(u)}, $\mu_i(v)$ in \eqref{mu(u)} and $\la^{(a_j)}_i(v)$, $\la^{(\at_j)}_i(v)$ in \eqref{la(u):aux}.
\end{prop}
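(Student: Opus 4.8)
The plan is to read the nested monodromy matrix $T(v;\bm u)$ of \eqref{nMM} as the image of the generating matrix $T(v)$ of $Y(\mfgl_n)$ under the iterated coproduct applied to a chain of $Y(\mfgl_n)$-modules, one for each tensor factor of $M'=W\ot M^0$. Indeed, $T(v;\bm u)$ is an ordered product of operators all carrying the single auxiliary space $V_a$: the factors $R^t_{\at_k a}(u_k-v)$ and $R^t_{a_l a}(-u_l-v-\rho)$ are, by the evaluation-module identifications recorded immediately before the statement, the actions of $T_a(v)$ on the auxiliary modules $V_{\at_k}$ and $V_{a_l}$, while $A_a(v)$ is, by Lemmas \ref{L:M0} and \ref{L:M0-stable}, the action of $T_a(v)$ on $M^0$. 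Since a product of Lax operators in a common auxiliary space acting on distinct quantum spaces is precisely how $T_a(v)$ acts on a tensor product module, it suffices to check that each factor is a genuine $Y(\mfgl_n)$ Lax operator; the homomorphism property and the multiplicativity of lowest weights are then automatic.

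For the homomorphism property I would verify the three local RTT relations with respect to the $Y(\mfgl_n)$ $R$-matrix $R_{ab}(v-w)$. For the transposed $R$-matrices these read
\[
R_{ab}(v-w)\,R^t_{\at_k a}(u_k-v)\,R^t_{\at_k b}(u_k-w)=R^t_{\at_k b}(u_k-w)\,R^t_{\at_k a}(u_k-v)\,R_{ab}(v-w),
\]
together with the analogue for $a_l$ with shifted argument; these are the Yang--Baxter equation expressed through the evaluation modules $L^t(\la^{(\at_k)})$ and $L^t(\la^{(a_l)})$. For the last factor the relation is $R_{ab}(v-w)\,A_a(v)\,A_b(w)=A_b(w)\,A_a(v)\,R_{ab}(v-w)$, which is Lemma \ref{L:M0}: on $M^0$ the relation \eqref{CA} forces the entries of $C_a(\cdot)$ to annihilate the space, whereupon \eqref{AA} collapses to exactly this RTT relation. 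As distinct factors act on disjoint spaces they commute, and the standard ``train'' argument -- dragging $R_{ab}(v-w)$ across the product, intertwining one site at a time -- upgrades the local relations to the global relation \eqref{Y:RTT} for $T(v;\bm u)$. The relation holds on $M'$ rather than abstractly, since reducing $A_a(v)$ to a Lax operator requires restriction to $M^0$.

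For the lowest weight I would take $\zeta':=w_0\ot\zeta_0$, with $w_0\in W$ the tensor product of the lowest weight vectors of the auxiliary modules $V_{\at_k},V_{a_l}$, and $\zeta_0$ the lowest weight vector of $M^0$. Here $\zeta_0$ is the $Y^\pm(\mfgl_{2n})$-lowest vector $\zeta$ of $M$, which lies in $M^0$ because the entries $\mr{c}_{ij}(u)=s_{n+i,\,j}(u)$ of $C(u)$ have row index exceeding column index and so annihilate $\zeta$. Denoting the entries of $T(v;\bm u)$ by $t_{ij}(v;\bm u)$, the standard computation with $\Delta(t_{ij})=\sum_k t_{ik}\ot t_{kj}$ shows that the strictly lower entries ($i>j$) annihilate $\zeta'$ and that the diagonal entries act as the product of the constituent lowest weights. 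It remains to identify these constituents: the contribution of $M^0$ is the lowest weight of $A_a(v)$ there, and since $A(v)$ has diagonal entries $\mr{a}_{ii}(v)=s_{ii}(v)$ and $\zeta_0=\zeta$, this equals the $i$-th component $\la_i(v)\,\la_{2n-i+1}(-\rho-v)\,\mu_i(v)$ of the $Y^\pm(\mfgl_{2n})$-lowest weight of $M$ recorded after \eqref{mu(u)}; the contributions of $V_{\at_j}$ and $V_{a_j}$ are $\la^{(\at_j)}_i(v)$ and $\la^{(a_j)}_i(v)$ from \eqref{la(u):aux}. Their product is \eqref{la(v;u)}.

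I expect the main obstacle to be the spectral-parameter and $\rho$-bookkeeping. One must fix the evaluation parameters of the auxiliary modules $V_{\at_k},V_{a_l}$ so that their Lax operators coincide exactly with the factors $R^t_{\at_k a}(u_k-v)$ and $R^t_{a_l a}(-u_l-v-\rho)$ of \eqref{nMM}, and then read off \eqref{la(u):aux} from those identifications; the two families of auxiliary spaces carry opposite signs of $u_j$ and differ by the shift $\rho$, so the matching must be done with care. Moreover the transpose $t$ here is the non-standard one of Section~\ref{sec:notation}, so the genuinely substantive point is confirming that $R^t_{\at_k a}$ is a bona fide $Y(\mfgl_n)$ Lax operator and not merely a solution of some transposed Yang--Baxter variant. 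Once this is secured, everything else is the standard nested-monodromy machinery.
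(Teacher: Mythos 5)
Your proposal is correct and follows essentially the same route as the paper: the paper's proof likewise verifies the RTT relation for $T_a(v;\bm u)$ on $M'$ by interleaving the factors, applying the (transposed) Yang--Baxter equation site by site, and invoking \eqref{AA} together with the vanishing of the $C$-terms on $M^0$ — i.e.\ exactly your ``train argument'' with the auxiliary spaces read as the evaluation modules $L^t(\la^{(\at_i)})_{-u_i}$, $L^t(\la^{(a_i)})_{u_i}$ set up just before the statement — and then reads off the lowest weight by acting with $t_{ii}(v;\bm u)$ on the tensor product of the constituent lowest vectors. The $\rho$-bookkeeping you flag is indeed the only delicate point (the paper itself is not fully consistent about the shift between \eqref{nMM}, \eqref{la(u):aux} and the displayed computation), but it does not affect the structure of the argument.
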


\begin{proof} 
It follows from the definition \eqref{nMM} and Lemma \ref{L:M0-stable}, that the space $M'$ is stable under the action of $T_a(v;{\bm u})$. Moreover, for any $\zeta\in M'$, we have that
\[ 
R_{ab}(v-w)\,T_a(v;{\bm u})\,T_b(w;{\bm u})\cdot\zeta = T_b(w;{\bm u})\,T_a(v;{\bm u})\,R_{ab}(v-w)\cdot\zeta . 
\]
Indeed, we can interleave the matrices on the l.h.s.~of the equality above, then use the transposed Yang-Baxter equation to reorder the product of matrices:
\aln{
&R_{ab}(v-w)\,T_a(v;{\bm u})\,T_b(w;{\bm u}) \\
&\qq = R_{ab}(v-w)\Bigg( \prod_{k=1}^m R^t_{\at_k a}(u_k-v) R^t_{\at_k b}(u_k-w) \Bigg) \Bigg( \prod_{l=1}^m R^t_{a_l a}(-u_l-v) R^t_{a_l b}(-u_l-w) \Bigg) A_a(v) A_b(w) \\
&\qq= \Bigg( \prod_{k=1}^m  R^t_{\at_k b}(u_k-w) R^t_{\at_k a}(u_k-v) \Bigg) \Bigg( \prod_{l=1}^m  R^t_{a_l b}(-u_l-w) R^t_{a_l a}(-u_l-v) \Bigg) R_{ab}(v-w) A_a(v) A_b(w).
}
From here we use \eqref{AA} to obtain the result, plus additional terms. However, $C(u)$ appears as the rightmost operator acting nontrivially on $M^0\subset M'$ in each of these additional terms. Since $C(u)$ annihilates all vectors in $M^0$, these additional terms vanish.
Its lowest vector~is
\equ{
\eta := e_{\at_1} \ot \cdots \ot e_{\at_m} \ot e_{a_1} \ot \cdots \ot e_{a_m} \ot \eta_1 \ot \cdots \ot \eta_\ell \ot \xi, \label{M'-low}
}
where $\xi$ is a lowest vector of $M^0(\mu)$, each $\eta_i$ is a lowest vector of $L(\la^{(i)})_{c_i}^0$ for $1\le i \le \ell$, and  each $e_{\at_i}$ (resp.~$e_{a_i}$) is a lowest vector of $V_{\at_i}$ (resp.~$V_{a_i}$) for $1\le i \le m$ (viewed as an evaluation module $L^t(\la^{(\at_i)})_{-u_i}$ (resp.~$L^t(\la^{(a_i)})_{-u_i}$)). Finally, acting with $t_{ii}(v;\bm u)$ on $\eta$ for $1\le i \le n$ and using \eqref{L:la(u)}, \eqref{mu(u)} and \eqref{la(u):aux} yields \eqref{la(v;u)}. 
\end{proof}

\begin{rmk}
\emph{(i)} 
Recall that $L^0(\la^{(i)})_{c_i} \cong L^0(\la^{\prime\,(i)})_{c_i} \ot L^0(\la^{\prime\prime\,(i)})_{c_i}$ with $\ol{A}(u)$ (resp.~$\ol{D}{}^t(u)$) acting non-trivially on the first (resp.~second) tensorand only. We may thus rewrite the space $M^0$ as
\[
M^0 \cong L^0(\la^{\prime\,(1)})_{c_1} \ot \cdots \ot L^0(\la^{\prime\,(\ell)})_{c_\ell} \ot M^0(\mu) \ot L^0(\la^{\prime\prime\,(\ell)})_{c_\ell} \ot \cdots \ot L^0(\la^{\prime\prime\,(1)})_{c_1} . 
\]
By Proposition \ref{P:ResTT}, we may view this space as a lowest weight $Y(\mfgl_n)$-module. Provided the binary property holds, it is an irreducible $Y(\mfgl_n)$-module, see Theorem 6.5.8 in \cite{Mo3}.  
{\it(ii)}~
Enumerate the tensorands of $M^0$ above by $1,2,\dots,2\ell,2\ell+1$. Then the matrix $T_a(v;\bm u)$ acts on the space $M' = W \ot M^0$ via the operator
\aln{
& \Bigg( \prod_{k=1}^{m} R^t_{a\at_k}(u_k-v) \Bigg)\Bigg( \prod_{l=1}^{m} R^t_{aa_l}(-u_l-v-\rho) \Bigg) \\ & \qq \times\Bigg( \prod_{i=1}^\ell \mc{L}^0_{ai}(u-c_i) \Bigg)  \mc{L}^{\pm0}_{a,\ell+1}(\mu)\Bigg(\prod_{i=\ell}^1 \big(\mc{L}^0_{a,2\ell-i+1}(-u-\rho-c_i)\big)^t \Bigg),
}
where the Lax operators are those defined in \eqref{lax0} and \eqref{blax0}.
\end{rmk}

We end this section with one more technical lemma which will assist us in finding the explicit expressions of the unwanted terms in Section \ref{sec:UWT}.

\begin{lemma} \label{L:RRt=tRR}
The following identities hold:
\gan{
\check{R}_{a_i a_{i+1}}(u_i-u_{i+1}) \check{R}^{-1}_{\at_i \at_{i+1}}(u_i-u_{i+1})\, t_{kl}(w;{\bm u}) = t_{kl}(w;\bm u_{i \leftrightarrow i+1}) \check{R}_{a_i a_{i+1}}(u_i-u_{i+1}) \check{R}^{-1}_{\at_i \at_{i+1}}(u_i-u_{i+1}).
\\
\check{R}_{a_i a_{i+1}}(u_i-u_{i+1}) \check{R}^{-1}_{\at_i \at_{i+1}}(u_i-u_{i+1}) \, \eta = \eta.
}
\end{lemma}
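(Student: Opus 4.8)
The plan is to prove the first identity in its full matrix form and specialise, and to prove the second by a direct eigenvalue computation. For the first identity, observe that the operator $\check R_{a_i a_{i+1}}(u_i-u_{i+1})\,\check R^{-1}_{\at_i \at_{i+1}}(u_i-u_{i+1})$ acts only on the auxiliary spaces $V_{a_i}, V_{a_{i+1}}, V_{\at_i}, V_{\at_{i+1}}$, all distinct from the space $a$ carrying the matrix indices $k,l$. Hence the stated relation for every entry $t_{kl}(w;\bm u)$ is equivalent to the single matrix identity
\begin{equation*}
\check R_{a_i a_{i+1}}(u_i-u_{i+1})\,\check R^{-1}_{\at_i \at_{i+1}}(u_i-u_{i+1})\, T_a(w;\bm u) = T_a(w;\bm u_{i \leftrightarrow i+1})\,\check R_{a_i a_{i+1}}(u_i-u_{i+1})\,\check R^{-1}_{\at_i \at_{i+1}}(u_i-u_{i+1}),
\end{equation*}
which I would establish directly from the explicit form \eqref{nMM} of the nested monodromy matrix.

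Next I would factorise the problem using commutativity. Since $\check R_{a_i a_{i+1}}$ commutes with every $R^t_{\at_k a}$ and with $A_a(w)$, while $\check R^{-1}_{\at_i \at_{i+1}}$ commutes with every $R^t_{a_l a}$ and with $A_a(w)$, and the two braidings commute with one another, each braiding can be moved independently to the far right through $T_a(w;\bm u)$, picking up an argument swap only at its neighbouring pair of factors. Thus the matrix identity reduces to two local braided relations. The $\at$-relation is
\begin{equation*}
\check R^{-1}_{\at_i \at_{i+1}}(u_i-u_{i+1})\,R^t_{\at_i a}(u_i-w)\,R^t_{\at_{i+1} a}(u_{i+1}-w) = R^t_{\at_i a}(u_{i+1}-w)\,R^t_{\at_{i+1} a}(u_i-w)\,\check R^{-1}_{\at_i \at_{i+1}}(u_i-u_{i+1}),
\end{equation*}
and the $a$-relation is the analogous statement for $\check R_{a_i a_{i+1}}(u_i-u_{i+1})$ and the factors $R^t_{a_l a}(-u_l-w-\rho)$.

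Both local relations come from the transposed Yang-Baxter equation already used in the proof of Proposition \ref{P:ResTT}, after relabelling so that the shared space is $a$ and the permutation-type R-matrix sits on the exchanged pair of sites; conjugating by $P$ puts it in the braided form $\check R_{12}(x)\,L_1 L_2 = L'_1 L'_2\,\check R_{12}(x)$. For the $a$-relation the site parameters enter as $-u_l-w-\rho$, so the braiding argument comes out as $u_i-u_{i+1}$ and the desired identity follows at once. For the $\at$-relation the site parameters enter as $u_k-w$, which flips the effective argument to $u_{i+1}-u_i$, producing $\check R_{\at_i \at_{i+1}}(u_{i+1}-u_i)$; invoking $\check R(x)\,\check R(-x)=(1-x^{-2})\,I$ (so that $\check R_{\at_i\at_{i+1}}(u_{i+1}-u_i)$ is a scalar multiple of $\check R^{-1}_{\at_i\at_{i+1}}(u_i-u_{i+1})$) and cancelling the scalar yields the inverted form. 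This also explains structurally why the braiding operator pairs $\check R_{a_i a_{i+1}}$ with $\check R^{-1}_{\at_i \at_{i+1}}$. I expect this bookkeeping, namely tracking the arguments and the single inversion caused by the sign asymmetry, to be the main obstacle; the remaining steps are commutations of operators acting on disjoint spaces.

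Finally, for the second identity the braiding acts on $\eta$ only through the tensor factors $e_{a_i}\ot e_{a_{i+1}}$ and $e_{\at_i}\ot e_{\at_{i+1}}$ in \eqref{M'-low}. Since $V_{a_i}$ and $V_{a_{i+1}}$ (respectively $V_{\at_i}$ and $V_{\at_{i+1}}$) are copies of the same vector representation, their lowest vectors coincide, so both $e_{a_i}\ot e_{a_{i+1}}$ and $e_{\at_i}\ot e_{\at_{i+1}}$ are fixed by $P$ and are therefore eigenvectors of $\check R(x)=P-x^{-1}I$ with the common eigenvalue $1-x^{-1}$. Consequently $\check R_{a_i a_{i+1}}(u_i-u_{i+1})$ contributes the factor $1-(u_i-u_{i+1})^{-1}$ and $\check R^{-1}_{\at_i \at_{i+1}}(u_i-u_{i+1})$ its reciprocal; the two cancel and $\eta$ is preserved.
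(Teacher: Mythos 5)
Your proof is correct and follows essentially the same route as the paper: commute the two braidings through the nested monodromy matrix, reducing everything to the two local braided (transposed) Yang--Baxter relations on the adjacent pair of auxiliary spaces, and for the second identity use that $\eta$ is a $P$-fixed (hence $\check R$-eigen) vector so the eigenvalues of $\check R_{a_ia_{i+1}}$ and $\check R^{-1}_{\at_i\at_{i+1}}$ cancel. The only difference is that you additionally derive the two local relations and explain, via $\check R(x)\check R(-x)=(1-x^{-2})I$ and the opposite signs of the site parameters, why one braiding appears inverted --- a point the paper simply asserts.
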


\begin{proof}
The first identity is achieved by moving the $\check{R}$-matrices through each matrix in the definition of the nested monodromy matrix. Indeed, the $\check{R}$-matrices each commute with all but a pair of adjacent $R$-matrices in \eqref{nMM}, for which we use the Yang Baxter equations,
\gan{
\check{R}_{a_i a_{i+1}}(u_i-u_{i+1}) R^t_{a_i a}(-u_i-v) R^t_{a_{i+1} a}(-u_{i+1}-v) = R^t_{a_i a}(-u_{i+1}-v) R^t_{a_{i+1} a}(-u_i-v) \check{R}_{a_i a_{i+1}}(u_i-u_{i+1}) , \\
\check{R}^{-1}_{\at_i \at_{i+1}}(u_i-u_{i+1}) R^t_{\at_i a}(u_i-v) R^t_{\at_{i+1} a}(u_{i+1}-v) = R^t_{\at_i a}(u_{i+1}-v) R^t_{\at_{i+1} a}(u_i-v) \check{R}^{-1}_{\at_i \at_{i+1}}(u_i-u_{i+1}),
}
and the result follows. 

To see why the second identity is true, notice that the lowest weight vector $\eta$ \eqref{M'-low} is an eigenvector of $P_{a_i a_{i+1}}$, and therefore also of $\check{R}_{a_i a_{i+1}}(u_i-u_{i+1})$. This is true also for $P_{\at_i \at_{i+1}}$. 
Thus, acting with both $\check{R}_{a_i a_{i+1}}(u_i-u_{i+1})$ and $\check{R}^{-1}_{\at_i \at_{i+1}}(u_i-u_{i+1})$, the eigenvalues cancel, which gives the result. 
\end{proof}


\section{Nested algebraic Bethe ansatz for \texorpdfstring{$Y^\pm(\mfgl_{2n})$}{}} \label{sec:NABA}

We are now ready to consider the nested algebraic Bethe ansatz for a one-dimensional spin chain with open boundary conditions and having twisted Yangian $Y^\pm(\mfgl_{2n})$ as its underlying symmetry. The full quantum space is the lowest weight $Y^\pm(\mfgl_{2n})$-module $M$ defined in \eqref{M}:
\[
M = L(\la^{(1)})_{c_1} \ot L(\la^{(2)})_{c_2} \ot \cdots \ot L(\la^{(\ell)})_{c_\ell} \ot M(\mu) .
\]
The generating matrix $S(u)$ of $Y^\pm(\mfgl_{2n})$ acts on this space via a product of Lax operators \eqref{S(u).M}:
\[
S_a(v)\cdot M = \Bigg( \prod_{i=1}^\ell \mc{L}_{ai}(v-c_i) \Bigg)\,\mc{L}^\pm_{a,\ell+1}(v)\,\Bigg( \prod_{i=\ell}^1 \mc{L}^t_{ai}(-v-\rho-c_i)\Bigg)\,M .
\]
Taking the trace of the generating matrix we obtain a double-row transfer matrix 
\equ{
\tau(v) := \tr\, S(v) = \tr A(v) + \tr D(v) = \tr A(v) + \tr D^t(v).  \label{TM}
}
One can show using the usual methods that $[\tau(u),\tau(v)]=0$; see Section 2 in \cite{ACDFR1}, also \cite{Sk}.
We seek an eigenvector of $\Psi\in M$ of $\tau(v)$, which we will refer to as the \emph{Bethe vector}. The problem of finding an eigenvector of the transfer matrix \eqref{TM} can be substantially simplified with the help of the symmetry relation \eqref{symmD} which allows us to write the transfer matrix $\tau(v)$ in a symmetric form 
\[
\tau(v)=p(v)\tr A(v) + p(-v-\rho)\tr A(-v-\rho) = \{p(v)\tr A(v)\}^v ,
\]
where $p(v)$ is given by \eqref{p(v)}. Here we used the notation introduced in \eqref{v-even}. It will therefore be sufficient to focus on the action of $A(v)$, without needing to consider $D(v)$.

The last ingredient we will need is the \emph{nested transfer matrix}, see \eqref{nMM}:
\[
t(v;\bm u) := \tr T(v;\bm u) . 
\]
It will play the role of $\tau(v)$ at the nested level of the ansatz. Since we will only consider the action of $T(v;\bm u)$ on a finite-dimensional vector space, we can thus specialize the parameters $u_i$ of $m$-tuple $\bm u$ to nonzero complex numbers. Hence we will further assume that $\bm u\in \C^m$ is an $m$-tuple of distinct nonzero complex numbers.

\begin{rmk}
Recall from Section \ref{sec:Y} that $K\in\End(\C^{2n})$ satisfying $K^t = \veps K$ with $\veps = +1$ or $\veps=-1$ is a matrix solution to the reflection equation. 
The dual reflection equation is obtained by redefining $u\to -u-\rho$ and $v\to -v - \rho$ and therefore has the same set of solutions, denoted by $\wt K\in\End(\C^{2n})$, satisfying $\wt K^t = \tl \veps \wt K$ with $\tl\veps = +1$ or $\tl\veps=-1$.
The transfer matrix for an open spin chain with generic (invertible) boundary conditions is then given by $\tau^{\wt K,K}(v) := \tr (\wt K S^K(v))$, where $S^K(v)$ is defined by \eqref{SK=TKT} and both $K$, $\wt K$ are assumed to be invertible. 
Write $K=A I^\veps A^t$ for some invertible $A\in\End(\C^{2n})$. By Proposition \ref{P:TY-iso}, we have that $\psi_K^{-1}(\tau^{\wt K,K}(v)) = \tr ( G S(v)) =: \tau^{G,I}(v)$ with $G = I^\veps (A^t)^{-1} \wt K A^{-1}$. In other words, the spectrum of an open spin chain with generic boundary conditions coincides, up to an isomorphism, with the spectrum of a spin chain with generic left and trivial right boundary conditions. The nesting method presented in this paper can be applied to the cases when $G$ is a block-diagonal matrix, viz.~\eqref{block}. Denoting $\wt S(v) := S(v) G$ we can then apply the decomposition \eqref{TM}. Moreover, the exchange relations for the blocks of $\wt S(v)$ remain unchanged, thus allowing us to reduce the diagonalization problem of $\tau^{G,I}(v)$ to that of $t^G(v;\bm u) := \tr(T(v;\bm u)[G]_A)$, a transfer matrix for a $\mfgl_n$-symmetric spin chain with periodic twisted boundary conditions; here $[G]_A$ denotes the upper-left block of $G$. 
The residual diagonalisation problem for $t^G(v;\bm u)$ may then be solved by the nested algebraic Bethe ansatz only if $[G]_A$ is a diagonal matrix. For a non-diagonal $[G]_A$ some other methods, such as separation of variables or functional relations, need to be used; for the six-vertex models see e.g., \cite{BBRY,Ga}, for the higher rank cases see \cite{WYCS}.
\end{rmk}

\begin{rmk}
The open spin chains of this type (with ``soliton non-preserving'' open boundary conditions) were first considered by Doikou in \cite{Do}. Let $\ell\in2\N$. Then, for each $i\in\{1\dots,\ell/2\}$ let $L(\la^{(2i-1)})_{c_{2i-1}}$ (resp.~$L(\la^{(2i)})_{c_{2i}}$) be the fundamental (resp.~the anti-fundamental) representation of $\mfgl_{2n}$. Fix the boundary representation $M(\mu)$ to be a one-dimensional representation of $\mfso_{2n}$ or $\mfsp_{2n}$. Doikou showed that such a spin chain can be equipped with a local Hamiltonian having interactions up to four nearest neighbours. 
\end{rmk}


\subsection{Bethe vector for a single excitation}

To introduce the nesting technique, we start by constructing the Bethe vector with a single excitation, i.e., $m=1$, as this case allows us to expose the main idea of our approach while keeping the technical difficulties to the minimum; for example, in this case we find the unwanted terms without additional computations. Recall the definition of the vacuum sector $M^0$ \eqref{M0} and the nested vacuum sector~$M'$~\eqref{M'}. For $m=1$ we have $M' = V_{\at_1} \ot V_{a_1} \ot M^0$. Let $\Phi \in M'$, which we will refer to as the \emph{nested Bethe vector}. The vector $\Phi$ may depend on $u\in\C$, hence we will write $\Phi=\Phi(u)$. Using the creation operators defined in Definition~\ref{D:beta}, we write an ansatz for the Bethe vector with a single excitation
\equ{
\Psi(u) := \beta_{\tl a_1 a_1}(u)\cdot\Phi(u) \in M. \label{Psi(u):1}
}
We now compute the action of the transfer matrix $\tau(v)$ on this Bethe vector. Using Lemma \ref{L:AB} we have 
\ali{
\tau(v)\cdot \Psi(u) & = \{p(v) \tr A(v)\}^v \,\beta_{\tl a_1 a_1}(u)\cdot \Phi(u) \el
& = \bee{1}{u} \tr_a \Big(\even{v}{ p(v)\, R^t_{\at_1a}(u-v)\, R^t_{a_1a}(-u-v-\rho)\, A_a(v)}\Big) \cdot \Phi(u) \el 
&\qu + \frac1{p(u)} \even{v}{ \frac{p(v)}{u-v}\, \bee{1}{v}} \Res{w \rightarrow u} \tr_a \Big( \even{w}{ p(w) R^t_{\at_1a}(u-w)\, R^t_{a_1a}(-u-w-\rho) A_a(w) } \Big) \cdot \Phi(u) \el 
& = \bee{1}{u} \,\even{v}{p(v)\,t(v;u)} \cdot\Phi(u) \el & \qu + \frac1{p(u)} \even{v}{ \frac{p(v)}{u-v} \,\bee{1}{v}} \Res{w \rightarrow u} \even{w}{p(w)\,t(w;u)} \cdot \Phi(u) \label{tauPhi1}.
} 
The first term in the r.h.s.~of the equality above is the wanted term, as the parameter carried by $\bee{1}{u}$ is unchanged. The second term has $\bee{1}{v}$ and is the unwanted term, which we will require to vanish.

 Let us now make the additional requirement, which we will justify later, that vector $\Phi(u)$ is an eigenvector of the nested transfer matrix $t(v; u)$ with an eigenvalue $\Gamma(v;u)$:
\equ{ \label{gamma}
t(v;u)\cdot\Phi(u) = \Gamma(v;u)\,\Phi(u) .
}
This allows us to rewrite \eqref{tauPhi1} as
\aln{
\tau(v)\cdot\Psi(u) 
&= \even{v}{p(v) \,\Gamma(v;u)} \Psi(u) + \frac1{p(u)} \Res{w \rightarrow u} \even{w}{p(w)\,\Gamma(w;u)} \even{v}{ \frac{p(v)}{u-v} \,\bee{1}{v}} \!\cdot \Phi(u) \\
&= \Lambda(v;u) \, \Psi(u) + \Res{w \rightarrow u} \Lambda(w;u) \,\frac1{p(u)} \even{v}{ \frac{p(v)}{u-v} \,\bee{1}{v}} \!\cdot \Phi(u),
}
where $\Lambda(v;u) := \even{v}{p(v) \, \Gamma(v;u)}$.
We thus conclude that $\Phi(u)$ is an eigenvector of $\tau(v)$ with eigenvalue $\Lambda(v;u)$ if 
\[
\Res{w \rightarrow u} \Lambda(w;u) = 0.
\]
This is the \emph{Bethe equation} for $u$, solutions of which, by \eqref{Psi(u):1}, give a set of possible eigenvectors of $\tau(v)$.

It remains to find a nested Bethe vector $\Phi(u)$ satisfying \eqref{gamma}: we seek an eigenvector $\Phi(u)\in M'$ of $t(v;u)$. By Proposition \ref{P:ResTT}, the nested monodromy matrix $T_a(v;u)$ and the nested vacuum sector $M'$ form a $Y(\mfgl_n)$-system. The spectral problem of this system can be solved by means of the usual nested algebraic Bethe ansatz presented in \cite{KuRs}, which we have recalled in detail in Appendix \ref{app:gln}.
For example, the ansatz for $\Phi(u)$ has the form
\[
\Phi(u) = \Phi(\bm u';u) := B'_{a'_1}(u'_{1})\cdots B'_{a'_{m'}}(u'_{m'})\cdot \Phi'(\bm u';u),
 \]
where $\bm u' = (u'_1,\ldots,u'_{m'})\in\C^{m'} $ and $B'_{a'_j}(u'_{j})$ are creation operators taken from the $T_a(v;u)$. Continuing this nesting procedure, we obtain an eigenvector $\Phi(u;\bm u')$ with eigenvalue, see \eqref{gln:full_eig},
\eqn{
\Gamma(v;u) &= \lambda_1(v;u)\prod_{i=1}^{m'} \frac{v-u_i^{(1)}+1}{v-u_i^{(1)}} + \lambda_n(v;u)\prod_{i=1}^{m^{(n-1)}} \frac{v-u^{(n-1)}_i-1}{v-u^{(n-1)}_i} 
\\
& \qu + \sum_{k=2}^{n-1} \lambda_k(v;u) \prod_{i=1}^{m^{(k-1)}} \frac{v-u^{(k-1)}_i-1}{v-u^{(k-1)}_i} \prod_{j=1}^{m^{(k)}} \frac{v-u^{(k)}_j+1}{v-u^{(k)}_j} ,
}
where $\la_k(v;u)$ are given by \eqref{P:ResTT} and the $u_i^{(k)}$ with $1 \leq k \leq n-1$ are parameters introduced at level $k$ of nesting when diagonalizing the $\mfgl_n$-symmetric periodic spin chain. These parameters are fixed to be solutions of their respective Bethe equations, given in \eqref{gln:BE}. 

The boundary eigenvalue $\Lambda(v;u)$ and Bethe equation for $u$ can then be found by substituting the values for $\lambda_k(v;u)$ from \eqref{la(v;u)} into the above expression. These are given explicitly for multiple excitations by Theorem~\ref{T:eig} in Section \ref{sec:res}.


\subsection{Bethe vector for multiple excitations}

For multiple excitations the argument proceeds similarly to the previous section. Recall that $m\in\N$ is the excitation number and $\bm u \in \C^m$ is an $m$-tuple of distinct nonzero complex parameters. Let $\Phi$, the \emph{nested Bethe vector}, be a vector from the lowest weight $Y(\mfgl_n)$-module $M'$ defined in \eqref{M'}: 
\[
\Phi \in M' = V_{\at_1} \ot \cdots \ot V_{\at_m} \ot V_{a_1} \ot \cdots \ot V_{a_m} \ot M^0 .
\]
The vector $\Phi$ may also depend on the parameters $\bm u$, and we will write $\Phi = \Phi(\bm u)$. From the nested Bethe vector, we make the following ansatz for the full Bethe vector:
\eqa{ \label{bethevec}
\Psi(\bm u) := \bethe{m}{\bm u}\cdot \Phi(\bm u) \in M. 
}
We now act with the transfer matrix $\tau(v)$ on this Bethe vector. Using Corollary \ref{C:wantedterm} we find 
\aln{
\tau(v)\cdot\Psi(\bm u) = \bethe{m}{\bm u}\even{v}{p(v)\,t(v;\bm u)}\cdot \Phi(\bm u) + UWT .
}
The unwanted terms $UWT$ are less simple than in the $m=1$ case, and will be discussed in detail in the section below. With the expectation that the $\bm u$ may be chosen such that the unwanted terms vanish, the Bethe vector $\Phi(\bm u)$ will be an eigenvector of $\tau(v)$ if we take the additional requirement, as for $m=1$, that
\eqa{ \label{nested}
t(v;\bm u)\cdot\Phi(\bm u) = \Gamma(v;\bm u)\,\Phi(\bm u) .
}
We therefore seek an eigenvector $\Phi(\bm u)\in M'$ of $t(v;\bm u)$. This is found again by the algebraic Bethe ansatz for $Y(\mfgl_n)$ with the full quantum space $M'$ and monodromy matrix $T(v,\bm u)$. 

From here, proceeding as we did in the $m=1$ case, we have that
\equ{ \label{eig_full}
\tau(v)\cdot \Psi(\bm u) = \Lambda(v;\bm u)\,\Psi(\bm u) + UWT , \qu\text{where}\qu \Lambda(v;\bm u) = \even{v}{p(v)\, \Gamma(v;\bm u)}.
}


\subsection{Dealing with unwanted terms} \label{sec:UWT}

In this section, we will find an exact expression for the unwanted terms from the action of $\tau(v)$ on the Bethe vector and, by setting these terms to zero, we will obtain the Bethe equations. 

We begin by introducing some notation for the unwanted terms. Let $U(v;\bm u)$ denote the terms initially excluded from the expression in 
\[
\tau(v) \, \bethe{m}{\bm u} = \bethe{m}{\bm u} \, \even{v}{p(v)\,t(v;\bm u)} + U(v;\bm u).
\]
To find an expression for $U(v;\bm u)$, begin by acting on $\bethe{m}{\bm u}$. By repeated applications of \eqref{ABetafull}, as in Lemma~\ref{L:movement}, we may move $A_a(\cdot)$ through each of the remaining creation operators in $\bethe{m}{\bm u}$, generating a sum of terms in which the rightmost operator is a matrix element of $A_a(u)$ for $u \in \{v,u_1,\dots, u_m, -v-\rho,-u_1-\rho, \dots, -u_m-\rho \}$.
To find a more concise expression for $U(v;\bm u)$, it will be useful to partition the terms by the parameter appearing in $A_a(\cdot)$. Let $\mc B$ denote the subalgebra of $Y^{\pm}(\mfgl_{2n})$ generated by coefficients of $b_{ij}(u)$ for $1 \leq i,j \leq n$, the closure of which is guaranteed by \eqref{BB}. Then
\[
U(v;\bm u) = \sum_{j=1}^m \Big( U_{+,j}(v;\bm u) + U_{-,j}(v;\bm u) \Big) ,
\]
where
\[
U_{+,j}(v;\bm u) = \sum_{k,l=1}^n B_{+,j,kl} \, a_{kl}(u_j), \qq
U_{-,j}(v;\bm u) = \sum_{k,l=1}^n B_{-,j,kl} \, a_{kl}(-u_j-\rho) 
\]
for some $B_{\pm,j,kl} \in \mc B \ot (\C^n)^{\ot 2m}$. Additionally, let us define $U_j(v;\bm u):= U_{+,j}(v;\bm u) + U_{-,j}(v;\bm u) $. We will now proceed to find $U_1(v;\bm u)$ using the standard techniques. Indeed, consider moving $\tau(v)$ through only the first creation operator. From \eqref{ABetafull},
\aln{ 
& \tau(v) \, \bethe{m}{\bm u} \\ & \qq= \bigg( \bee{1}{u_1} \tr_a \even{v}{ p(v) R^t_{\at_1a}(u_1-v) R^t_{a_1a}(-u_1-v-\rho) A_a(v)} \el
& \qq\qq + \frac1{p(u_1)} \even{v}{ \frac{p(v)}{u_1-v} \bee{1}{v}} \Res{w \rightarrow u_1} \tr_a\even{w}{ p(w) R^t_{\at_1a}(u_1-w) R^t_{a_1a}(-u_1-w-\rho) A_a(w) } \bigg) \\
&\qq\qu \times \prod_{i=2}^m \Big(\bee{i}{u_i} \prod_{j=i-1}^{1} R_{a_j \at_i}(-u_j-u_i-\rho) \Big).
}
We focus on the second term here, which, upon taking the residue, contains $A_a(u_1)$ and $A_a(-u_1-\rho)$. As all the entries of the $m$-tuple $\bm u$ are distinct, all contributions to $U_{1}(v;\bm u)$ must originate from moving $A_a(u_1)$ and $A_a(-u_1-\rho)$ through the remaining creation operators without any further parameter exchanges. Therefore, by repeated applications of Lemma~\ref{L:movement},
\[
U_{1}(v;\bm u) = \frac1{p(u_1)} \even{v}{ \frac{p(v)}{u_1-v}\, \bee{1}{v}} \prod_{i=2}^m \bigg(\bee{i}{u_i} \prod_{j=i-1}^{1} R_{a_j \at_i}(-u_j-u_i-\rho) \bigg) \Res{w \rightarrow u_1} \even{w}{ p(w) \, t(w;\bm u) } .
\]
It now remains to find similar expressions for $U_j(v;\bm u)$ for $2 \leq j \leq m$. Recall Lemma~\ref{L:B=BaReR}. By repeatedly applying such transpositions, we may apply an arbitrary permutation to the parameters $\bm u$ in the $m$-excitation creation operator. For $\sigma \in \mf S_m$, let $\bm u_{\sigma}$ denote the ordered set $(u_{\sigma(1)},u_{\sigma(2)}, \dots, u_{\sigma(m)})$. Additionally, let $\sigma_j$ denote the cyclic permutation $\sigma_j: (1,2, \dots, m) \mapsto (j, j+1, \dots, j-1)$. We have
\[
\bethe{m}{\bm u} = \bethe{m}{\bm u_{\sigma_j}} \, \check{R}_{a_1 \dots a_m}[\sigma_j](\bm u) \, \check{R}^{-1}_{\at_1 \dots \at_m}[\sigma_j](\bm u)
\]
where $\check{R}[\sigma_j](\bm u)$ is the product of $\check{R}$-matrices that generates this permutation. Using this expression for $\bethe{m}{\bm u}$ and following the argument above, we obtain an expression for $U_k(v;\bm u)$:
\aln{
U_{k}(v;\bm u) &= \frac1{p(u_k)} \even{v}{ \frac{p(v)}{u_k-v}\, \bee{1}{v}} \prod_{i=2}^m \bigg(\bee{i}{u_{\sigma_k(i)}} \prod_{j=i-1}^{1} R_{a_j \at_i}(-u_{\sigma_k(j)}-u_{\sigma_k(i)}-\rho) \bigg) 
\\
&\hspace{5cm}\times\Res{w \rightarrow u_k} \even{w}{ p(w) \, t(w;\bm u_{\sigma_k}) } \check{R}_{a_1 \dots a_m}[\sigma_k](\bm u) \, \check{R}^{-1}_{\at_1 \dots \at_m}[\sigma_k](\bm u) .
}
By applying this to the nested Bethe vector, we obtain an expression for all the unwanted terms from the action of $\tau(v)$ on $\Psi(\bm u)$. However, in order to obtain the Bethe equations, we must assume one additional property of the nested Bethe vector $\Phi$. We require $\Phi = \Phi(\bm u)$ such that:
\[
\check{R}_{a_i a_{i+1}}(u_i-u_{i+1}) \, \check{R}^{-1}_{\at_i \at_{i+1}}(u_i-u_{i+1}) \, \Phi(\bm u) = \Phi(\bm u_{i \leftrightarrow i+1}) \qu \text{for} \qu 1 \leq i \leq n-1.
\] 
By Proposition \ref{P:ResTT} and Lemma~\ref{L:RRt=tRR}, this is true for a vector generated by acting with the nested monodromy matrix on the lowest vector
\aln{
& \Phi(\bm u) \in {\rm span}_\C \big\{  t_{i_1 j_1}(w_1; \bm u) \cdots t_{i_K j_K}(w_K; \bm u) \cdot \eta \, : \, K \geq 0, \, 1 \leq i_1, j_1, \dots, i_K, j_K \leq n-1, \, \bm w \in \C^K \big\}.
}
The $\Phi(\bm u)$ constructed using the nested Bethe ansatz for $Y(\mfgl_n)$ will have this form, so we may simultaneously assume \eqref{nested}. Note that the action of $\check{R}$-matrices may be combined with \eqref{B=BaReR} in such a way that transpositions of the parameters $\bm u$ leave the full Bethe vector $\Psi(\bm u)$ unchanged. Therefore, $\Psi(\bm u) = \Psi(\bm u_{\sigma})$ for all $\sigma \in \mf S_m$.
Acting on $\Phi(\bm u)$ with the expression for $U_k(v;\bm u)$ above, and summing over $k$, we obtain
\aln{ 
U(v;\bm u) \cdot \Phi(\bm u) &= \sum_{k=1}^m \frac1{p(u_k)} \even{v}{ \frac{p(v)}{u_k-v}\, \bee{1}{v}} \prod_{i=2}^m \bigg(\bee{i}{u_{\sigma_k(i)}} \prod_{j=i-1}^{1} R_{a_j \at_i}(-u_{\sigma_k(j)}-u_{\sigma_k(i)}-\rho) \bigg) \\[.25em]
&\qu \times\Res{w \rightarrow u_k} \even{w}{ p(w) \, t(w;\bm u_{\sigma_k}) } \cdot \Phi(\bm u_{\sigma_k}) \\
&= \sum_{k=1}^m \frac1{p(u_k)}  \Res{w \rightarrow u_k} \even{w}{ p(w) \, \Gamma(w;\bm u_{\sigma_k}) } \even{v}{ \frac{p(v)}{u_k-v}\, \bee{1}{v}}   \\
&\qu \times \prod_{i=2}^m \bigg(\bee{i}{u_{\sigma_k(i)}} \prod_{j=i-1}^{1} R_{a_j \at_i}(-u_{\sigma_k(j)}-u_{\sigma_k(i)}-\rho) \bigg)  \cdot \Phi(\bm u_{\sigma_k}) .
}
Note that we may use $\check{R}$-matrices to permute the parameters in \eqref{nested} to show that $\Gamma(v;\bm u_\sigma) = \Gamma(v;\bm u)$ for all $\sigma \in \mf S_m$. The Bethe equations are then extracted by demanding $U(v;\bm u)\cdot \Phi(\bm u)=0$. Since each summand is independent, we obtain
\[
 \Res{w \rightarrow u_k} \even{w}{ p(w) \, \Gamma(w;\bm u) } = 0 \qu \text{for}\qu 1 \leq k \leq m
\]
or, more concisely
\eqa{ \label{full_bae}
 \Res{w \rightarrow u_k} \Lambda(w;\bm u) = 0 \qu\text{for} \qu 1 \leq k \leq m .
}


\subsection{Boundary eigenvalues and Bethe equations} \label{sec:res}

From the algebraic Bethe ansatz for a $Y(\mfgl_n)$-system, we have explicit values for the eigenvalues of the nested system, see \eqref{gln:full_eig},
\aln{
\Gamma(v;\bm u) &= \lambda_1(v;\bm u)\prod_{i=1}^{m^{(1)}} \frac{v-u_i^{(1)}+1}{v-u_i^{(1)}}  + \lambda_n(v;\bm u)\prod_{i=1}^{m^{(n-1)}}\frac{v-u^{(n-1)}_i-1}{v-u^{(n-1)}_i} 
\\
& \qu + \sum_{k=2}^{n-1} \lambda_k(v;\bm u) \prod_{i=1}^{m^{(k-1)}} \frac{v-u^{(k-1)}_i-1}{v-u^{(k-1)}_i} \prod_{i=1}^{m^{(k)}} \frac{v-u^{(k)}_i+1}{v-u^{(k)}_i} ,
}
where $\la_k(v;\bm u)$ are given by Proposition~\ref{P:ResTT}. (Note that the $(i+1)$-th level of nesting for $Y^\pm(\mfgl_{2n})$ corresponds to $i$-th level for $Y(\mfgl_n)$.) The parameters $u_i^{(k)}$ satisfy the appropriate Bethe equations of $Y(\mfgl_n)$ given in~\eqref{gln:BE}. 
The full eigenvalues $\Lambda(v;\bm u) = \even{v}{p(v)\,\Gamma(v;\bm u)}$ of the Bethe vectors, cf., \eqref{eig_full}, are then obtained by substituting our values for $\la_k(v;\bm u)$ from \eqref{la(v;u)}. This leads to the following statement.

\begin{thrm} \label{T:eig} 
The eigenvalues of the Bethe vectors for a $Y^\pm(\mfgl_{2n})$-system are given by 
\equ{ 
\Lambda(v;\bm u) = \bigg(1 \pm \frac1{2v+\rho}\bigg)\,\Gamma(v;\bm u) + \bigg(1 \mp \frac1{2v+\rho}\bigg)\, \Gamma(-v-\rho;\bm u),  \label{final_eig}
}
where 
\ali{ 
\Gamma(v;\bm u) &=
\Bigg( \prod_{j=1}^\ell \frac{v-c_j-\la^{(j)}_1}{v-c_j} \cdot \frac{v+\rho+c_j+\la^{(j)}_{2n}}{v+\rho+c_j} \Bigg) \Bigg( \prod_{i=1}^{m^{(1)}} \frac{v-u_i^{(1)}+1}{v-u_i^{(1)}} \Bigg)  
\Bigg( \frac{v+(\rho\pm1)/2-\mu_1}{v+(\rho\pm1)/2} \Bigg)
\el
& \qu + \Bigg( \prod_{j=1}^\ell \frac{v-c_j-\la^{(j)}_n}{v-c_j} \cdot \frac{v+\rho+c_j+\la^{(j)}_{n+1}}{v+\rho+c_j} \Bigg) \Bigg( \prod_{i=1}^m \frac{v-u_i+1}{v-u_i} \cdot \frac{v+\rho+u_i+1}{v+\rho+u_i} \Bigg)
\el
&\qq \times \Bigg( \prod_{i=1}^{m^{(n-1)}} \frac{v-u^{(n-1)}_i-1}{v-u^{(n-1)}_i} \Bigg) 
\Bigg( \frac{v+(\rho\pm1)/2-\mu_n}{v+(\rho\pm1)/2} \Bigg)
\el
& \qu + \sum_{k=2}^{n-1} \Bigg( \prod_{j=1}^\ell \frac{v-c_j-\la^{(j)}_k}{v-c_j} \cdot \frac{v+\rho+c_j+\la^{(j)}_{2n-k+1}}{v+\rho+c_j}  \Bigg) \el 
&\qq \times \Bigg( \prod_{i=1}^{m^{(k-1)}} \frac{v-u^{(k-1)}_i-1}{v-u^{(k-1)}_i} \Bigg)\Bigg( \prod_{i=1}^{m^{(k)}} \frac{v-u^{(k)}_i+1}{v-u^{(k)}_i} \Bigg) \Bigg( \frac{v+(\rho\pm1)/2-\mu_k}{v+(\rho\pm1)/2} \Bigg). \label{gln_eig}
}
\end{thrm}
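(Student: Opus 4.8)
The plan is to obtain \eqref{final_eig}--\eqref{gln_eig} by assembling three facts that are already in place, so that no further diagonalisation is required. These are: the symmetrised eigenvalue relation $\Lambda(v;\bm u)=\even{v}{p(v)\,\Gamma(v;\bm u)}$ recorded in \eqref{eig_full}; the closed level-by-level form of the nested eigenvalue $\Gamma(v;\bm u)$ produced by the $Y(\mfgl_n)$ nested algebraic Bethe ansatz of Appendix~\ref{app:gln}, namely \eqref{gln:full_eig}; and the explicit lowest weight $\la_k(v;\bm u)$ of the nested vacuum sector $M'$ furnished by Proposition~\ref{P:ResTT}, \emph{viz.}~\eqref{la(v;u)}. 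Since \eqref{nested} holds precisely when the auxiliary parameters $u^{(k)}_i$ solve the nested Bethe equations \eqref{gln:BE}, the residual spectral problem is already solved by the $Y(\mfgl_n)$ ansatz; the content of the theorem is merely the explicit rational form of the two ingredients $\Gamma$ and $p$.

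Concretely, I would first write $\Gamma(v;\bm u)$ in the form \eqref{gln:full_eig}, in which each weight $\la_k(v;\bm u)$ multiplies a product over the $Y(\mfgl_n)$-nesting parameters $u^{(k-1)}_i$ and $u^{(k)}_i$; these nesting products are supplied by the $Y(\mfgl_n)$ ansatz itself and are untouched by the weight substitution. I would then replace each $\la_k(v;\bm u)$ using \eqref{la(v;u)} and unwind its three kinds of factor. The bulk factors $\la_k(v)$ and $\la_{2n-k+1}(-v-\rho)$, expanded via \eqref{L:la(u)}, give the products over $j$ of $\tfrac{v-c_j-\la^{(j)}_k}{v-c_j}$ and of $\tfrac{v+\rho+c_j+\la^{(j)}_{2n-k+1}}{v+\rho+c_j}$. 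The boundary factor $\mu_k(v)$, expanded via \eqref{mu(u)}, gives $\tfrac{v+(\rho\pm1)/2-\mu_k}{v+(\rho\pm1)/2}$. Finally the auxiliary factors $\prod_j \la^{(a_j)}_k(v)\,\la^{(\at_j)}_k(v)$, read off from \eqref{la(u):aux}, equal $1$ for $1\le k\le n-1$ and contribute the $u_i$-dependent product only at level $k=n$. Collecting these reproduces \eqref{gln_eig} term by term.

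It then remains to feed the resulting $\Gamma(v;\bm u)$ into the symmetriser. Using the definition \eqref{v-even} together with $p(v)=1\pm\tfrac1{2v+\rho}$ from \eqref{p(v)} and its immediate consequence $p(-v-\rho)=1\mp\tfrac1{2v+\rho}$, the expression $\even{v}{p(v)\,\Gamma(v;\bm u)}=p(v)\,\Gamma(v;\bm u)+p(-v-\rho)\,\Gamma(-v-\rho;\bm u)$ collapses directly to \eqref{final_eig}.

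I expect the only genuine difficulty to be bookkeeping rather than mathematics. The two points most prone to error are the reflected bulk weight $\la_{2n-k+1}(-v-\rho)$, whose shifted argument and mirrored index $2n-k+1$ must be tracked so that the $+\rho$ and the weight label land correctly in the denominators of \eqref{gln_eig}; and the auxiliary contributions, which must be shown to be trivial for $1\le k\le n-1$ and to produce exactly $\prod_i\tfrac{v-u_i+1}{v-u_i}\cdot\tfrac{v+\rho+u_i+1}{v+\rho+u_i}$ at level $k=n$, in accordance with the $-\rho$ shift carried by the $a_l$-strand factors $R^t_{a_l a}(-u_l-v-\rho)$ in the nested monodromy matrix \eqref{nMM}.
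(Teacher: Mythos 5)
Your proposal is correct and follows essentially the same route as the paper: the paper likewise obtains Theorem \ref{T:eig} by taking the $Y(\mfgl_n)$ eigenvalue \eqref{gln:full_eig}, substituting the nested weights $\la_k(v;\bm u)$ from Proposition \ref{P:ResTT} (i.e.\ \eqref{la(v;u)}, unpacked via \eqref{L:la(u)}, \eqref{mu(u)} and \eqref{la(u):aux}), and then applying $\Lambda(v;\bm u)=\even{v}{p(v)\,\Gamma(v;\bm u)}$ from \eqref{eig_full}. Your closing remarks about tracking the $\rho$-shifts in the reflected and auxiliary factors correctly identify the only bookkeeping subtleties.
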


By \eqref{full_bae}, the Bethe equations for $\bm u$ are found by demanding that the residue of the eigenvalue \eqref{final_eig} vanishes at each of the $u_k$ with $1\le k \le m$. Evaluating this residue and using the fact that the $Y^\pm(\mfgl_{2n})$-system reduces to a $Y(\mfgl_n)$-system we obtain the following.

\begin{thrm} \label{T:BE} 
The Bethe equations for a $Y^\pm(\mfgl_{2n})$-system are given by \eqref{gln:BE} and
\ali{
&\frac{u_k+(\rho-1)/2}{u_k+(\rho+1)/2} \cdot \frac{u_k+(\rho\mp1)/2+\mu_n}{u_k+(\rho\pm1)/2-\mu_n}
\Bigg( \prod_{i \neq k} \frac{u_k-u_i-1}{u_k-u_i+1} \cdot \frac{u_k+u_i+\rho-1}{u_k+u_i+\rho+1} \Bigg)
\el
&\qu =\Bigg( \prod_{j=1}^\ell \frac{u_k-c_j-\la_n^{(j)}}{u_k-c_j-\la_{n+1}^{(j)}} \cdot 
\frac{u_k+\rho+c_j+\la_{n+1}^{(j)}}{u_k+\rho+c_j+\la_n^{(j)}} \Bigg) \Bigg( \prod_{i=1}^{m^{(n-1)}} \frac{u_k+\rho+u_i^{(n-1)}}{u_k+\rho+u_i^{(n-1)}+1} \cdot \frac{u_k-u_i^{(n-1)}-1}{u_k-u_i^{(n-1)}} \Bigg)  \label{top-bethe}
}
for $1\le k \le m$.
\end{thrm}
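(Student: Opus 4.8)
The plan is to read off \eqref{top-bethe} directly from the residue condition \eqref{full_bae}, i.e.\ $\Res{w \to u_k}\Lambda(w;\bm u)=0$ for each $1\le k\le m$; the companion equations \eqref{gln:BE} require no separate argument, as they are imposed on the nested roots $u_i^{(k)}$ by the requirement \eqref{nested} that $\Phi(\bm u)$ be an eigenvector of the nested transfer matrix. Writing $\Lambda(v;\bm u)=p(v)\,\Gamma(v;\bm u)+p(-v-\rho)\,\Gamma(-v-\rho;\bm u)$ as in \eqref{final_eig}, the first step is to note that among the three groups of summands in \eqref{gln_eig} only the index-$n$ summand, which I denote $\Gamma^{(n)}(v;\bm u)$, carries the top-level parameters $\bm u$, through the factor $\prod_i \tfrac{v-u_i+1}{v-u_i}\cdot\tfrac{v+\rho+u_i+1}{v+\rho+u_i}$. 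Hence for generic $\bm u$ (distinct from one another and from the $c_j$, the nested roots and $-(\rho\pm1)/2$, together with their reflections under $v\mapsto -v-\rho$) the only pole of $\Gamma(v;\bm u)$ at $v=u_k$ comes from the factor $\tfrac1{v-u_k}$ in $\Gamma^{(n)}$, while the only pole of $\Gamma(-v-\rho;\bm u)$ at $v=u_k$ comes from the factor $\tfrac1{v+\rho+u_k}$ in $\Gamma^{(n)}(-v-\rho;\bm u)$.

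Next I would compute the two residues. The summand $p(v)\Gamma(v;\bm u)$ contributes $p(u_k)\,\Res{v\to u_k}\Gamma^{(n)}(v;\bm u)$, while for the reflected summand one applies the chain rule: setting $g(v)=h(-v-\rho)$ with $h(w)=p(w)\Gamma(w;\bm u)$ gives $\Res{v\to u_k}g=-\Res{w\to -u_k-\rho}h=-\,p(-u_k-\rho)\,\Res{w\to -u_k-\rho}\Gamma^{(n)}(w;\bm u)$. This single sign flip is the crucial point, as it turns the vanishing of the total residue into the balanced identity
\[
p(u_k)\,\Res{v\to u_k}\Gamma^{(n)}(v;\bm u)=p(-u_k-\rho)\,\Res{w\to -u_k-\rho}\Gamma^{(n)}(w;\bm u).
\]
Both sides are residues at simple poles of an explicit rational function, evaluated using $\Res{v\to u_k}\tfrac{v-u_k+1}{v-u_k}=1$ and $\Res{w\to -u_k-\rho}\tfrac{w+\rho+u_k+1}{w+\rho+u_k}=1$, all remaining factors being simply evaluated at $u_k$ (respectively $-u_k-\rho$).

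The final step is to form the quotient of the two sides and rearrange it into \eqref{top-bethe}. I would factorise $\Gamma^{(n)}$ into the bulk/boundary piece $X(v)=\prod_j \tfrac{v-c_j-\la_n^{(j)}}{v-c_j}\cdot\tfrac{v+\rho+c_j+\la_{n+1}^{(j)}}{v+\rho+c_j}$, the nested piece $Z(v)=\prod_i \tfrac{v-u_i^{(n-1)}-1}{v-u_i^{(n-1)}}$, the boundary scalar $Y(v)=\tfrac{v+(\rho\pm1)/2-\mu_n}{v+(\rho\pm1)/2}$, and the top-level $u$-product. Collecting the $X$- and $Z$-ratios $X(u_k)/X(-u_k-\rho)$ and $Z(u_k)/Z(-u_k-\rho)$ on one side reproduces verbatim the right-hand side of \eqref{top-bethe}. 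On the other side, the top-level $u$-product ratio simplifies — after using $u_i-u_k=-(u_k-u_i)$ and peeling off the $i=k$ term — to $\tfrac{2u_k+\rho-1}{2u_k+\rho+1}\prod_{i\ne k}\tfrac{u_k-u_i-1}{u_k-u_i+1}\cdot\tfrac{u_k+\rho+u_i-1}{u_k+\rho+u_i+1}$, supplying the $\prod_{i\ne k}$ factor of \eqref{top-bethe}. Finally the scalar factors $\tfrac{p(-u_k-\rho)}{p(u_k)}$, $Y(-u_k-\rho)/Y(u_k)$ and the leftover $\tfrac{2u_k+\rho-1}{2u_k+\rho+1}$ collapse, on writing $2u_k+\rho\pm1=2(u_k+(\rho\pm1)/2)$ and cancelling the common $u_k+(\rho\pm1)/2$ and $u_k+(\rho\mp1)/2$ factors, to the boundary prefactor $\tfrac{u_k+(\rho-1)/2}{u_k+(\rho+1)/2}\cdot\tfrac{u_k+(\rho\mp1)/2+\mu_n}{u_k+(\rho\pm1)/2-\mu_n}$, exactly as in \eqref{top-bethe}.

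The main obstacle is bookkeeping rather than conceptual: one must check carefully that no other summand of \eqref{gln_eig} contributes a pole at $v=u_k$, track the single chain-rule sign without error (it is what equates, rather than subtracts, the two residues), and verify that the boundary scalars recombine into precisely the stated reflection factor. Once the sign is pinned down, every remaining manipulation is elementary and matches \eqref{top-bethe} factor by factor.
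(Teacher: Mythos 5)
Your proposal is correct and follows exactly the route the paper takes: the paper derives \eqref{top-bethe} by demanding $\Res{w\to u_k}\Lambda(w;\bm u)=0$ as in \eqref{full_bae} and "evaluating this residue," with the $u_i^{(k)}$ equations \eqref{gln:BE} inherited from the nested $Y(\mfgl_n)$-system; it simply omits the bookkeeping you spell out. Your details check out — the chain-rule sign that converts the vanishing residue into the balanced identity, the isolation of the pole in the index-$n$ summand of \eqref{gln_eig}, and the recombination of $p$, $Y$ and the $i=k$ factor into $\tfrac{u_k+(\rho-1)/2}{u_k+(\rho+1)/2}\cdot\tfrac{u_k+(\rho\mp1)/2+\mu_n}{u_k+(\rho\pm1)/2-\mu_n}$ all reproduce \eqref{top-bethe} factor by factor.
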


\begin{rmk}
The condition \eqref{gln:betheq} is equivalent to the vanishing of the residue of $\Lambda(v;\bm u)$ in \eqref{final_eig} at each of the $u_i^{(k)}$, which is the expected Bethe equation for a system of equations.
\end{rmk}

\begin{rmk} \label{R:NBE}
The eigenvalue $\Lambda(v;\bm u)$ for a $Y^\pm(\mfgl_{2n})$-system, when the evaluation module $M(\mu)$ of $Y^\pm(\mfgl_{2n})$ in \eqref{M} is a one-dimensional, was calculated in \cite{ACDFR1} by means of the analytical Bethe ansatz. By shifting the roots of the equations and introducing the assumption that the roots come in pairs, one can recover the eigenvalue found in \cite{ACDFR1} from \eqref{gln_eig} and \eqref{final_eig}. However, this assumption appears to be unnecessary for the algebraic Bethe ansatz.
\end{rmk}


\subsection{A trace formula for the Bethe vectors} \label{sec:trace}

Recall the trace formula for the Bethe vectors for the closed $\mfgl_n$-symmetric spin chain \cite{BeRa1,TaVa}. 
This formula allows us to write an expression of the nested Bethe vector of the residual $Y(\mfgl_n)$-system in terms of a trace of elements of the nested monodromy matrix as follows:
\ali{ \label{nested_tf}
& \Phi_{a_1,\dots,a_m,\at_1,\dots,\at_m} (\bm u,\bm{u}^{(1)},\dots ,\bm{u}^{(n-1)})
\el 
& \qq = \tr_{\ol{V}} \Bigg[ \! \Bigg(\prod_{k=1}^{n-1}\prod_{i=1}^{m^{(k)}\!\!} T_{a^{k}_i}(u^{(k)}_i,\bm u) \!\Bigg)\! \Bigg(\prod_{k=2}^{n-1} \prod_{l=1}^{k-1} \prod_{i=1}^{m^{(k)}\!\!\!} \prod_{j=m^{(l)}\!\!}^1 \!\! R_{a^k_i a^l_j}(u^{(k)}_i{-}u^{(l)}_j) \!\Bigg) 
\el 
& \hspace{6cm} \times (e_{21})^{\ot m^{(1)}} \ot \cdots \ot (e_{n,n-1})^{\ot m^{(n-1)}} \Bigg]
 \cdot  \eta ,
}
where the trace is taken over the space $\ol{V} := V_{a_1^1} \ot \cdots \ot V_{a^{n-1}_{m^{(n-1)}}} \cong (\C^n)^{\ot \ol{m}}$ with $\ol{m} = \sum_{i=1}^{n-1} m^{(i)}$ and $\eta$ is the lowest vector \eqref{M'-low} of the nested vacuum sector $M'$, c.f.~\eqref{M'}.
Our goal is to extended this formula for Bethe vectors \eqref{bethevec} of the $Y^\pm(\mfgl_{2n})$-system. 

Recall the notation from Section \ref{sec:block}. The $R$-matrix \eqref{R(u)} acting on $\C^{2n} \ot \C^{2n}$ is denoted by $\RR(u)$ and the matrix units of $\End(\C^{2n})$ (resp.~$\End_U(\C^{2})$) by ${\tt e}_{ij}$ for $1\le i,j \le 2n$ (resp.~$x_{ij}$ for $1\le i,j \le 2$). 
We will use symbols $W_a$ ($W_{a_i}$, $W_{\at_i}$, $W_{a_i^k}$, ...) to denote copies of $\C^{2n}$; symbols $V_a$ ($V_{a_i}$, $V_{\at_i}$, $V_{a_i^k}$, ...) will be reserved for copies of $\C^{n}$. 
When necessary, we will write $({\tt e}_{ij})_a$ to indicate that ${\tt e}_{ij} \in \End(W_a)$, and similarly for $(x_{ij})_a$ and $(e_{ij})_a$; here recall \eqref{e=x*e}.

\begin{prop} \label{P:tf} 
The Bethe vector for the $Y^\pm(\mfgl_{2n})$-system can be written as
\ali{ \label{tf_full}
& \Psi(\bm u, \bm u^{(1)}, \dots, \bm u^{(n-1)}) \el
& \qq = \tr_{\ol{W}} \Bigg[
\prod_{l=1}^m  \Bigg( \! \Bigg( \prod_{j=1}^{l-1} \RR^t_{a_j a_l}(-u_j-u_l-\rho) \!\Bigg) \hat{S}_{a_l}(u_l; \bm u^{(1)}, \dots, \bm u^{(n-1)}) \Bigg) \Bigg(\prod_{k=1}^{n-1}\prod_{i=1}^{m^{(k)}\!\!}S_{a^{k}_i}(u^{(k)}_i)\Bigg)
\el
&\qq\qq\qu \times \Bigg(\prod_{k=2}^{n-1} \prod_{l=1}^{k-1} \prod_{i=1}^{m^{(k)}\!\!} \prod_{j=m^{(l)}\!}^1 \! 
\RR_{a^k_i a^l_j}(u^{(k)}_i{-}u^{(l)}_j) \Bigg) ({\tt e}_{n+1,n})^{\ot m} \ot ({\tt e}_{21})^{\ot m^{(1)}} \ot \cdots \ot ({\tt e}_{n,n-1})^{\ot m^{(n-1)}} \Bigg] \cdot \xi, \nn\\[-2.6em]
}
where the trace is taken over the space $\ol{W} := W_{a_1} \ot \cdots \ot W_{a_m} \ot W_{a_1^1} \ot \cdots \ot W_{a^{n-1}_{m^{(n-1)}}} \cong (\C^{2n})^{\ot (m+\ol{m})}$
with $\ol{m} = \sum_{i=1}^{n-1} m^{(i)}$, and
\ali{ \label{S_hat}
&\hat{S}_a(u;\bm u^{(1)},\dots, \bm u^{(n-1)}) = \Bigg(\prod_{k=n-1}^{1}\prod_{i=m^{(k)}\!\!}^1 \RR_{a a^{k}_i}(-u{-}u^{(k)}_i{-}\rho) \Bigg) S_a(u) 
\Bigg( \prod_{k=1}^{n-1}\prod_{i=1}^{m^{(k)}\!\!} \RR^t_{a a^{k}_i}(u{-}u^{(k)}_i) \Bigg) ,
}
and $\xi$ is the lowest vector of the $Y^{\pm}(\mfgl_{2n})$-module $M$ defined in \eqref{M}.
\end{prop}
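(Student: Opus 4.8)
The plan is to derive \eqref{tf_full} by feeding the $\mfgl_n$ trace formula \eqref{nested_tf} for the nested Bethe vector into the definition $\Psi(\bm u) = \bethe{m}{\bm u}\cdot\Phi(\bm u)$ and then reorganising the result, via the block decomposition of Section~\ref{sec:block}, into a single trace over the $\C^{2n}$ auxiliary spaces $\ol W$. Concretely, I would read \eqref{tf_full} from the inside out and perform the partial traces over the two-dimensional factors of each space $W_a\cong\C^2\ot\C^n$ first, using \eqref{e=x*e} to write the shift operators as ${\tt e}_{i+1,i}=x_{11}\ot e_{i+1,i}$ for $1\le i\le n-1$ and ${\tt e}_{n+1,n}=x_{21}\ot e_{1n}$, and then show that what remains is precisely $\bethe{m}{\bm u}$ acting on the right-hand side of \eqref{nested_tf}. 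Since $C(u)$ annihilates $M^0$ (Lemma~\ref{L:M0-stable}), the surviving contributions are controlled by the vacuum, and the lowest vector $\xi$ in \eqref{tf_full} is exactly the $M^0$-component of the nested vacuum vector $\eta$ of \eqref{M'-low}.

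The nested part is governed by the block structure. For each nested space $W_{a^k_i}$, the shift ${\tt e}_{i+1,i}=x_{11}\ot e_{i+1,i}$ projects its two-dimensional factor onto the $x_{11}$ block; by \eqref{R:new} and \eqref{T:new} this restricts $S_{a^k_i}(u^{(k)}_i)$ to its $A$-block and each $\RR_{a^k_i a^l_j}$ to the plain $R$-matrix $R_{a^k_i a^l_j}$ on $\C^n\ot\C^n$. Combined with the identification \eqref{nMM} of the nested monodromy matrix, this turns the nested factors of \eqref{tf_full} into the corresponding factors of \eqref{nested_tf}, still carried by the full $\C^{2n}$ top spaces $W_{a_l}$. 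I would carry out the required reorderings with the transposed Yang--Baxter moves already used in Lemmas~\ref{L:movement} and~\ref{L:RRt=tRR}, which show that the dressing $\RR^t$-matrices commute past the nested $R$-matrices in the prescribed way.

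The top level is the crux. Here I would use the rewriting rule \eqref{XBY} together with the observation that the matrix element $\scb_{\bar\imath j}(u)$ is the entry of $S_a(u)$ attached to the unit ${\tt e}_{\bar\imath,\,n+j}=x_{12}\ot e_{\bar\imath j}$, to show that contracting $S_{a_l}(u_l)$ over the two-dimensional factor of $W_{a_l}$ against ${\tt e}_{n+1,n}=x_{21}\ot e_{1n}$ (which picks out exactly the $B$-block) reproduces the creation operator $\bee{l}{u_l}$ of Definition~\ref{D:beta}. The point to verify is that the two $\C^n$ auxiliary spaces $V_{\at_l}$ and $V_{a_l}$ carried by $\bee{l}{u_l}$ are realised inside the single $\C^n$-factor of $W_{a_l}$, one through the left $\RR$-dressing and the other through the right $\RR^t$-dressing of $\hat S_{a_l}$ in \eqref{S_hat}. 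Matching the dressing then amounts to checking that, after the $\C^2$-projections (top space to its $B$-block, nested spaces to their $A$-blocks), the factors $\RR_{a_l a^k_i}(-u_l-u^{(k)}_i-\rho)$ and $\RR^t_{a_l a^k_i}(u_l-u^{(k)}_i)$ inside $\hat S_{a_l}$, together with the prefactors $\RR^t_{a_j a_l}(-u_j-u_l-\rho)$, collapse to exactly the ordered $R$-matrices of \eqref{topB} and the $R^t$-couplings appearing in \eqref{nMM}; the recursion \eqref{beta-rec} and Lemma~\ref{L:B=BaReR} organise this into an induction on the excitation number $m$.

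The main obstacle is precisely this last matching of the dressing matrices and the bookkeeping of the $\C^2$-index flow through the six-vertex block forms \eqref{R:new}. Because a single $\C^{2n}$ top space $W_{a_l}$ must encode the two distinct nested roles $\at_l$ and $a_l$ of \eqref{topB}, one has to verify carefully that the left and right $\RR$-dressings of $\hat S_{a_l}$ distribute these roles consistently across all the nested couplings, so that (for instance) a projected $\RR^t_{a_j a_l}$ prefactor becomes an $R_{a_j \at_l}$ coupling. This is where the explicit structure of $\RR$ and $\RR^t$ in \eqref{R:new}, the form $Q=\sum_{1\le i,j\le n}e_{ij}\ot e_{\bar\jmath\,\bar\imath}$, and the transposition properties of the $R$-matrices are used decisively. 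Once this matching is in place, the two partial-trace computations assemble into the single trace over $\ol W$, and comparison with $\bethe{m}{\bm u}\cdot\Phi(\bm u)$ completes the proof.
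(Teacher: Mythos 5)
Your proposal is correct and follows essentially the same route as the paper's proof: substitute the $\mfgl_n$ trace formula \eqref{nested_tf} into $\Psi(\bm u)=\bethe{m}{\bm u}\cdot\Phi(\bm u)$, convert the creation operators $\beta$ to the $B$-block via \eqref{XBY}, and collapse the $\C^{2n}$ trace \eqref{tf_full} onto the resulting $\C^n$ expression using the six-vertex block forms \eqref{R:new}, the vanishing of $C$ on the vacuum, and the fact that $\RR$ and $\RR^t$ preserve the $x_{11}/x_{21}$ content of the tensor product. The bookkeeping issue you single out --- how the single space $W_{a_l}$ encodes both the $\at_l$ and $a_l$ couplings through the left and right dressings of $\hat{S}_{a_l}$ --- is resolved in the paper exactly as you anticipate, via \eqref{XBY} and the RTT reorderings, the only cosmetic difference being that the paper first records an intermediate all-$\C^n$ trace formula before performing the block reduction.
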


\begin{proof} 
We start from \eqref{bethevec}, with $\Phi$ replaced by \eqref{nested_tf},
\ali{ \label{tf_start}
& \Psi (\bm u, \bm u^{(1)}, \dots, \bm u^{(n-1)}) \el
& \qq = \tr_{\ol{V}} \Bigg[
\prod_{l=1}^m \bigg(\beta_{\at_l a_l}(u_l) \prod_{j=l-1}^1 R_{a_j \at_l} (-u_j-u_l-\rho) \bigg) \Bigg(\prod_{k=1}^{n-1}\prod_{i=1}^{m^{(k)}}T_{a^{k}_i}(u^{(k)}_i;\bm u)\Bigg)
\el
&\qq\qq\qu \times 
\Bigg(\prod_{k=2}^{n-1} \prod_{l=1}^{k-1} \prod_{i=1}^{m^{(k)}} \prod_{j=m^{(l)}}^1 
R_{a^k_i a^l_j}(u^{(k)}_i-u^{(l)}_j)
\Bigg) (e_{21})^{\ot m^{(1)}} \ot \cdots \ot (e_{n,n-1})^{\ot m^{(n-1)}} \Bigg]
 \cdot \eta. 
}
The proof shall proceed in two steps. 
First, we shall rewrite the above formula in terms of the $B$-block operator, c.f.~\eqref{block}, rather than creation operators $\beta$, which will allow us to introduce a trace over the corresponding auxiliary spaces.
Then, from this formula, we will argue that the $n\times n$ matrix operators $B$, $T$ and $R$ may be replaced by their $2n \times 2n$ counterparts to complete the proof.


Note that, by commuting matrices which act on different spaces, the creation operator and the product of nested monodromy matrices may be reordered as follows:
\[
\prod_{l=1}^m \bigg(\beta_{\at_l a_l}(u_l) \prod_{j=l-1}^1 R_{a_j \at_l} (-u_j-u_l-\rho) \bigg)=\Bigg( \prod_{l=1}^m \beta_{\at_l a_l}(u_l) \Bigg) \Bigg( \prod_{l=1}^m \prod_{j=l-1}^1 R_{a_j \at_l} (-u_j-u_l-\rho) \Bigg)
\]
and
\aln{
\prod_{k=1}^{n-1}\prod_{i=1}^{m^{(k)}}T_{a^{k}_i}(u^{(k)}_i;\bm u) & =\prod_{k=1}^{n-1} \prod_{i=1}^{m^{(k)}\!\!} 
\Bigg[\! \Bigg( \prod_{l=1}^m R^t_{\at_l a^{k}_i}(u_l-u^{(k)}_i) \Bigg) \!
\Bigg( \prod_{l=1}^m R^t_{a_l a^{k}_i}(-u_l\!-\!u^{(k)}_i\!-\!\rho) \Bigg) A_{a^{k}_i}(u^{(k)}_i)\Bigg]
\\
& =\Bigg( \prod_{l=1}^m \big( {\bm R}_{\at_l}(u_l) \big)^{t_{\at_l}} \Bigg) 
\Bigg( \prod_{l=1}^m \big({\bm R}_{a_l}(-u_l-\rho)\big)^{t_{a_l}} \Bigg) 
\Bigg( \prod_{k=1}^{n-1}\prod_{i=1}^{m^{(k)}} A_{a^{k}_i}(u^{(k)}_i) \Bigg),
}
where we have introduced
\equ{
{\bm R}_{a}(u)= \prod_{k=n-1}^{1}\prod_{i=m^{(k)}}^1 R_{a a^{k}_i}(u-u^{(k)}_i). \label{BR}
}
Dependence on $\bm u^{(1)}, \dots, \bm u^{(n-1)}$ has been omitted for clarity. 
Note also that, as a product of $R$-matrices, ${\bm R}_{a}(u)$ satisfies the RTT relation
\[
R_{ab}(u-v) \, {\bm R}_{a}(u) \, {\bm R}_{b}(v) = {\bm R}_{b}(v) \, {\bm R}_{a}(u) \,  R_{ab}(u-v).
\]
Including these new expressions in \eqref{tf_start}, we make the following reordering, 
\aln{
&\Bigg( \prod_{l=1}^m \prod_{j=l-1}^1 R_{a_j \at_l} (-u_j-u_l-\rho) \Bigg)
\Bigg( \prod_{l=1}^m  {\bm R}^{t_{\at_l}}_{\at_l}(u_l) \Bigg) 
\Bigg( \prod_{l=1}^m {\bm R}^{t_{a_l}}_{a_l}(-u_l-\rho) \Bigg) 
\\
&\qq =
\Bigg( \prod_{l=1}^m   \Bigg( \prod_{j=l-1}^1 R_{a_j \at_l} (-u_j-u_l-\rho) \Bigg) {\bm R}^{t_{\at_l}}_{\at_l}(u_l) \Bigg)
\Bigg( \prod_{l=1}^m {\bm R}^{t_{a_l}}_{a_l}(-u_l-\rho) \Bigg).
}
We now proceed to make repeated applications of the RTT relation, in a similar manner to the proof of Lemma~\ref{L:movement}. 
For example we have, at the centre of the expression,
\aln{
&\Bigg( \prod_{j=m-1}^1 \! R_{a_j \at_m} (-u_j-u_m-\rho) \!\Bigg) {\bm R}^{t_{\at_m}}_{\at_m}(u_m)\,
{\bm R}^{t_{a_1}}_{a_1}(-u_1\!-\!\rho) \cdots {\bm R}^{t_{a_m}}_{a_m}(-u_m\!-\!\rho)
\\
&\qq ={\bm R}^{t_{a_1}}_{a_1}(-u_1-\rho) \cdots {\bm R}^{t_{a_{m-1}}}_{a_{m-1}}(-u_{m-1}-\rho)  {\bm R}^{t_{\at_m}}_{\at_m}(u_m) \Bigg( \prod_{j=m-1}^1 R_{a_j \at_m} (-u_j-u_m-\rho) \Bigg)
{\bm R}^{t_{a_m}}_{a_m}(-u_m-\rho).
}
Continuing inductively, we obtain the equality
\aln{
&\prod_{l=1}^m\Bigg( \! \Bigg( \prod_{j=l-1}^1 R_{a_j \at_l} (-u_j-u_l-\rho) \Bigg) {\bm R}^{t_{\at_l}}_{\at_l}(u_l)  \Bigg)
\Bigg( \prod_{l=1}^m {\bm R}^{t_{a_l}}_{a_l}(-u_l-\rho) \Bigg)
\\
&\qq\qq =\prod_{l=1}^m\Bigg(  {\bm R}^{t_{\at_l}}_{\at_l}(u_l)  \Bigg( \prod_{j=l-1}^1 R_{a_j \at_l} (-u_j-u_l-\rho) \Bigg) {\bm R}^{t_{a_l}}_{a_l}(-u_l-\rho) \Bigg).
}
Therefore, \eqref{tf_start} is equivalent to 
\ali{ \label{tf_beta}
& \Psi(\bm u, \bm u^{(1)}, \dots, \bm u^{(n-1)}) \el
& \qq = \tr_{\ol{V}} \Bigg[ \prod_{l=1}^m\Bigg( \beta_{\at_l a_l}(u_l)\, {\bm R}^{t_{\at_l}}_{\at_l}(u_l)  \Bigg( \prod_{j=l-1}^1 R_{a_j \at_l} (-u_j-u_l-\rho) \Bigg) {\bm R}^{t_{a_l}}_{a_l}(-u_l-\rho) \Bigg) \Bigg(\prod_{k=1}^{n-1}\prod_{i=1}^{m^{(k)}}A_{a^{k}_i}(u^{(k)}_i)\Bigg)
\el
&\qq\qq\qu\; \times
\Bigg(\prod_{k=2}^{n-1} \prod_{l=1}^{k-1} \prod_{i=1}^{m^{(k)}} \prod_{j=m^{(l)}}^1 
R_{a^k_i a^l_j}(u^{(k)}_i-u^{(l)}_j) (e_{21})^{\ot m^{(1)}} \ot \cdots \ot (e_{n,n-1})^{\ot m^{(n-1)}} \Bigg]
 \cdot \eta. \nn\\[-2.6em]
}
To obtain an expression in terms of the $B$-block operator (as opposed to the creation operator $\beta$), we utilise \eqref{XBY}. Indeed, 
\aln{ 
&\Psi(\bm u, \bm u^{(1)}, \dots, \bm u^{(n-1)}) 
\\
& \qu = \tr_{\ol{V}} \left[
\left( \sum_{\substack{r_1, \dots, r_m,\\ s_1, \dots, s_m=1}}^n
\Bigg(\prod_{l=1}^m\Bigg( \Bigg( \prod_{j=1}^{l-1} R^t_{a_j a_l} (-u_j-u_l-\rho) \Bigg) {\bm R}_{a_l}(-u_l-\rho) B_{a_l}(u_l) {\bm R}^{t_{a_l}}_{a_l}(u_l)  
 \Bigg) \Bigg)_{s_1, \bar{r}_1, \dots, s_m, \bar{r}_m } \right.\right.
 \\
& \qq\qu \left. \phantom{\sum_{\substack{r_1, \dots, r_m,\\ s_1, \dots, s_m=1}}^n} \ot  e^*_{r_1} \ot \cdots \ot e^*_{r_m} \ot e^*_{s_1} \ot \cdots \ot e^*_{s_m} \right) \Bigg(\prod_{k=1}^{n-1}\prod_{i=1}^{m^{(k)}}A_{a^{k}_i}(u^{(k)}_i)\Bigg)
\\
& \qq\qu \left. \phantom{\sum_{\substack{aa\\ aa}}^n} \times 
\Bigg(\prod_{k=2}^{n-1} \prod_{l=1}^{k-1} \prod_{i=1}^{m^{(k)}} \prod_{j=m^{(l)}}^1 
R_{a^k_i a^l_j}(u^{(k)}_i-u^{(l)}_j)
\Bigg) (e_{21})^{\ot m^{(1)}} \ot \cdots \ot (e_{n,n-1})^{\ot m^{(n-1)}} \right]
 \cdot \eta.
}
Recall that $\eta = (e_1)^{\ot m} \ot (e_1)^{\ot m} \ot \xi$. After contracting the dual vectors with the vector $\eta$, the resulting matrix element may then be written in terms of a trace over $\wt V := V_{a_1}\ot\cdots\ot V_{a_m} \cong (\C^{n})^{\ot m}$, using the identity $(M)_{ji} =\tr(M e_{ij})$. This gives the expression
\ali{ \label{tf_half}
& \Psi(\bm u, \bm u^{(1)}, \dots, \bm u^{(n-1)}) \el 
& \qq = \tr_{\wt V,\ol{V}} \Bigg[ \prod_{l=1}^m\Bigg(
\Bigg( \prod_{j=1}^{l-1} R^t_{a_j a_l} (-u_j-u_l-\rho) \Bigg) {\bm R}_{a_l}(-u_l-\rho) B_{a_l}(u_l) \, {\bm R}^{t_{a_l}}_{a_l}(u_l) \Bigg)  \Bigg(\prod_{k=1}^{n-1}\prod_{i=1}^{m^{(k)}}A_{a^{k}_i}(u^{(k)}_i)\Bigg)
\el
&\qq\qq\qq \times
\Bigg(\prod_{k=2}^{n-1} \prod_{l=1}^{k-1} \prod_{i=1}^{m^{(k)}} \prod_{j=m^{(l)}}^1 
R_{a^k_i a^l_j}(u^{(k)}_i-u^{(l)}_j)
\Bigg) (e_{1n})^{\ot m} \ot (e_{21})^{\ot m^{(1)}} \ot \cdots \ot (e_{n,n-1})^{\ot m^{(n-1)}} \Bigg]
 \cdot \xi . \nn\\[-2.6em]
}


It remains to show that this expression may be rewritten in terms of the original monodromy matrix $S(u)$ and the $R$-matrix $\RR(u)$.
We will do this by showing that the expression \eqref{tf_full} reduces to the above expression \eqref{tf_half}. 
We put $\hat{S}_a := \hat{S}_a(u;\bm u^{(1)},\dots, \bm u^{(n-1)})$ and rewrite the r.h.s.~of \eqref{tf_full} as
\ali{ 
& \tr_{\ol{W}}\!\Bigg[ \Bigg(\prod_{l=1}^m \Bigg(\prod_{j=1}^{l-1} \RR^t_{a_ja_l}(-u_j-u_l-\rho) \Bigg) \hat{S}_{a_l} \Bigg) ({\tt e}_{n+1,n})^{\ot m} \nn\\[-.5em] & \qq\qq\qu \ot \Big( \hat{\bm A}\, \hat{\bm R}\, \Big( ({\tt e}_{21})^{\ot m^{(1)}} \ot \cdots \ot ({\tt e}_{n,n-1})^{\ot m^{(n-1)}} \Big)\Big) \Bigg] \cdot \xi ,
\label{tf_full-2}
}
where operators $\hat{\bm A}$ and $\hat{\bm R}$ denote the products in the third line of \eqref{tf_full}.
Recall \eqref{e=x*e} and write $({\tt e}_{n+1,n})^{\ot m} = ({x}_{21})^{\ot m} \ot (e_{1n})^{\ot m}$ and
\[
({\tt e}_{21})^{\ot m^{(1)}} \ot \cdots \ot ({\tt e}_{n,n-1})^{\ot m^{(n-1)}} = (x_{11})^{\ot \ol{m}} \ot (e_{21})^{\ot m^{(1)}} \ot \cdots \ot (e_{n,n-1})^{\ot m^{(n-1)}} .
\]
From \eqref{R:new} we see that 
\[
\RR_{a_i^k a_j^l}(u_i^{(k)}-u_j^{(l)})\, ({x}_{11})_{a_i^k} ({x}_{11})_{a_j^l} =  ({x}_{11})_{a_i^k} ({x}_{11})_{a_j^l}  R_{a_i^k a_j^l}(u_i^{(k)}-u_j^{(l)})\, .
\]
Moreover,
\[
S_{a_i^k}(u_i^{(k)})\, (x_{11})_{a_i^k} = (x_{11})_{a_i^k}\, A_{a_i^k}(u_i^{(k)}) + (x_{21})_{a_i^k}\, C_{a_i^k}(u_i^{(k)}).
\] 
Since $C_{a_i^k}(u_i^{(k)})\cdot \xi =0$, we can neglect the $C$ operator above. Therefore we can replace $\hat{\bm A}\, \hat{\bm R}\, \Big( ({\tt e}_{21})^{\ot m^{(1)}} \ot \cdots \ot ({\tt e}_{n,n-1})^{\ot m^{(n-1)}} \Big)$ in \eqref{tf_full-2} with $(x_{11})^{\ol{m}} \ot \Big( \hat{A}\, \hat{R}\, \Big( (e_{21})^{\ot m^{(1)}} \ot \cdots \ot (e_{n,n-1})^{\ot m^{(n-1)}} \Big)\Big)$, where operators $\hat{A}$ and $\hat{R}$ denote the operators in the third line of \eqref{tf_half}. 
Now set $\ol{U} := (\C^2)^{\ot(m+\ol{m})}$ and consider the expression
\equ{
\tr_{\ol{U}} \Bigg[ \Bigg(\prod_{l=1}^m \Bigg(\prod_{j=1}^{l-1} \RR^t_{a_ja_l}(-u_j-u_l-\rho) \Bigg) \hat{S}_{a_l} \Bigg) (x_{21})^{\ot m} \ot (x_{11})^{\ol{m}}  \Bigg] . \label{tf_part}
}
To complete the proof we need to show that the trace above is equivalent to the operators in the second line of \eqref{tf_half}. 
Observe from \eqref{R:new} that operators $\RR$ and $\RR^t$ acting on tensor products of $x_{11}$'s and $x_{21}$'s preserve their numbers in the tensor product. Hence the trace \eqref{tf_part} is only nonzero when each $\hat{S}_{a_l}$ maps $(x_{21})_{a_l}$ to $(x_{11})_{a_l}$. 
In particular, using \eqref{R:new}, and the notation \eqref{BR}, we find that
\[
\hat{S}_{a_l} (x_{21})^{\ot (l)} \ot (x_{11})^{m-l+\ol{m}} = (x_{21})^{\ot (l-1)} \ot (x_{11})^{m-l+1+\ol{m}} {\bm R}_{a_l}(-u_l-\rho) B_{a_l}(u_l) {\bm R}^{t_{a_l}}_{a_l}(u_l) + (...) ,
\]
where $(...)$ denotes the terms that do not contribute to the trace. Moreover,
\aln{
& \Bigg(\prod_{j=1}^{l-1} \RR^t_{a_ja_l}(-u_j-u_l-\rho) \Bigg) (x_{21})^{\ot (l)} \ot (x_{11})^{m-l+\ol{m}} \\
& \qq = (x_{21})^{\ot (l)} \ot (x_{11})^{m-l+\ol{m}} \Bigg(\prod_{j=1}^{l-1} R^t_{a_ja_l}(-u_j-u_l-\rho) \Bigg) + (...),
}
where we have used the same notation as above. Hence the trace \eqref{tf_part} is indeed equivalent to the operators in the second line of \eqref{tf_half}. This completes the proof. 
\end{proof}

The trace formula \eqref{tf_full} allows us to obtain the explicit form of the Bethe vectors in terms of the matrix elements of the monodromy matrix $S_a(u)$.

\begin{exam} \label{Ex1}
Let $n\ge 2$, $m\ge 1$ and $m^{(1)}=\dots=m^{(n-1)}=0$. Then
\[
\Psi(u_1,\dots,u_m) = \Bigg(\prod_{i=1}^m \Bigg(\prod_{j=1}^{i-1} \frac{u_i + u_j + \rho+1}{u_i + u_j + \rho}\Bigg) s_{n,n+1}(u_i) \Bigg) \cdot \xi . 
\]
\end{exam}

\begin{exam} \label{Ex2}
Let $n\ge 2$, $m=m^{(i)}=1$ and $m^{(j)}=0$ for all $j\ne i$. Then
\aln{
\Psi(u_1,u_1^{(i)}) &= \Bigg( s_{n,n+1}(u_1)\,s_{i,i+1}(u_1^{(i)})  \\ 
 & \qq + \frac{1}{u_1-u_1^{(i)}} \Bigg( \frac{u_1-u_1^{(i)}-1}{u_1+u_1^{(i)}+\rho} \, s_{i,n+1}(u_1) - s_{n,2n-i+1}(u_1) \Bigg) s_{n,i+1}(u_1^{(i)}) \Bigg) \cdot \xi . 
}
\end{exam}

\begin{exam} \label{Ex3}
Let $n\ge 2$, $m=2$, $m^{(i)}=1$ and $m^{(j)}=0$ for all $j\ne i$. Then
\aln{
\Psi(u_1,u_2,u_1^{(i)}) &= \frac{u_1+u_2+\rho+1}{u_1+u_2+\rho} \Bigg( s_{n,n+1}(u_1)\,s_{n,n+1}(u_2)\,s_{i,i+1}(u_1^{(i)}) \\ & 
\qu - \Bigg( \frac{1}{u_2-u_1^{(i)}}\,s_{n,n+1}(u_1)\,s_{n,2n-i+1}(u_2) \\ 
& \qq\qu - \frac{u_2-u_1^{(i)}-1}{u_2-u_1^{(i)}}\cdot \frac{1}{u_2+u_1^{(i)}+\rho} \Bigg( s_{n,n+1}(u_1)\,s_{i,n+1}(u_2) + \frac{u_2+u_1^{(i)}+\rho+1}{u_1-u_1^{(i)}} \\ 
& \qq\qq\qu \times \Bigg( \frac{u_1-u_1^{(i)}-1}{u_1+u_1^{(i)}+\rho}\,s_{i,n+1}(u_1) - s_{n,2n-i+1}(u_1) \Bigg) s_{n,n+1}(u_2) \Bigg) s_{n,i+1}(u_1^{(i)}) \Bigg) \Bigg) \cdot \xi . 
}
\end{exam}

\begin{rmk}
Note that in Examples \ref{Ex2} and \ref{Ex3} $s_{n,i+1}(u_{1}^{(i)}) \cdot \xi=0$ unless $i=n-1$.
\end{rmk}


\appendix


\section{Nested algebraic Bethe ansatz for \texorpdfstring{$Y(\mfgl_n)$}{}} \label{app:gln}

In this appendix we give, in full detail, the nested algebraic Bethe ansatz for the Yangian $Y(\mfgl_n)$, first constructed by Kulish and Reshetikhin in \cite{KuRs}, to which the algebraic Bethe ansatz for the twisted Yangian $Y^\pm(\mfgl_{2n})$ reduces. Many technical details are omitted in \emph{loc.~cit.} (especially the steps required to derive the explicit form of the unwanted terms; these steps are also omitted in \cite{BeRa1}). Our aim is to fill in these gaps and provide the reader with complete details. We assume that the full quantum space $L$ of the system is a tensor product of the evaluation modules of $Y(\mfgl_n)$, as defined in \eqref{L},
\equ{
L := L(\la^{(1)})_{c_1} \ot L(\la^{(2)})_{c_2} \ot \dots \ot L(\la^{(\ell)})_{c_\ell}, \label{gln:L}
}
with the lowest weight $\la(u)$ given by \eqref{L:la(u)} for $1\le i \le n$.


\rnc{\bee}[1]{\mathscr{b}_{#1}}

\nc{\ayy}[1]{\mathscr{a}(#1)}

\nc{\Bee}[2]{B_{#1}(#2)}
\nc{\Cee}[2]{C_{#1}(#2)}
\nc{\Dee}[2]{D_{#1}(#2)}
\nc{\tee}[1]{t_{i_#1j_#1}(u_{i_#1})}

\nc{\cee}[1]{\mathscr{c}_{#1}}
\nc{\dee}[1]{\mathscr{d}_{#1}}

\nc{\arr}[2]{R'_{#1}(#2)}
\nc{\arrc}[2]{\check{R}'_{#1}(#2)}


\subsection{Exchange relations} \label{sec:gln:rels}

Consider the Yangian $Y(\mfgl_n)$, as defined in Section~\ref{sec:Y}. The $R$-matrix is $R(u) = I - u^{-1} P \in \End(\C^n \ot \C^n)[[u^{-1}]]$, and the generating matrix $T_a(u) \in \End(V_a) \ot Y(\mfgl_n)[[u^{-1}]]$; here $V_a=\C^n$ is an auxiliary space. We will refer to $T_a(v)$ as the monodromy matrix.

Let $V'_a=\C^{n-1}$ and $V^{(k)}_a=\C^{n-k}$ for any $0\le k < n$, so that $V_a=V_a^{(0)}$ and $V'_a=V_a^{(1)}$. We begin by splitting the auxiliary space $V_a = \C + V'_a$. Accordingly, the monodromy matrix $T_a(u)$ splits as follows: 
\[ \renewcommand*{\arraystretch}{1.4}
T_a(u) = 
  \left(
  \begin{array}{c|c}
  \ayy{u} & \Bee{a}{u} \\
  \hline
  \Cee{a}{u} & \Dee{a}{u}
  \end{array}
  \right) ,
\]
where $\ayy{u}=t_{11}(u)$ and
\aln{
  \Bee{a}{u} &= \left(t_{12}(u), \dots , t_{1n}(u)\right) && \in (V'_a)^* \ot Y(\mfgl_n)[[u^{-1}]], \\
  \Cee{a}{u} &= \left(t_{21}(u), \dots , t_{n1}(u)\right)^{T} && \in V'_a \ot Y(\mfgl_n)[[u^{-1}]], \\
  \Dee{a}{u} &= \left(
  \begin{array}{ccc}
    t_{22}(u) & \dots & t_{2n}(u) \\
    \vdots & \ddots & \vdots \\
    t_{n2}(u) & \dots & t_{nn}(u) \\
  \end{array}
  \right)
  \qu && \in \End(V'_a) \ot Y(\mfgl_n)[[u^{-1}]].
}
In particular, $\Bee{a}{u}$ is a row-vector and $\Cee{a}{u}$ is a column-vector. It will be convenient to denote the matrix entries of $\Bee{a}{u}$ by $\bee{i}(u)$ with $1\le i \le n-1$, and similarly for $\Cee{a}{u}$ and $\Dee{a}{u}$. 
Additionally, we introduce a reduced $R$-matrix $R'(u)$ acting on $\C^{n-1} \ot \C^{n-1}$,
\[ \arr{}{u} := I- u^{-1}\! \sum_{i,j=1}^{n-1} e'_{ij} \ot e'_{ji} = I - u^{-1}P'_{}. \]
The defining relations of $Y(\mfgl_n)$ imply the following exchange relations for $\ayy{v}$, $B_a(v)$ and $D_a(v)$: 
\ali{
\ayy{v} \Bee{a_1}{u}&=\frac{v-u+1}{v-u} \,\Bee{a}{u} \,\ayy{v}-\frac1{v-u}\Bee{a}{v} \,\ayy{u}, 
 \label{gln:ab} \\
\Dee{a}{v}\Bee{a_1}{u}&=\Bee{a_1}{u} \,\Dee{a}{v}\,\arr{a a_1}{v-u}+\frac1{v-u}\Bee{a_1}{v}\, \Dee{a}{u}\,P'_{a a_1}, \label{gln:db} \\
\!\!\Bee{a_1}{v}\Bee{a_2}{u}&=\frac{v-u}{v-u-1} \Bee{a_2}{u}\Bee{a_1}{v}\,\arr{a_1a_2}{v-u} , \label{gln:bb} 
}
along with an RTT relation 
\eqa{
\arr{a_1a_2}{u-v}\,\Dee{a_1}{u}\,\Dee{a_2}{v}=\Dee{a_2}{v}\,\Dee{a_1}{u}\,\arr{a_1a_2}{u-v} . \label{gln:rdd} 
}
In particular, the coefficients of the matrix entries of $\Dee{a}{v}$ generate a subalgebra $Y(\mfgl_{n-1})\subset Y(\mfgl_n)$ (note that this is not a Hopf subalgebra). Two additional relations will be used, which can be stated more clearly in terms of individual matrix entries of~$T_a(u)$. For any $1 \leq i,j,k \leq n-1$,
\ali{
\cee{k}(u)\,\dee{ij}(v) &= \dee{ij}(v)\,\cee{k}(u)-\frac1{u-v}\big(\dee{kj}(u)\,\cee{i}(v)-\dee{kj}(v)\,\cee{i}(u)\big), \label{gln:cd} \\
[\ayy{v},\dee{ij}(u)] &= \frac1{v-u}\big( \bee{j}(u)\,\cee{i}(v)- \bee{j}(v)\,\cee{i}(u) \big) . \label{gln:ad} 
}
%


\subsection{Exchange relations}

\nc{\Bees}{\Bee{a_1}{u_1} \cdots \Bee{a_m}{u_m}}
\nc{\create}[2]{\mathscr{B}_{a_1 \dots a_{#1}}(#2)}
\nc{\setu}{{\bm u}}
\nc{\aux}{a_1 \dots a_m}

We now use the exchange relations stated above to establish algebraic relations that will be important in the nested algebraic Bethe ansatz.

Choose $m \in \N$ and introduce an $m$-tuple $\bm u = ( u_1, \dots, u_m)$ of formal parameters. Let $V'_{a_1}, \dots, V'_{a_m}$ be copies of $V'_a$. The \emph{creation operator} for $m$ excitations is 
\[
\create{m}{\bm u} := B_{a_1}(u_1) \cdots B_{a_m}(u_m).
\]
The operator $\create{m}{\bm u}$ is a row-vector in $(V'_{a_1})^* \ot \dots \ot (V'_{a_m})^*$ with entries in $Y(\mfgl_n)[u_1,\dots,u_m][[u_1^{-1},\dots,u_m^{-1}]]$. The parameters carried by $\create{m}{\bm u}$ may be exchanged by the braided $R$-matrix defined by 
\equ{
\check{R}'(u) := \frac{u}{u-1} \,R'(u) P'. \label{gln:checkR'}
}
This $R$-matrix allows us to rewrite \eqref{gln:bb} in a more elegant form,
\[
B_{a_1}(u_1) B_{a_2}(u_2) = B_{a_1}(u_2) B_{a_2}(u_1) \check{R}_{a_1 a_2}'(u_1-u_2) .
\]
Consequently, for $m$ excitations, we have that
\eqa{ \label{gln:Br'}
\create{m}{\bm u} = \create{m}{\bm u_{i \leftrightarrow i+1}} \,\check{R}_{a_i a_{i+1}}'(u_i-u_{i+1}) 
\qu \text{for} \qu 1 \leq i \leq m-1,
}
where we used the notation \eqref{uii}.

We now move the $\ayy{v}$ and $D_a(v)$ operators through the $m$-excitation creation operator. 
Consider first the action of $\ayy{v}$ on $\create{m}{\bm u}$, 
\[
\ayy{v} \, \create{m}{\bm u} = \bigg(\frac{v-u_1+1}{v-u_1} \, B_{a_1}(u_1) \, \ayy{v} - \frac{1}{v-u_1} \, B_{a_1}(v) \, \ayy{u_1} \bigg)  B_{a_2}(u_2) \cdots B_{a_m}(u_m). 
\]
Note that we may repeatedly apply this relation to move $\ayy{v}$ through each of the excitations, resulting in a sum of $2^m$ terms in which $\ayy{\cdot}$ is the rightmost operator. From this sum we note that there is a unique term in which $\ayy{v}$ retains its parameter each time we apply \eqref{gln:ab}. We will refer to this term as the \emph{wanted term}, and the other terms as \emph{unwanted terms} (UWT). Then, 
\eqa{ \label{gln:abb}
\ayy{v} \, \create{m}{\bm u} = \prod_{i=1}^m \frac{v-u_i+1}{v-u_i} \, \create{m}{\bm u} \, \ayy{v} + UWT.
}
The unwanted terms will be discussed in detail in Section~\ref{sec:gln:uwt}.

The wanted term for the action of $D_a(v)$ on $\create{m}{\bm u}$ is found similarly. From repeated applications of \eqref{gln:db},
\[
D_a(v) \, \create{m}{\bm u}= \create{m}{\bm u} \, D_a(v) \, R'_{a a_m}(v-u_m) \cdots R'_{a a_1}(v-u_1) + UWT.
\]
Here note that the rightmost matrix acting on the auxiliary space $V'_a$ is 
\eqa{ \label{gln:nmm}
T_{a;\aux}'(v; \bm u) := D_a(v) \, R'_{a a_m}(v-u_m) \cdots R'_{a a_1}(v-u_1) .
}
We will refer to this matrix as the \emph{nested monodromy matrix}. The nontrivial action on the auxiliary spaces $V'_{a_1}, \dots, V'_{a_m}$ will often be omitted for clarity, and we will write instead simply $T'_a(v;\bm u)$. Using this notation we get
\eqa{ \label{gln:dbb}
D_a(v) \, \create{m}{\bm u} = \create{m}{\bm u} \, T'_{a}(v;\bm u) + UWT.
}

The nested monodromy matrix satisfies the following properties.

\begin{lemma} \label{L:gln:nRTT}
The matrix $T'_a(v;\bm u)$ satisfies the RTT relation,
\[
R'_{ab}(v-w)\, T'_a(v;\bm u)\, T'_b(w;\bm u) = T'_b(w;\bm u)\, T'_a(v;\bm u)\, R'_{ab}(v-w).
\] 
\end{lemma}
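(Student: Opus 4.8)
The plan is to substitute the definition \eqref{gln:nmm} and reduce the identity to two inputs: the $DD$-exchange relation \eqref{gln:rdd} and the fact that a product of $R'$-matrices obeys its own RTT relation. Write $\mathbf{R}_a(v) := R'_{aa_m}(v-u_m)\cdots R'_{aa_1}(v-u_1)$, so that $T'_a(v;\bm u) = D_a(v)\,\mathbf{R}_a(v)$ and likewise $T'_b(w;\bm u)=D_b(w)\,\mathbf{R}_b(w)$. The left-hand side then reads $R'_{ab}(v-w)\,D_a(v)\,\mathbf{R}_a(v)\,D_b(w)\,\mathbf{R}_b(w)$.

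First I would move $\mathbf{R}_a(v)$ to the right past $D_b(w)$. This is legitimate because $\mathbf{R}_a(v)$ acts only on the spaces $V'_a, V'_{a_1},\dots,V'_{a_m}$ and carries no factor in $Y(\mfgl_n)$, whereas $D_b(w)$ acts on $V'_b$ and $Y(\mfgl_n)$; as $b$ is distinct from $a,a_1,\dots,a_m$, the two commute. This brings the expression to $R'_{ab}(v-w)\,D_a(v)\,D_b(w)\,\mathbf{R}_a(v)\,\mathbf{R}_b(w)$, after which \eqref{gln:rdd} applied to the first three factors yields $D_b(w)\,D_a(v)\,R'_{ab}(v-w)\,\mathbf{R}_a(v)\,\mathbf{R}_b(w)$.

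The heart of the argument is then the identity
\[
R'_{ab}(v-w)\,\mathbf{R}_a(v)\,\mathbf{R}_b(w) = \mathbf{R}_b(w)\,\mathbf{R}_a(v)\,R'_{ab}(v-w),
\]
which I would prove by induction on $m$. The base case $m=0$ is trivial. For the inductive step, peel off the leftmost factors $R'_{aa_m}(v-u_m)$ and $R'_{ba_m}(w-u_m)$; since the product $\mathbf{R}^{(m-1)}_a(v)$ over $a_1,\dots,a_{m-1}$ commutes with $R'_{ba_m}(w-u_m)$, one may bring $R'_{ab}(v-w)$, $R'_{aa_m}(v-u_m)$ and $R'_{ba_m}(w-u_m)$ into adjacency and apply the Yang--Baxter equation for $R'$ to reverse their order. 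The induction hypothesis then carries $R'_{ab}(v-w)$ through the remaining product, and a final commutation restores the factors involving $a_m$ to the front. This is the main obstacle, though it is entirely combinatorial bookkeeping rather than a genuine difficulty; conceptually it is just the statement that $\mathbf{R}_a(v)$, being a product of $R'$-matrices, is itself a monodromy matrix satisfying the RTT relation with $a_1,\dots,a_m$ playing the role of the quantum space.

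Finally, substituting this identity back gives $D_b(w)\,D_a(v)\,\mathbf{R}_b(w)\,\mathbf{R}_a(v)\,R'_{ab}(v-w)$; commuting $D_a(v)$ past $\mathbf{R}_b(w)$ (again distinct spaces, no Yangian factor in $\mathbf{R}_b$) regroups the terms as $\big(D_b(w)\,\mathbf{R}_b(w)\big)\big(D_a(v)\,\mathbf{R}_a(v)\big)\,R'_{ab}(v-w) = T'_b(w;\bm u)\,T'_a(v;\bm u)\,R'_{ab}(v-w)$, which is precisely the right-hand side, completing the proof.
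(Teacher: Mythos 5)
Your proposal is correct and follows essentially the same route as the paper: decompose $T'_a(v;\bm u)=D_a(v)\,\mathbf{R}_a(v)$, commute operators acting on distinct spaces, apply the $DD$ exchange relation \eqref{gln:rdd}, and push $R'_{ab}(v-w)$ through the product of $R'$-matrices via the Yang--Baxter equation. The only cosmetic difference is that the paper interleaves the factors $R'_{aa_i}R'_{ba_i}$ and applies the YBE pairwise, whereas you keep the two blocks $\mathbf{R}_a(v)\,\mathbf{R}_b(w)$ intact and organize the same YBE applications as an induction on $m$.
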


\begin{proof}
Starting from the l.h.s.~of the equation and using the definition \eqref{gln:nmm} of $T'_a(v;\bm u)$,
\begin{align*}
\text{l.h.s.} &= R'_{ab}(v-w) \, D_a(v) \, R'_{a a_m}(v-u_m) \cdots R'_{a a_1}(v-u_1) \, D_b(w) \, R'_{b a_m}(w-u_m) \cdots R'_{b a_1}(w-u_1) \\
&= R'_{ab}(v-w) \, D_a(v) D_b(w) \, R'_{a a_m}(v-u_m) \, R'_{b a_m}(w-u_m) \cdots R'_{a a_1}(v-u_1) R'_{b a_1}(w-u_1)  \\
&=  \, D_b(w) D_a(v) R'_{ab}(v-w) \, R'_{a a_m}(v-u_m) \, R'_{b a_m}(w-u_m) \cdots R'_{a a_1}(v-u_1) R'_{b a_1}(w-u_1) \qu \text{ by \eqref{gln:rdd} } \\
&= \, D_b(w) D_a(v) \, R'_{b a_m}(w-u_m) \, R'_{a a_m}(v-u_m) \cdots R'_{b a_1}(w-u_1) R'_{a a_1}(v-u_1) R'_{ab}(v-w) \qu \text{ by YBE } \\
&= T'_b(w;\bm u)\, T'_a(v;\bm u)\, R'_{ab}(v-w). \qedhere
\end{align*}
\end{proof}

By the above Lemma, the matrix $T'_a(v;\bm u)$ is a homomorphic image of the generating matrix $T'_a(v)$ of $Y(\mfgl_{n-1})$. We may use $R$-matrices $\check{R}'$ to exchange the ordering of the parameters in $\bm u$.

\begin{lemma} \label{L:gln:R'T}
Matrix elements $t'_{ij}(v;\bm u)$ of $T'_a(v;\bm u)$ satisfy the relation:
\[
\check{R}'_{a_i a_{i+1}}(u_i-u_{i+1})\, t'_{jk}(v; \bm u) 
= t'_{jk}(v; \bm u_{i \leftrightarrow i+1})\, \check{R}'_{a_i a_{i+1}}(u_i-u_{i+1}). 
\]
\end{lemma}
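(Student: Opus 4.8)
The plan is to promote the stated identity for the matrix entries to a single operator identity in the auxiliary space $V'_a$, and then reduce that identity to one braided Yang--Baxter move. Since $\check{R}'_{a_i a_{i+1}}(u_i-u_{i+1})$ acts only on $V'_{a_i}\ot V'_{a_{i+1}}$ and does nothing on $V'_a$, the operation of extracting the $(j,k)$ matrix element of an operator in $V'_a$ commutes with left- and right-multiplication by $\check{R}'_{a_i a_{i+1}}$. Hence the claim for the entries $t'_{jk}(v;\bm u)$ is equivalent to the single operator identity
\[
\check{R}'_{a_i a_{i+1}}(u_i-u_{i+1})\,T'_a(v;\bm u)=T'_a(v;\bm u_{i \leftrightarrow i+1})\,\check{R}'_{a_i a_{i+1}}(u_i-u_{i+1})
\]
on $V'_a\ot V'_{a_1}\ot\cdots\ot V'_{a_m}\ot Y(\mfgl_n)$, which is what I would actually prove. (This is the $Y(\mfgl_n)$-analogue of Lemma~\ref{L:RRt=tRR}, but simpler, as only a single $\check{R}'$ occurs.)

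Next I would unpack the definition \eqref{gln:nmm}, writing the product as $D_a(v)\,R'_{aa_m}(v-u_m)\cdots R'_{aa_{i+1}}(v-u_{i+1})\,R'_{aa_i}(v-u_i)\cdots R'_{aa_1}(v-u_1)$, and observe that $\check{R}'_{a_i a_{i+1}}(u_i-u_{i+1})$ acts on spaces disjoint from those of $D_a(v)$ and of every factor $R'_{aa_j}(v-u_j)$ with $j\notin\{i,i+1\}$. It therefore commutes with all of these and can be carried from the far left up to the adjacent pair $R'_{aa_{i+1}}(v-u_{i+1})\,R'_{aa_i}(v-u_i)$. Everything then comes down to the braided relation
\[
\check{R}'_{a_i a_{i+1}}(u_i-u_{i+1})\,R'_{aa_{i+1}}(v-u_{i+1})\,R'_{aa_i}(v-u_i)=R'_{aa_{i+1}}(v-u_i)\,R'_{aa_i}(v-u_{i+1})\,\check{R}'_{a_i a_{i+1}}(u_i-u_{i+1}).
\]
Once this is in hand, carrying $\check{R}'_{a_i a_{i+1}}(u_i-u_{i+1})$ through the remaining commuting factors $R'_{aa_{i-1}},\dots,R'_{aa_1}$ reassembles exactly $T'_a(v;\bm u_{i \leftrightarrow i+1})\,\check{R}'_{a_i a_{i+1}}(u_i-u_{i+1})$, since interchanging $u_i\leftrightarrow u_{i+1}$ precisely swaps the arguments of the $(i,i+1)$-pair and leaves all other factors untouched.

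To establish the braided relation I would substitute $\check{R}'(u)=\tfrac{u}{u-1}R'(u)P'$ from \eqref{gln:checkR'}; the scalar $\tfrac{u_i-u_{i+1}}{u_i-u_{i+1}-1}$ occurs identically on both sides and cancels, so it suffices to treat $R'_{a_i a_{i+1}}(u_i-u_{i+1})\,P'_{a_i a_{i+1}}$. Pushing $P'_{a_i a_{i+1}}$ to the right with the intertwining identities $P'_{a_i a_{i+1}}R'_{aa_{i+1}}(w)=R'_{aa_i}(w)P'_{a_i a_{i+1}}$ and $P'_{a_i a_{i+1}}R'_{aa_i}(w)=R'_{aa_{i+1}}(w)P'_{a_i a_{i+1}}$, the relation collapses (after cancelling the common $P'_{a_i a_{i+1}}$ on the right) to
\[
R'_{a_i a_{i+1}}(u_i-u_{i+1})\,R'_{aa_i}(v-u_{i+1})\,R'_{aa_{i+1}}(v-u_i)=R'_{aa_{i+1}}(v-u_i)\,R'_{aa_i}(v-u_{i+1})\,R'_{a_i a_{i+1}}(u_i-u_{i+1}),
\]
which is the Yang--Baxter equation for $R'$ on the three spaces $a_i,a_{i+1},a$ with the parameter assignment $p_{a_i}=v-u_{i+1}$, $p_{a_{i+1}}=v-u_i$, $p_a=0$ (using the symmetry $R'_{ab}=R'_{ba}$). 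The main point requiring care is the parameter bookkeeping in this last step: one must check that after pushing through $P'_{a_i a_{i+1}}$ the arguments $v-u_i$ and $v-u_{i+1}$ land on the correct $R'_{aa_{i+1}}$ and $R'_{aa_i}$ factors so that the resulting three-factor identity is genuinely an instance of the Yang--Baxter equation rather than a mismatched variant. Everything else is routine commuting of operators on disjoint spaces.
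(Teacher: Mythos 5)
Your proposal is correct and follows essentially the same route as the paper: commute $\check{R}'_{a_i a_{i+1}}(u_i-u_{i+1})$ past the factors acting on disjoint spaces and apply the braided Yang--Baxter identity to the adjacent pair $R'_{aa_{i+1}}(v-u_{i+1})R'_{aa_i}(v-u_i)$, which is exactly the displayed relation in the paper's proof. The only difference is that you also verify the braided relation from the ordinary Yang--Baxter equation via $\check{R}'(u)=\tfrac{u}{u-1}R'(u)P'$, a step the paper simply asserts; your parameter bookkeeping there is accurate.
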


\begin{proof}
Moving $\check{R}'_{a_i a_{i+1}}(u_i-u_{i+1})$ from left to right through each of the $R$-matrices in the definition \eqref{gln:nmm}, the $R$-matrices with which it does not commute will undergo parameter exchange $u_i \leftrightarrow u_{i+1}$ due to the (braided) Yang-Baxter equation:
\[
\check{R}'_{a_i a_{i+1}}(u_i-u_{i+1}) R'_{a a_{i+1}}(v-u_{i+1}) R'_{a a_i}(v-u_i) =  R'_{a a_{i+1}}(v-u_i) R'_{a a_i}(v-u_{i+1}) \check{R}'_{a_i a_{i+1}}(u_i-u_{i+1}) .
\]
The required identity is now immediate. 
\end{proof}


We now construct a finite-dimensional vector space, called the \emph{nested vacuum sector}, which the matrix $T'_a(v;\bm u)$ will act on.
Denote by $L(\la^{(i)})^0_{c_i}$ the subspace of the $Y(\mfgl_n)$-evaluation module $L(\la^{(i)})_{c_i}$ consisting of vectors annihilated by all operators $\cee{j}(u)$, namely
\[
L(\la^{(i)})^0_{c_i} := \{ \zeta \in L(\la^{(i)})_{c_i} \,:\, \cee{j}(u)\,\zeta = 0 \text{ for } 1\le j \le n-1 \}. 
\]
This subspace corresponds to the natural embedding $\mfgl_{n-1}\subset\mfgl_n$ and is an irreducible lowest weight $Y(\mfgl_{n-1})$-module with the lowest weight given by 
\equ{
\la_{i}(u)^0 = \la_{i+1}(u) \qu\text{for}\qu 1\le i \le n-1 \label{gln:la0(u)}
}
and $\la_i(u)$ defined in \eqref{L:la(u)}.

We define the \emph{vacuum sector} $L^0\subset L$ by
\[
L^0 = L(\la^{(1)})^0_{c_1} \ot L(\la^{(2)})^0_{c_2} \ot \dots \ot L(\la^{(\ell)})^0_{c_\ell} .
\]
By the initial assumption, the space $L$ is an irreducible $Y(\mfgl_n)$-module. Then, by Lemma 6.2.2 and Theorem 6.5.8 in \cite{Mo3}, the space $L^0$ is an irreducible $Y(\mfgl_{n-1})$-module. In particular, the space $L^0$ is annihilated by all operators $\cee{i}(u)$,
\[
L^0 = \{ \zeta \in L \;:\; \cee{i}(u)\cdot \zeta =0\;\text{ for }\; 1\le i \le n-1\},
\]
and is stable under the action of the operators $\dee{ij}(u)$ for $1\le i,j\le n-1$, see \eqref{gln:cd}.

Each auxiliary space $V'_{a_i}$ is a vector representation of the Lie algebra $\mfgl_{n-1}$ of weight $\la'=(1,0,\dots,0)$ and may be viewed as an evaluation module $L(\la')_{u_i}$ of $Y(\mfgl_{n-1})$ with the lowest weight given~by
\equ{
\la'_1(u)=\frac{u-u_i-1}{u-u_i} \qu\text{and}\qu \la'_j(u) = 1 \qu\text{for}\qu 2\le j \le n-1. \label{gln:la'(u)}
}
In particular, the generating matrix $T'_a(u)$ of $Y(\mfgl_{n-1})$ acts on $L(\la')_{u_i}$ as $R'_{aa_i}(u-u_i)$.

We have now all the necessary ingredients to define the \emph{nested vacuum sector}
\equ{
L' := L^0\ot V'_{a_m} \ot \cdots \ot V'_{a_1} . \label{gln:L'}
}

\begin{prop} \label{P:gln:T'.L'}
Let $T'(v)$ denote the generating matrix of $Y(\mfgl_{n-1})$. Then the map
\equ{
Y(\mfgl_{n-1}) \to Y(\mfgl_{n})\ot \End(V'_{a_m} \ot \cdots \ot V'_{a_1}) , \qu T'(v) \mapsto T'(v;\bm u) \label{gln:T'(v)->T'(v;u)}
}
is a homomorphism of algebras. Moreover, it equips the space $L'$ with a structure of a lowest weight $Y(\mfgl_{n-1})$-module with the lowest weight given by
\eqa{
\lambda_1'(v;{\bm u}) &= \prod_{j=1}^\ell \frac{v-\la_2^{(j)}-c_j-1}{v-\la_2^{(j)}-c_j} \prod_{k=1}^m \frac{v-u_k-1}{v-u_k} \qu\text{and}\qu\\
\lambda_i'(v;{\bm u}) &= \prod_{j=1}^\ell \frac{v-\la_{i+1}^{(j)}-c_j-1}{v-\la_{i+1}^{(j)}-c_j} \qu\text{for}\qu  2\le i \le n-1. \label{gln:la'(v;u)}
}
\end{prop}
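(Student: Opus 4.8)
The homomorphism assertion is exactly the content of Lemma~\ref{L:gln:nRTT}: there $T'_a(v;\bm u)$ is shown to satisfy the RTT relation of $Y(\mfgl_{n-1})$ as an identity of operators on $V'_a\ot V'_b$, so the assignment \eqref{gln:T'(v)->T'(v;u)} respects the defining relations and is an algebra map. It then remains to realize $L'$ as a lowest weight module and to compute its lowest weight. As a preliminary I would verify that $L'$ is stable under every matrix entry $t'_{ij}(v;\bm u)$. The reduced matrices $R'_{aa_k}(v-u_k)$ act only on the finite-dimensional tensorands $V'_a\ot V'_{a_k}$, while $\Dee{a}{v}$ acts on $V'_a\ot L^0$ and preserves $L^0$: applying \eqref{gln:cd} to any $\zeta\in L^0$ and using $\cee{i}(u)\cdot\zeta=0$ gives $\cee{k}(u)\,\dee{ij}(v)\,\zeta=0$, so $\dee{ij}(v)\,\zeta\in L^0$. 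Hence $T'_a(v;\bm u)$ maps $V'_a\ot L'$ into $V'_a\ot L'[[v^{-1}]]$.

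The key structural observation is that $T'_a(v;\bm u)=\Dee{a}{v}\,R'_{aa_m}(v-u_m)\cdots R'_{aa_1}(v-u_1)$ is an ordered product of generating-matrix realizations of $Y(\mfgl_{n-1})$. The leftmost factor $\Dee{a}{v}$ realizes $Y(\mfgl_{n-1})$ on $L^0$ with lowest weight \eqref{gln:la0(u)}, and each $R'_{aa_k}(v-u_k)$ realizes it on the evaluation module $V'_{a_k}=L(\la')_{u_k}$ with lowest weight \eqref{gln:la'(u)}. Since the coproduct of $Y(\mfgl_{n-1})$ has the same form $T'(v)\mapsto T'(v)\ot T'(v)$ as \eqref{Hopf:Y}, this ordered product is precisely the image of $T'(v)$ under the iterated comultiplication evaluated on $L'=L^0\ot V'_{a_m}\ot\cdots\ot V'_{a_1}$, the factors occurring in the same order as the tensorands. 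Thus the action of \eqref{gln:T'(v)->T'(v;u)} equips $L'$ with the structure of the corresponding tensor-product $Y(\mfgl_{n-1})$-module.

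Finally I would take the lowest vector to be $\eta'=\zeta_0\ot e'_1\ot\cdots\ot e'_1$, where $\zeta_0$ is the lowest vector of $L^0$ and $e'_1$ is the lowest vector of each $V'_{a_k}$. Expanding the iterated coproduct, any summand contributing to $t'_{ij}(v;\bm u)\,\eta'$ forces an index chain $i\le k_1\le\cdots\le k_m\le j$, which is impossible for $i>j$; hence $\eta'$ is annihilated by all strictly lower-triangular entries. For $i=j$ the same expansion collapses to a single surviving term, so $t'_{ii}(v;\bm u)\,\eta'=\la'_i(v;\bm u)\,\eta'$ with $\la'_i(v;\bm u)$ the product of the diagonal lowest-weight entries of the factors. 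Substituting \eqref{gln:la0(u)} for the $\Dee{a}{v}$ factor and \eqref{gln:la'(u)} for each $V'_{a_k}$ (the latter contributing nontrivially only at $i=1$) yields \eqref{gln:la'(v;u)}. I expect the only genuinely delicate point to be the stability-and-annihilation bookkeeping, i.e.\ confirming that the $R'$-matrices do not spoil the lowest-weight property of $\eta'$; once the product $T'_a(v;\bm u)$ is recognized as a comultiplication this reduces to the standard lowest-weight computation for a tensor product of $Y(\mfgl_{n-1})$-modules.
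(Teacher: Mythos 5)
Your proof is correct and follows essentially the same route as the paper: the homomorphism property is deduced from Lemma~\ref{L:gln:nRTT}, the product $T'(v;\bm u)$ is recognized as the iterated coproduct of $Y(\mfgl_{n-1})$ acting on the tensor product $L^0\ot V'_{a_m}\ot\cdots\ot V'_{a_1}$, and the lowest weight is read off from the diagonal action on the tensor product of lowest vectors. You supply somewhat more detail than the paper's terse proof (the stability check via \eqref{gln:cd} and the index-chain argument for annihilation by the lower-triangular entries), but the underlying argument is the same.
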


\begin{proof}
The homomorphism property follows from Lemma \ref{L:gln:nRTT}. We already know that $L^0$ is an irreducible $Y(\mfgl_{n-1})$-module. It follows from \eqref{gln:nmm} and \eqref{gln:L'} that the space $L'$ is stable under the action of $T'_a(v;\bm u)$. Thus the map \eqref{gln:T'(v)->T'(v;u)} equips the space $L'$ with a structure of $Y(\mfgl_{n-1})$-module with each tensorand a lowest weight $Y(\mfgl_{n-1})$-module. The lowest vector is
\equ{
\eta = \eta_1 \ot \cdots \ot \eta_\ell \ot e'_1 \ot \cdots \ot e'_1 , \label{gln:L'-low}
}
where each $\eta_i$ is a lowest vector of $L(\la^{(i)})_{c_i}^0$ for $1\le i \le \ell$ and each $e'_1$ is a lowest vector of $V_{a_i}$ for $1\le i \le m$ (viewed as an evaluation module $L(\la')_{u_i}$). Finally, acting with $t'_{ii}(v;\bm u)$ on $\eta$ for $1\le i \le n$ and using \eqref{gln:la0(u)} and \eqref{gln:la'(u)} yields \eqref{gln:la'(v;u)}.
\end{proof}

\begin{lemma} \label{L:gln:a.L'}
For any vector $\zeta \in L'$ we have that $\ayy{u} \cdot \zeta = \lambda_1(u) \, \zeta$, where $\la_1(u)$ is defined by \eqref{L:la(u)}.
\end{lemma}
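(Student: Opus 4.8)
The plan is to reduce the claim to the statement that $\ayy{u}$ acts as the scalar $\la_1(u)$ on the vacuum sector $L^0$, and then to establish the latter by commuting $\ayy{u}$ through the operators that generate $L^0$ from its lowest vector. First I would observe that $\ayy{u}=t_{11}(u)$ is an element of $Y(\mfgl_n)$ acting only on the $L^0$-tensorand of $L' = L^0\ot V'_{a_m}\ot\cdots\ot V'_{a_1}$: the auxiliary spaces $V'_{a_i}$ carry only the nested $Y(\mfgl_{n-1})$-action assembled from the $R'$-matrices of the nested monodromy matrix \eqref{gln:nmm}, on which $\ayy{u}$ acts as the identity. Hence it suffices to prove $\ayy{u}\cdot\zeta = \la_1(u)\,\zeta$ for all $\zeta\in L^0$.

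Since $L^0$ is an irreducible $Y(\mfgl_{n-1})$-module generated from its lowest vector $\eta_0:=\eta_1\ot\cdots\ot\eta_\ell$ (the $L^0$-part of the vector $\eta$ in \eqref{gln:L'-low}) by the subalgebra $Y(\mfgl_{n-1})\subset Y(\mfgl_n)$, every vector of $L^0$ is a linear combination of vectors of the form $\dee{i_1 j_1}(v_1)\cdots\dee{i_k j_k}(v_k)\cdot\eta_0$ with $1\le i_p,j_p\le n-1$. Moreover $\ayy{u}\cdot\eta_0 = \la_1(u)\,\eta_0$ by the lowest-weight condition and \eqref{L:la(u)}, and $\cee{i}(w)\cdot\eta_0=0$ for all $i,w$. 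The core of the argument is then an induction on $k$. Using \eqref{gln:ad} I would move $\ayy{u}$ to the right past the leftmost factor $\dee{i_1 j_1}(v_1)$, producing the wanted term $\dee{i_1 j_1}(v_1)\,\ayy{u}$ (to which the inductive hypothesis applies) together with two unwanted terms of the shape $\bee{\bullet}(\bullet)\,\cee{\bullet}(\bullet)$, each carrying a $\cee$-operator immediately to the right of a $\bee$-operator.

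The key point is that these unwanted terms annihilate the remaining vector $\dee{i_2 j_2}(v_2)\cdots\dee{i_k j_k}(v_k)\cdot\eta_0$. Indeed, by repeatedly applying \eqref{gln:cd} one pushes the rightmost $\cee$-operator through the surviving $\dee$-operators; since the right-hand side of \eqref{gln:cd} again consists solely of terms in which a $\cee$-operator stands furthest to the right, every resulting summand ends in some $\cee{\bullet}(\bullet)$ acting directly on $\eta_0$, which gives zero. The $\bee$-prefactors sit to the left and play no role in this cancellation. Thus only the wanted term survives at each step, and the induction yields $\ayy{u}\cdot\zeta = \la_1(u)\,\zeta$ on all of $L^0$, hence on $L'$.

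The main obstacle is precisely the bookkeeping that confirms the unwanted terms vanish: one must verify that transporting a $\cee$-operator through the $\dee$-operators via \eqref{gln:cd} never generates a summand in which a $\cee$ fails to remain rightmost, so that it always reaches $\eta_0$ and annihilates it. An equivalent and slightly slicker alternative, which I would mention, is to note that \eqref{gln:ad} together with $\cee{i}(w)\cdot\zeta=0$ for $\zeta\in L^0$ shows that $[\ayy{u},\dee{ij}(v)]$ annihilates $L^0$; once one also checks that $\ayy{u}$ preserves $L^0$ (a separate, mildly delicate computation using the exchange relation between $\cee{k}(w)$ and $\ayy{u}$), the operator $\ayy{u}$ becomes a $Y(\mfgl_{n-1})$-endomorphism of the irreducible module $L^0$, whence Schur's lemma forces it to act as a scalar, identified as $\la_1(u)$ by evaluating on $\eta_0$.
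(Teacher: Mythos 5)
Your proposal is correct and follows essentially the same route as the paper: both commute $\ayy{u}$ through the generators of the (nested) vacuum sector using \eqref{gln:ad}, kill the resulting $\bee\,\cee$ terms because the $\cee$-operators annihilate $L^0$, and evaluate on the lowest vector. The only cosmetic difference is that the paper phrases the commutation in terms of the nested monodromy entries $t'_{ij}(v;\bm u)$ acting on $L'$, whereas you first strip off the auxiliary spaces and work with the $\dee{ij}(v)$ on $L^0$ directly; since $\ayy{u}$ commutes with the $R'$-matrix factors, these are the same computation.
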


\begin{proof}
By Proposition \ref{P:gln:T'.L'} we know that $L' = Y(\mfgl_{n-1})\,\eta$ for $\eta$ defined in \eqref{gln:L'-low} and $\cee{i}(u)\cdot L' = 0$. Using \eqref{gln:ad} and definition of $t'_{ij}(v;\bm u)$, we find that $[\ayy{u},t'_{ij}(v;\bm u)] \cdot \zeta = 0$ for any $1\le i,j \le n$. 
Hence it is enough to act with $\ayy{u}$ on the lowest vector $\eta$, which yields the required result.
\end{proof}


\subsection{Nested algebraic Bethe ansatz} \label{sec:gln:naba}

We are now ready to consider the nested algebraic Bethe ansatz for a $Y(\mfgl_n)$-system. The monodromy matrix $T_a(v)$ acts on the space $L$ in \eqref{gln:L} by
\[
T_a(v)\cdot L = \Bigg( \prod_{i=1}^\ell \mc{L}_{ai}(v-c_i) \Bigg) L \in \End(\C^n)\ot L [[v^{-1}]].
\]
The \emph{transfer matrix} is defined as
\[
t(v) := \tr_a T_a(v) \in Y(\mfgl_n) [[ v^{-1}]].
\]
Taking the trace of the RTT relation reveals that $[t(v),t(u)]=0$, and so $t(v)$ is a generating series for conserved quantities. 
We diagonalise $t(v)$ by means of the nested algebraic Bethe ansatz, adhering closely to \cite{KuRs}. In particular, we construct an ansatz for our eigenvector recursively, at each stage reducing the diagonalisation problem to a similar problem with a smaller symmetry algebra, relying on the chain of subalgebras $Y(\mfgl_n)\supset Y(\mfgl_{n-1})\supset \dots \supset Y(\mfgl_2)$ and the irreducibility criterion of a tensor product of evaluation modules.

Recall the definition of the full quantum space \eqref{gln:L} and the nested vacuum sector \eqref{gln:L'}.
Let $\Phi' \in L'$. We will refer to this as the \emph{nested Bethe vector}, imposing additional properties in a later section. The ansatz for the eigenvector of the transfer matrix, the \emph{Bethe vector}, is given by
\[
\Phi(\bm u) := \create{m}{\bm u} \cdot \Phi' \in L .
\]
Since $L$ is a finite dimensional vector space, the parameters $u_i\in\bm u$ can be evaluated to nonzero complex numbers, hence from now on we will assume that $\bm u\in\C^{m}$ is an $m$-tuple of nonzero complex numbers. To find the eigenvalues and Bethe equations, we act on the Bethe vector with $t(v) = \ayy{v} + \tr_a  D_a(v)$.  Using \eqref{gln:abb} and \eqref{gln:dbb} we write
\[
t(v) \cdot \Phi(\bm u) = \create{m}{\bm u}\, \Bigg( \prod_{i=1}^m \frac{v-u_i+1}{v-u_i} \, \ayy{v} + \tr_a T'_a(v;\bm u) \Bigg) \cdot \Phi' + UWT .
\]
By Lemma~\ref{L:gln:a.L'}, $\ayy{v} \cdot \Phi' = \lambda_1(v) \, \Phi'$. We now impose that $\Phi'$ is an eigenvector of the nested transfer matrix $t'(v;\bm u) := \tr_a T'_a(v;\bm u)$, with the eigenvalue $\Gamma'(v;\bm u)$, namely
\eqa{ \label{gln:nGamma}
t'(v;\bm u) \cdot \Phi' = \Gamma'(v;\bm u) \, \Phi'. 
}
With this condition, the full action of the transfer matrix is diagonal, plus unwanted terms,
\equ{ 
t(v) \cdot \Phi(\bm u) = \Gamma(v;\bm u) \, \Phi(\bm u) + UWT, \qu \text{where}\qu \Gamma(v;\bm u) = \lambda_1(v) \prod_{i=1}^m \frac{v-u_i+1}{v-u_i} + \Gamma'(v;\bm u) . \label{gln:Gamma}
}
Finding $\Phi'$ satisfying \eqref{gln:nGamma} defines another transfer matrix diagonalisation problem, namely for the Yangian $Y(\mfgl_{n-1})$. The monodromy matrix in this case is given by $T'_a(v;\bm u)$ and the full quantum space is the lowest weight $Y(\mfgl_{n-1})$-module $L'$ defined in \eqref{gln:L'}, so the problem may again be reduced by means of the nested algebraic Bethe ansatz; this is ensured by Proposition \ref{P:gln:T'.L'}. For example, constructing the ansatz for the nested Bethe vector, we fix $m' \in \N$ and introduce an $m'$-tuple $\bm u'= (u'_1,\ldots,u'_{m'})$ of distinct complex parameters, so that
\[
\Phi' = \Phi'(\bm u';\bm u) = B'_{a'_1}(u'_1;\bm u)\cdots B'_{a'_{m'}}(u'_{m'};\bm u) \cdot \Phi'',
\]
where, upon decomposing the nested transfer matrix $T'(v,\bm u)$ in the same way as we did for $T(v)$, 
\[
\Phi''\in L^{\prime\, 0} \ot V''_{a'_{m'}}\ot \cdots \ot V''_{a'_1} . 
\]
Here $L^{\prime\,0}$ is the vacuum sector of $L'$ defined analogously to that of $L$, and each $V''_{a'_{i'}}$ is a $\mfgl_{n-2}$-module of weight $\la''=(1,0,\dots,0)$.
Repeating this process, we reduce the problem to a $Y(\mfgl_2)$-system, the solution of which is well known, see e.g., \cite{FdTk}.

The transfer matrix will therefore act diagonally on $\Phi(\bm u)$ if all the unwanted terms vanish. We will show in the next section how this requirement leads to a set of Bethe equations for~$\bm u$.


\subsection{Dealing with unwanted terms} \label{sec:gln:uwt}

Recall \eqref{gln:abb} and \eqref{gln:dbb}. Introduce the following notation for the unwanted terms:
\aln{
\ayy{v} \, \create{m}{\bm u} &= \prod_{i=1}^m \frac{v-u_i+1}{v-u_i} \, \create{m}{\bm u} \,\ayy{v} + U_1(v; \bm u)
, 
\\
\tr_a \Dee{a}{v} \, \create{m}{\bm u} &= \create{m}{\bm u} \, \tr_a T'_a(v; \bm u) + U_2(v; \bm u)
.
}
By applying \eqref{gln:ab} repeatedly, we obtain an expression for $U_1(v; \bm u)$ as a sum of $2^m$ terms, in each of which $\ayy\cdot$ is the rightmost operator. This inspires a further partition of the unwanted terms. Let $\mc B$ denote the subalgebra of $Y(\mfgl_n)$ generated by the coefficients of the series $\bee{i}(u)$ for $1 \leq i \leq n-1$, whose closure is guaranteed by \eqref{gln:bb}. 
We decompose the unwanted terms $U_1(v; \bm u)$ and $U_2(v; \bm u)$ by
\aln{ 
U_1(v;\bm u) &= \sum_{j=1}^m U_{1,j}(v; \bm u) \qu \text{such that} \qu U_{1,j}(v; \bm u) = B_{1,j} \, \ayy{u_j}  \qu\text{and} \\
U_2(v;\bm u) &= \sum_{j=1}^m U_{2,j}(v; \bm u) \qu \text{such that} \qu U_{2,j}(v; \bm u) = \sum_{k,l=1}^{n-1} B_{2,j;kl} \, \dee{kl}(u_j)
}
for some $B_{1,j}, B_{2,j;kl} \in \mc B \ot (\C^{n-1})^{\ot m}[[v^{-1}]]$.

To find $U_{1,1}(v;\bm u)$, we begin by acting on $\create{m}{\bm u}$ with $\ayy{v}$. From \eqref{gln:ab}, we have
\[
\ayy{v} \, \create{m}{\bm u} = \bigg( \frac{v-u_1+1}{v-u_1} \, B_{a_1}(u_1) \, \ayy{v} - \frac1{v-u_1} \, B_{a_1}(v) \, \ayy{u_1} \bigg)  B_{a_2}(u_2) \cdots B_{a_m}(u_m).
\]
Now, moving $\ayy{v}$ through the remaining creation operators, we note that the only contribution to $U_{1,1}(v;\bm u)$ will be from the second term in the above expression, in the instance when there are no further parameter swaps in the remaining commutations. Therefore,
\[
U_{1,1}(v;\bm u) = - \frac1{v-u_1} \prod_{j=2}^m \frac{v-u_j+1}{v-u_j}\, B_{a_1}(v) \, B_{a_2}(u_2) \cdots B_{a_m}(u_m)\, \ayy{u_1}.
\]
We can find $U_{2,1}(v;\bm u)$ in a similar way. Acting with $\tr_a \Dee{a}{v}$ on $\create{m}{\bm u}$,
\aln{ 
\tr_a \Dee{a}{v} \, \create{m}{\bm u} &= \bigg( B_{a_1}(u_1) \, \tr_a \Dee{a}{v} \, R'_{aa_1}(v-u_1) \\ & \qq\qu - \frac1{v-u_1} \, B_{a_1}(v) \tr_a \Dee{a}{u_1}\, P'_{aa_1} \bigg) B_{a_2}(u_2) \cdots B_{a_m}(u_m) .
}
As above, we move the $D_a(u_1)$ operator through the remaining creation operators, and the only contribution to $U_{2,1}(v;\bm u)$ is
\aln{
U_{2,1}(v;\bm u) &= - \frac1{v-u_1}\, B_{a_1}(v) \, B_{a_2}(u_2) \cdots B_{a_m}(u_m) \, \tr_a \Big( \Dee{a}{u_1} \,R'_{aa_m}(u_1-u_m) \cdots R'_{aa_2}(v-u_2) \, P'_{a a_1} \Big).
}
Note that the operators to the right of $D_a(u_1)$ act trivially on $L$, so this is indeed in the correct form. It will be useful to rewrite this in terms of a residue as follows
\aln{
U_{2,1}(v;\bm u) &= \frac1{v-u_1} B_{a_1}(v) \, B_{a_2}(u_2) \cdots B_{a_m}(u_m) \\ & \qu \times\Res{w \rightarrow u_1} \! \tr_a \Big( \Dee{a}{w} R'_{aa_m}(w-u_m) \cdots R'_{aa_2}(w-u_2) \,  R'_{aa_1}(w-u_1) \Big) \\
&= \frac1{v-u_1} B_{a_1}(v) \, B_{a_2}(u_2) \cdots B_{a_m}(u_m) \Res{w \rightarrow u_1} \!\tr_a T_a'(w;\bm u).
}

Proceeding this way we can find the remaining unwanted terms. Consider the relation \eqref{gln:Br'}. We may use this relation to transpose the parameters $\bm u$ and, since the transpositions generate the symmetric group $\mf S_m$, we can apply an arbitrary permutation to the parameters prior to acting with $\ayy{v}$. Indeed for $\sigma \in \mf S_m$, let $\bm u_{\sigma}$ denote the ordered set $(u_{\sigma(1)},u_{\sigma(2)}, \dots, u_{\sigma(m)})$. Additionally, let $\sigma_j$ denote the cyclic permutation $\sigma_j: (1,2, \dots, m) \mapsto (j, j+1, \dots, j-1)$. We have
\[
\create{m}{\bm u} = \create{m}{\bm u_{\sigma_j}} \check{R}'_{a_1 \dots a_m}[\sigma_j](\bm u),
\]
where $\check{R}'_{a_1 \dots a_m}[\sigma_j](\bm u)$ is the product of $\check{R}'$-matrices required to realise this permutation. Acting now with $\ayy{v}$ on the r.h.s.~and following the argument above, we obtain an exact expression for $U_{1,j}(v;\bm u)$, namely
\[
U_{1,j}(v; \bm u) = -\frac1{v-u_j} \prod_{k \neq j} \frac{u_j-u_k+1}{u_j-u_k} \,\create{m}{\bm u_{\si_j,u_j\rightarrow v}}\, \ayy{u_j} \, \check{R}'_{a_1 \dots a_m}[\sigma_j](\bm u) ,
\]
where $\create{m}{\bm u_{\si_j,u_j\rightarrow v}} = B_{a_1}(v) \, B_{a_2}(u_{j+1}) \cdots B_{a_m}(u_{j-1})$.
The expression for $U_{2,j}(v;\bm u)$ obtained by the same method. Indeed,
\[
U_{2,j}(v; \bm u) = -\frac1{v-u_j} \,\create{m}{\bm u_{\si_j,u_j\rightarrow v}} \, \Res{w \rightarrow u_1} t'(w;\bm u_{\sigma_j}) \, \check{R}'_{a_1 \dots a_m}[\sigma_j](\bm u).
\]

Having found these expressions, the next step is to act with them on the nested Bethe vector to find the full expression for the action of the transfer matrix on the Bethe vector. However, we will first make an assumption about the form of $\Phi'(\bm u'; \bm u)$, namely
\[
\check{R}'_{a_i a_{i+1}}(u_i - u_{i+1}) \,\Phi'(\bm u';\bm u) = \Phi'(\bm u';\bm u_{i \leftrightarrow i+1}) \qu \text{for} \qu 1 \leq i \leq m-1.
\]
This may be achieved if the nested Bethe vector is of the form
\[ 
 \Phi'(\bm u';\bm u) \in {\rm span}_\C \big\{  t'_{i_1 j_1}(w_1; \bm u) \cdots t'_{i_K j_K}(w_K; \bm u) \cdot \eta' \, : \, K\geq 0, \; 1 \leq i_1, j_1, \dots, i_K, j_K \leq n-1, \, \bm w \in \C^K \big\},
\]
where $\eta'$ is a lowest weight vector of $L'$. Indeed, for any such vector we may use Lemma~\ref{L:gln:R'T} to move $\check{R}_{a_i a_{i+1}}(u_i - u_{i+1})$ through the $t'_{i_k j_k}(w_k;\bm u)$, exchanging $u_i$ with $u_{i+1}$. Then, by definition and \eqref{gln:checkR'}, $\eta'$ is an eigenvector of $\check{R}_{a_i a_{i+1}}(u_i - u_{i+1})$ with eigenvalue 1. The nested Bethe vector constructed using the nested algebraic Bethe ansatz is exactly of this form. 

Note that this, combined with the relation $\eqref{gln:Br'}$, gives the parameter symmetry of the full Bethe vector $\Phi(\bm u) = \Phi(\bm u_{\sigma})$ for all $\sigma \in \mf S_m$.
Applying the expressions for the unwanted terms to the nested Bethe vector, noting \eqref{gln:nGamma} and Lemma~\ref{L:gln:a.L'},
\aln{
U_{1,j}(v;\bm u)\cdot \Phi'(\bm u';\bm u) & = -\frac1{v-u_j} \,\lambda_1(u_j)  \prod_{k \neq j} \frac{u_j-u_k+1}{u_j-u_k} \,\, \create{m}{\bm u_{\si_j,u_j\rightarrow v}} \cdot \Phi'(\bm u';\bm u_{\sigma_j}), \\
U_{2,j}(v;\bm u)\cdot \Phi'(\bm u';\bm u) & = -\frac1{v-u_j}\Res{w \rightarrow u_j} \Gamma'(w;\bm u_{\sigma_j}) \,\create{m}{\bm u_{\si_j,u_j\rightarrow v}} \cdot \Phi'(\bm u';\bm u_{\sigma_j}) .
}
In fact, by acting with $\check{R}'$-matrices on $t'(v;\bm u)\cdot \Phi'(\bm u';\bm u) = \Gamma'(v;\bm u)\, \Phi'(\bm u';\bm u)$, we obtain $\Gamma'(v;\bm u ) = \Gamma'(v;\bm u_{\sigma} )$. Putting everything together, we have
\aln{ 
U_1(v;\bm u) + U_2(v;\bm u) & = - \sum_{j=1}^m \frac1{v-u_j} \Bigg[ \lambda_1(u_j) \prod_{k \neq j} \frac{u_j-u_k+1}{u_j-u_k} + \Res{w \rightarrow u_j} \Gamma'(w;\bm u) \Bigg] \\ & \hspace{6cm}\times \create{m}{\bm u_{\si_j,u_j\rightarrow v}} \cdot \Phi'(\bm u';\bm u_{\sigma_j}) \\
& = - \sum_{j=1}^m  \frac1{v-u_j} \Res{w \rightarrow u_j} \Gamma(w;\bm u) \, \create{m}{\bm u_{\si_j,u_j\rightarrow v}} \cdot \Phi'(\bm u';\bm u_{\sigma_j}).
}
From \eqref{gln:Gamma}, $\Phi(\bm u)$ is an eigenvector of the transfer matrix $t(v)$ if the parameters $\bm u$ are chosen such that the above expression vanishes. Since each summand is independent, we require
\equ{ \label{gln:betheq}
\Res{w \rightarrow u_j} \Gamma(w;\bm u) = 0 \qu \text{for} \qu 1 \leq j \leq m.
}
These are the \emph{Bethe equations} for $\bm u$.


\subsection{End of recursion}

Upon reducing to the residual $Y(\mfgl_2)$-system, we have the familiar $2 \times 2$  monodromy matrix
\[
T_a^{(n-2)}(v)
=
\left(
\begin{array}{cc}
\mathscr{a}^{(n-2)}(v) & \mathscr{b}^{(n-2)}(v) \\
\mathscr{c}^{(n-2)}(v) & \mathscr{d}^{(n-2)}(v)
\end{array}
\right).
\]
Dependence on parameters $\setu, \setu', \dots, \setu^{(n-3)}$ has been suppressed. The RTT relation yields the relations
\aln{
\mathscr{a}^{(n-2)}(v) \, \mathscr{b}^{(n-2)}(u) &= \frac{v-u+1}{v-u} \, \mathscr{b}^{(n-2)}(u) \, \mathscr{a}^{(n-2)}(v) - \frac1{v-u} \, \mathscr{b}^{(n-2)}(v) \, \mathscr{a}^{(n-2)}(u) , \\
\mathscr{d}^{(n-2)}(v) \, \mathscr{b}^{(n-2)}(u) &= \frac{v-u-1}{v-u} \, \mathscr{b}^{(n-2)}(u) \, \mathscr{d}^{(n-2)}(v) + \frac1{v-u} \, \mathscr{b}^{(n-2)}(v) \, \mathscr{d}^{(n-2)}(u) , \\
[\mathscr{b}^{(n-2)}(v),\mathscr{b}^{(n-2)}(u)]& = 0 .
}
The Bethe vector with $m^{(n-2)}$ excitations is
\[
\Phi^{(n-2)}(\setu) = \mathscr{b}^{(n-2)}(u_1^{(n-2)}) \cdots \mathscr{b}^{(n-2)}(u_{m^{(n-2)}}^{(n-2)}) \cdot \eta^{(n-2)},
\]
where $\eta^{(n-2)}$ is a lowest vector of the nested vacuum sector $L^{(n-2)}$.
The associated eigenvalue of the transfer matrix $t^{(n-2)}(v)$ is
\aln{
\Gamma^{(n-2)}(v;\setu, \dots , \setu^{(n-2)}) &= \lambda_1^{(n-2)}(v;\setu, \dots , \setu^{(n-3)}) \prod_{i=1}^{m^{(n-2)}} \frac{v-u^{(n-2)}+1}{v-u^{(n-2)}} \\
& \qu + \lambda_2^{(n-2)}(v;\setu, \dots , \setu^{(n-3)}) \prod_{i=1}^{m^{(n-2)}} \frac{v-u^{(n-2)}-1}{v-u^{(n-2)}_i},
}
provided the $\setu^{(n-2)}$ satisfy the Bethe equations
\[
\Res{w \rightarrow u^{(n-2)}_j} \Gamma^{(n-2)}(w;\setu, \dots \setu^{(n-2)}) = 0 \qu \text{for} \qu 1 \leq j \leq m^{(n-2)} .
\]


\subsection{Full expressions for eigenvalues and Bethe equations}

In this section, we unpack the recursion steps to give the explicit expressions for the eigenvalues of the transfer matrix in terms of the parameters of the $Y(\mfgl_n)$-system.  
In order to match the notation used in the Bethe ansatz for the $Y^{\pm}(\mfgl_{2n})$ chain, we begin by relabelling the spectral parameters as follows. For the initial step, relabel parameters $u_i \rightarrow u^{(1)}_i$ and excitation number $m \rightarrow m^{(1)}$, and for subsequent levels of nesting $u^{(k)}_i \rightarrow u^{(k+1)}_i$ and $m^{(k)} \rightarrow m^{(k+1)}$. 
We use Proposition \ref{P:gln:T'.L'} to rewrite the weights $\lambda_1^{(k)}(v;\bm u,\dots, \bm u^{(k-1)})$ of the nested system in terms of the weights of the initial $Y(\mfgl_n)$-system,
\gan{
\lambda_1^{(k)}(v;\setu^{(1)}, \dots, \setu^{(k)}) =\lambda_{k+1}(v) \prod_{i=1}^{m^{(k)}} \frac{v-u_j^{(k)}-1}{v-u_j^{(k)}} \qu \text{for} \qu 1 \leq k \leq n-1 \qu\text{and}
\\
\lambda_l^{(k)}(v;\setu^{(1)}, \dots, \setu^{(k)}) = \lambda_{k+l}(v) \qu\text{for} \qu  l>1,  \qu  1 \leq k \leq n-l.
}
From the recursion relation in \eqref{gln:Gamma}, a general expression can be found for $\Gamma^{(k)}(v; \setu^{(1)}, \dots, \setu^{(n-1)})$, for $1 \leq k \leq n-2$:
\aln{
& \hspace{-.5cm} \Gamma^{(k)}(v; \setu^{(1)}, \dots, \setu^{(n-1)}) \el & \qu = \lambda^{(n-2)}_2(v; \setu^{(1)}, \dots, \setu^{(n-2)}) 
\prod_{i=1}^{m^{(n-1)}} \frac{v-u_i^{(n-1)}\,{-}\,1}{v-u_i^{(n-1)}} + \sum_{l=k}^{n-2} \lambda^{(l)}_1(v;\setu^{(1)}, \dots, \setu^{(l)}) \prod_{i=1}^{m^{(l+1)}} \frac{v-u^{(l+1)}\,{+}\,1}{v-u^{(l+1)}} \!\! \el
& \qu = \lambda_n(v) \prod_{i=1}^{m^{(n-1)}} \frac{v-u_i^{(n-1)}-1}{v-u_i^{(n-1)}} + \sum_{l=k}^{n-2} \lambda_{l+1}(v) \prod_{i=1}^{m^{(l)}} \frac{v-u_i^{(l)}-1}{v-u_i^{(l)}} \prod_{i=1}^{m^{(l+1)}} \frac{v-u_i^{(l+1)}+1}{v-u_i^{(l+1)}}. 
}
We have thus shown the following.

\begin{thrm}
The eigenvalues of the Bethe vectors for a $Y(\mfgl_n)$-system are given by
\ali{ \label{gln:full_eig} 
\Gamma(v) &= \lambda_1(v)\, \prod_{i=1}^{m^{(1)}} \frac{v-u^{(1)}_i+1}{v-u^{(1)}_i} + \lambda_n(v) \prod_{i=1}^{m^{(n-1)}} \frac{v-u_i^{(n-1)}-1}{v-u_i^{(n-1)}} \el & \qu + \sum_{l=1}^{n-2} \lambda_{l+1}(v) \prod_{i=1}^{m^{(l)}} \frac{v-u_i^{(l)}-1}{v-u_i^{(l)}} \cdot \prod_{i=1}^{m^{(l+1)}} \frac{v-u_i^{(l+1)}+1}{v-u_i^{(l+1)}} .
}
\end{thrm}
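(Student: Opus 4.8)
The plan is to unpack the nesting recursion completely, proving the claimed closed form for $\Gamma(v)$ by a downward induction on the nesting level. The engine of the argument is the recursion relation \eqref{gln:Gamma}, which at each stage writes the eigenvalue of the current transfer matrix as the contribution of the diagonal operator $\ayy{v}$ (the wanted term) plus the eigenvalue $\Gamma'(v;\bm u)$ of the transfer matrix of the residual $Y(\mfgl_{n-1})$-system. Combined with the explicit nested lowest weights of Proposition~\ref{P:gln:T'.L'}, and the identities $\lambda_1^{(k)}(v)=\lambda_{k+1}(v)\prod_i(v-u_i^{(k)}-1)/(v-u_i^{(k)})$ and $\lambda_l^{(k)}(v)=\lambda_{k+l}(v)$ for $l>1$, this reduces the whole computation to iterated substitution and reindexing, with no further representation-theoretic input required.

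First I would establish the base case. Reducing to the residual $Y(\mfgl_2)$-system as in the \emph{End of recursion} paragraph gives $\Gamma^{(n-2)}$ as a sum of two terms governed by $\lambda_1^{(n-2)}$ and $\lambda_2^{(n-2)}$. Substituting the weight identities above rewrites these nested weights in terms of the original weights $\lambda_j(v)$ of the full $Y(\mfgl_n)$-system: the $\lambda_1^{(n-2)}$ term becomes $\lambda_{n-1}(v)\prod_i(v-u_i^{(n-2)}-1)/(v-u_i^{(n-2)})\prod_i(v-u_i^{(n-1)}+1)/(v-u_i^{(n-1)})$ and the $\lambda_2^{(n-2)}$ term becomes $\lambda_n(v)\prod_i(v-u_i^{(n-1)}-1)/(v-u_i^{(n-1)})$. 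This is exactly the $k=n-2$ instance of the displayed formula for $\Gamma^{(k)}$, in which the sum $\sum_{l=k}^{n-2}$ collapses to its single $l=n-2$ summand.

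Next I would carry out the inductive step. Assuming the formula for $\Gamma^{(k+1)}$, apply \eqref{gln:Gamma} at level $k$ to write $\Gamma^{(k)}(v)=\lambda_1^{(k)}(v)\prod_i(v-u_i^{(k+1)}+1)/(v-u_i^{(k+1)})+\Gamma^{(k+1)}(v)$, then substitute $\lambda_1^{(k)}(v)=\lambda_{k+1}(v)\prod_i(v-u_i^{(k)}-1)/(v-u_i^{(k)})$. The wanted term then coincides precisely with the $l=k$ summand, so the sum extends from $l=k+1,\dots,n-2$ to $l=k,\dots,n-2$ while the $\lambda_n$ term is carried along unchanged; this closes the induction for $1\le k\le n-2$. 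Finally, applying \eqref{gln:Gamma} once more at the top (un-nested) level, $\Gamma(v)=\lambda_1(v)\prod_i(v-u_i^{(1)}+1)/(v-u_i^{(1)})+\Gamma^{(1)}(v)$, and inserting the $k=1$ instance of the $\Gamma^{(k)}$ formula produces the stated expression.

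The individual steps are routine; the one genuine point requiring care is the bookkeeping of the index shifts introduced by the relabelling in the \emph{Full expressions} subsection (the parameters of the $k$-th internal level becoming $\bm u^{(k+1)}$, together with $m^{(k)}\mapsto m^{(k+1)}$). One must track this relabelling consistently and verify that the weight identities of Proposition~\ref{P:gln:T'.L'} align each nested weight $\lambda_l^{(k)}$ with the correct original weight $\lambda_{k+l}$, so that the $l=k$ term generated at each step slots cleanly into the accumulating sum. Getting this alignment exactly right—in particular that the wanted term at level $k$ pairs the $\bm u^{(k)}$-factor from $\lambda_1^{(k)}$ with the $\bm u^{(k+1)}$-factor from the excitation—is the only place where an error could plausibly creep in.
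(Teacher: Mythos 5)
Your proposal is correct and follows essentially the same route as the paper: the paper also unpacks the recursion \eqref{gln:Gamma} level by level, using the weight identities $\lambda_1^{(k)}(v)=\lambda_{k+1}(v)\prod_i\frac{v-u_i^{(k)}-1}{v-u_i^{(k)}}$ and $\lambda_l^{(k)}(v)=\lambda_{k+l}(v)$ from Proposition~\ref{P:gln:T'.L'} to arrive at the closed form for $\Gamma^{(k)}$ and then sets $k=1$ after one final application of \eqref{gln:Gamma} at the top level. Your only addition is to make the downward induction (base case $k=n-2$ from the $Y(\mfgl_2)$ end of recursion, inductive step absorbing the wanted term as the $l=k$ summand) fully explicit, which the paper leaves implicit.
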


Recall also the Bethe equations \eqref{gln:betheq} satisfied by parameters $u_j^{(k)}$. In fact, comparing the above two expressions, we note that equivalent Bethe equations can be obtained by demanding instead that the residue of the full eigenvalue $\Gamma(v)$ vanishes at each $u_j^{(k)}$ for $1\leq k \leq n-1$, $1 \leq j \leq m^{(k)}$. This is exactly the condition that the eigenvalue of the transfer matrix is analytic. We may now evaluate the residue to obtain the Bethe equations in terms of $\lambda_k(v)$ with $1 \leq k \leq n$ leading to the following statement.

\begin{thrm}
The Bethe equations for a $Y(\mfgl_n)$-system are
\eqa{ \label{gln:BE}
\frac{\lambda_{k}(u_j^{(k)}) }{\lambda_{k+1}(u_j^{(k)}) }
&=
\prod_{i=1}^{m^{(k-1)}} \frac{u_j^{(k)}-u_i^{(k-1)}}{u_j^{(k)}-u_i^{(k-1)}-1} \cdot \prod_{i \neq j} \frac{u_j^{(k)}-u_i^{(k)}-1}{u_j^{(k)}-u_i^{(k)}+1}
\cdot \prod_{i=1}^{m^{(k+1)}} \frac{u_j^{(k)}-u_i^{(k+1)}+1}{u_j^{(k)}-u_i^{(k+1)}} ,
\\
\frac{\lambda_{1}(u_j^{(1)}) }{\lambda_{2}(u_j^{(1)}) }
&=
\prod_{i \neq j} \frac{u_j^{(1)}-u_i^{(1)}-1}{u_j^{(1)}-u_i^{(1)}+1}
\cdot \prod_{i=1}^{m^{(2)}} \frac{u_j^{(1)}-u_i^{(2)}+1}{u_j^{(1)}-u_i^{(2)}} ,
\\
\frac{\lambda_{n-1}(u_j^{(n-1)}) }{\lambda_{n}(u_j^{(n-1)}) }
&=
\prod_{i=1}^{m^{(n-2)}} \frac{u_j^{(n-1)}-u_i^{(n-2)}}{u_j^{(n-1)}-u_i^{(n-2)}-1} 
\cdot \prod_{i \neq j} \frac{u_j^{(n-1)}-u_i^{(n-1)}-1}{u_j^{(n-1)}-u_i^{(n-1)}+1} ,
}
for $1\le k \le n-1$ and $1\le j \le m^{(k)}$.
\end{thrm}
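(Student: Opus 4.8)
The plan is to sidestep the recursive form \eqref{gln:betheq} and read the Bethe equations off from a single analyticity requirement on the full eigenvalue \eqref{gln:full_eig}. The first step is to record the equivalence noted just above the statement: the recursive conditions $\Res{w \rightarrow u_j^{(k)}}\Gamma^{(k)}(w)=0$, imposed at every level of nesting, are equivalent to demanding that the full eigenvalue $\Gamma(v)$ be regular at each $v=u_j^{(k)}$, that is
\[
\Res{v \rightarrow u_j^{(k)}} \Gamma(v) = 0 \qu\text{for}\qu 1\le k \le n-1,\; 1\le j\le m^{(k)} .
\]
This follows by substituting the relations $\lambda^{(k)}_1(v)=\lambda_{k+1}(v)\prod_i\frac{v-u^{(k)}_i-1}{v-u^{(k)}_i}$ and $\lambda^{(k)}_l(v)=\lambda_{k+l}(v)$ for $l>1$ into the recursive eigenvalue formula and matching the pole at $u_j^{(k)}$ against the corresponding summands of \eqref{gln:full_eig}.

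The second step is to fix $k$ and $j$ and isolate the summands of \eqref{gln:full_eig} that are singular at $v=u_j^{(k)}$. The parameter $u_j^{(k)}$ enters the denominators only through products of the form $\prod_i (v-u_i^{(k)})$, and these occur in precisely two summands: the term carrying $\lambda_k(v)$ (the $l=k-1$ summand, with factor $\frac{v-u_j^{(k)}+1}{v-u_j^{(k)}}$) and the term carrying $\lambda_{k+1}(v)$ (the $l=k$ summand, with factor $\frac{v-u_j^{(k)}-1}{v-u_j^{(k)}}$). Every remaining summand is regular there and contributes nothing.

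The third step is the residue computation, which is routine. Evaluating the two singular factors at the pole gives the numerical contributions $+1$ and $-1$ respectively; collecting the surviving regular factors, the vanishing of $\Res{v \rightarrow u_j^{(k)}}\Gamma(v)$ reads
\[
\begin{aligned}
& \lambda_k(u_j^{(k)})\prod_{i=1}^{m^{(k-1)}}\frac{u_j^{(k)}-u_i^{(k-1)}-1}{u_j^{(k)}-u_i^{(k-1)}}\prod_{i\neq j}\frac{u_j^{(k)}-u_i^{(k)}+1}{u_j^{(k)}-u_i^{(k)}} \\
& \qq = \lambda_{k+1}(u_j^{(k)})\prod_{i\neq j}\frac{u_j^{(k)}-u_i^{(k)}-1}{u_j^{(k)}-u_i^{(k)}}\prod_{i=1}^{m^{(k+1)}}\frac{u_j^{(k)}-u_i^{(k+1)}+1}{u_j^{(k)}-u_i^{(k+1)}} .
\end{aligned}
\]
Dividing through by $\lambda_{k+1}(u_j^{(k)})$ and by $\prod_{i\neq j}\frac{u_j^{(k)}-u_i^{(k)}+1}{u_j^{(k)}-u_i^{(k)}}$, the common $u_i^{(k)}$ denominators cancel and one obtains the first family in \eqref{gln:BE}. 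The boundary levels $k=1$ and $k=n-1$ are handled identically; the only difference is that the $\lambda_1(v)$-term carries no $\prod_i(v-u_i^{(0)})$ factor and the $\lambda_n(v)$-term carries no $\prod_i(v-u_i^{(n)})$ factor, which accounts exactly for the absence of the $\prod_{i=1}^{m^{(k-1)}}$ product in the $k=1$ equation and of the $\prod_{i=1}^{m^{(k+1)}}$ product in the $k=n-1$ equation.

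The only conceptual point, and hence the main (if modest) obstacle, is the equivalence asserted in the first step: that the global analyticity of $\Gamma(v)$ reproduces precisely the recursively generated equations \eqref{gln:betheq}, level by level. Once that is in place everything reduces to evaluating simple poles, so I expect the verification of this equivalence — tracking how the pole at $u_j^{(k)}$ in the full eigenvalue corresponds to the level-$k$ nested condition — to be where the care is needed, the residue algebra itself being mechanical.
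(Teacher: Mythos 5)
Your proposal is correct and follows essentially the same route as the paper: the paper likewise observes that the recursive conditions \eqref{gln:betheq} are equivalent to requiring the residue of the full eigenvalue \eqref{gln:full_eig} to vanish at each $u_j^{(k)}$, and then evaluates that residue (only the $\lambda_k$ and $\lambda_{k+1}$ summands contribute, with factors $+1$ and $-1$) to obtain \eqref{gln:BE}. Your residue computation and the treatment of the boundary cases $k=1$ and $k=n-1$ match the paper's derivation.
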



\section*{Acknowledgements}  \enlargethispage{1.5em}

The authors thank Samuel Belliard, Nicolas Cramp\'e, Nicolas Guay, Bart Vlaar and Curtis Wendlandt for useful discussions and the anonymous referee for comments and suggestions. V.R.~was in part supported by the UK EPSRC under the grant EP/K031805/1 and by the European Social Fund, grant number 09.3.3-LMT-K-712-02-0017.


\bigskip

\end{document}